\newtheorem{theorem}{Theorem}[section]
\newtheorem{proposition}[theorem]{Proposition}
\newtheorem{lemma}[theorem]{Lemma}
\newcommand{\rd}{{\rm d}}
\newcommand{\be}{\begin{equation}}
\newcommand{\ee}{\end{equation}}
\newcommand{\bey}{\begin{eqnarray}}
\newcommand{\eey}{\end{eqnarray}}
\newcommand{\PP}{{\mathbb P }}
\newcommand{\E}{{\mathbb E }}
\newcommand{\eps}{\varepsilon}
\newcommand{\ph}{\varphi}
\newcommand{\cU}{{\cal U}}
\newcommand{\bR}{{\mathbb R}}
\newcommand{\bC}{{\mathbb C}}
\newcommand{\bN}{{\mathbb N}}
\newcommand{\tr}{\mbox{Tr}}
\newcommand{\wt}{\widetilde}
\newcommand{\cF}{{\cal F}}
\newcommand{\cA}{{\cal A}}
\newcommand{\cD}{{\cal D}}
\newcommand{\cW}{{\cal W}}
\newcommand{\cK}{{\cal K}}
\newcommand{\cH}{{\cal H}}
\newcommand{\cL}{{\cal L}}
\newcommand{\cN}{{\cal N}}
\newcommand{\cO}{{\cal O}}
\newcommand{\donothing}[1]{}
\begin{document}

\title{A Central Limit Theorem in Many-Body Quantum Dynamics}

\author{G{\'e}rard Ben Arous, Kay Kirkpatrick\thanks{Partially supported by NSF grants OISE-0730136 and DMS-1106770 } and Benjamin Schlein\thanks{Partially supported by an ERC Starting Grant} \\ \\
Courant Institute of Mathematics, New York University \\ 251 Mercer Street, New York, NY 10012, USA \\ \\
Department of Mathematics, University of Illinois at Urbana-Champaign\\
1409 W. Green Street, Urbana, IL 61801, USA \\ \\
Institute of Applied Mathematics,\\ Endenicher Allee 60, 53115 Bonn, Germany}

\maketitle

\begin{abstract}
We study the many body quantum evolution of bosonic systems in the mean field limit. The dynamics is known to be well approximated by the Hartree equation. So far, the available results have the form of a law of large numbers. In this paper we go one step further and we show that the fluctuations around the Hartree evolution satisfy a central limit theorem. Interestingly, the variance of the limiting Gaussian distribution is determined by a time-dependent Bogoliubov transformation describing the dynamics of initial coherent states in a Fock space representation of the system.
\end{abstract}

\section{Introduction}
\label{sec:intro}
\setcounter{equation}{0}

A quantum mechanical system of $N$ bosons in three dimensions is described by a wave-function $\psi_N \in L^2_s (\bR^{3N})$, the subspace of $L^2 (\bR^{3N})$ consisting of functions which are symmetric w.r.t. permutations of the $N$ particles. Every physical observable of the system is associated with a self-adjoint operator $A$ on the Hilbert space $L^2 (\bR^{3N})$. The expectation of $A$ in the state described by the wave function $\psi_N$ is given by the $L^2$-inner product $\langle \psi_N , A \psi_N \rangle$. For example, the position of the $j$-th particle is associated with the multiplication operator $A = x_j$. The momentum of the $j$-th particle, on the other hand, is associated with the differential operator $A = - i\nabla_{x_j}$.

The time evolution of the quantum system is governed by the $N$-particle Schr\"odinger equation
\[ i\partial_t \psi_{N,t} = H_N \psi_{N,t} \]
where $\psi_{N,t} \in L_s^2 (\bR^{3N})$ indicates the wave-function of the system at time $t$. On the r.h.s., $H_N$ is a self-adjoint operator acting on $L^2 (\bR^{3N})$, known as the Hamilton operator, or the Hamiltonian, of the system. It typically has the form
\[ H = \sum_{j=1}^N -\Delta_{x_j} + \lambda \sum_{i<j} V (x_i - x_j) \]
where $\lambda \in \bR$ is a coupling constant, and $V$ is a two-body interaction among the particles. One could also introduce a term describing an external potential (for example, a trapping potential $V_{\text{ext}} (x) \to \infty$ as $|x| \to \infty$); the analysis below applies also in this case (under minor assumptions on $V_{\text{ext}}$, needed essentially to guarantee the self-adjointness of $H$).

We are interested in the mean-field regime, where particles undergo a large number of very weak collisions, so that the sum of the interactions experienced by each particle can effectively be approximated by an average, mean-field, potential. The mean-field regime is realized when $N \gg 1$ (many collisions), $|\lambda| \ll 1$ (weak interactions), so that $N \lambda =: \kappa$ is of order one (so that the total force acting on each particle is of order one). In order to study the mean-field regime, we analyze the dynamics generated by the Hamilton operator
\begin{equation}\label{eq:ham} H_N = \sum_{j=1}^N -\Delta_{x_j} + \frac{\kappa}{N} \sum_{i<j} V (x_i -x_j) \end{equation}
in the limit of large $N$. In particular, we are interested in the evolution of factorized initial data, given by
\[ \psi_N (x_1, \dots , x_N) = \prod_{j=1}^N \ph (x_j) \]
Although factorization is not preserved by the time-evolution, because of its mean-field character, it turns out that, for large $N$, factorization is approximately (and in an appropriate sense) preserved, i.e.
\begin{equation}\label{eq:conv} \psi_{N,t} (x_1, \dots , x_N) \simeq \prod_{j=1}^N \ph_t (x_j) \,. \end{equation}
A simple argument then shows that $\ph_t$ must evolve according to the effective nonlinear one-particle Hartree equation
\begin{equation}\label{eq:hartree} i\partial_t \ph_t = -\Delta \ph_t + \kappa \left( V * |\ph_t|^2 \right) \ph_t \,. \end{equation}
More precisely, it turns out that (\ref{eq:conv}) can be understood as convergence of the reduced density matrices associated with $\psi_{N,t}$. We define the density matrix associated with $\psi_{N,t}$ as the orthogonal projection $\gamma_{N,t} = |\psi_{N,t} \rangle \langle \psi_{N,t} |$ onto $\psi_{N,t}$. Then, for $k=1,\dots , N$, the $k$-particle reduced density matrix $\gamma^{(k)}_{N,t}$ associated with $\psi_{N,t}$ is defined as the partial trace of $\gamma_{N,t}$ over the last $(N-k)$ particles. In other words, $\gamma^{(k)}_{N,t}$ is defined as a non-negative, trace class operator on $L^2 (\bR^{3k})$ with integral kernel given by
\[ \begin{split} \gamma^{(k)}_{N,t} (x_1, &\dots , x_k ; x'_1, \dots , x'_k) \\ = \; & \int dx_{k+1} \dots dx_N \, \gamma_{N,t} (x_1, \dots , x_k, x_{k+1}, \dots , x_N ; x'_1, \dots, x'_k, x_{k+1}, \dots , x_N) \\ = \; &\int dx_{k+1} \dots dx_N \, \psi_{N,t} (x_1, \dots , x_k, x_{k+1}, \dots , x_N) \overline{\psi}_{N,t} ( x'_1, \dots, x'_k, x_{k+1}, \dots , x_N) \end{split} \]
For a $k$-particle observable $O^{(k)}$ (an observable depending non-trivially only on $k$ particles), we find
\[ \left\langle \psi_{N,t} , \left(O^{(k)} \otimes 1 \right) \psi_{N,t} \right\rangle = \tr\, \left(O^{(k)} \otimes 1 \right) \, \gamma_{N,t} = \tr\, O^{(k)} \gamma^{(k)}_{N,t} \]
It turns out that the language of the reduced density matrices is the correct language to understand the approximation (\ref{eq:conv}).
\begin{theorem}[Theorem 1.1 in \cite{CLS}] \label{thm:CLS}
Suppose that the potential $V$ satisfy the operator inequality $V(x)^2 \leq D (1-\Delta)$, for some constant $D>0$, and let $H_N$ be defined as in (\ref{eq:ham}). Let $\psi_{N} = \ph^{\otimes N}$, $\psi_{N,t} = e^{-iH_N t} \psi_N$. Then, for every $k\in \bN$ there exist constants $C,K >0$ such that
\begin{equation}\label{eq:conv2} \tr \left| \gamma^{(k)}_{N,t} \to |\ph_t\rangle \langle \ph_t|^{\otimes k} \right| \leq C \frac{e^{K |t|}}{N} \end{equation} for all $t \in \bR$ and all $N$ large enough. Here $\ph_t$ is the solution of the Hartree equation (\ref{eq:hartree}), with initial data $\ph_{t=0} = \ph$.
\end{theorem}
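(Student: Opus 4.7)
My plan is to use the Fock space / coherent state method of Rodnianski--Schlein, refined in \cite{CLS}, since this framework is also the natural setting for the Bogoliubov fluctuation analysis that drives the rest of the present paper. First, I embed the problem into the bosonic Fock space $\cF = \bigoplus_{n\ge 0} L^2_s(\bR^{3n})$, with creation and annihilation operators $a^*(f), a(f)$, and lift $H_N$ to the number-preserving second quantization
\[
\cH_N = \int \nabla_x a_x^* \nabla_x a_x \, dx + \frac{\kappa}{2N}\int V(x-y)\, a_x^* a_y^* a_y a_x \, dx\, dy,
\]
which acts as $H_N$ on the $N$-particle sector, so $e^{-i\cH_N t}\varphi^{\otimes N}=\psi_{N,t}$. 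A standard identity expresses the integral kernel of $\gamma_{N,t}^{(k)}$ as a $k$-fold expectation of products of $a_x, a_y^*$ in $\psi_{N,t}$.

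Next, I introduce the Weyl operator $W(f) = \exp(a^*(f)-a(f))$ and work with the \emph{fluctuation dynamics}
\[
\cU_N(t;s) = W^*(\sqrt{N}\varphi_t)\, e^{-i\cH_N (t-s)}\, W(\sqrt{N}\varphi_s).
\]
Using the shift identity $W^*(\sqrt{N}\varphi)\,a_x\,W(\sqrt{N}\varphi) = a_x + \sqrt{N}\varphi(x)$ together with the Hartree equation for $\varphi_t$, one checks that the generator $\cL(t)$ of $\cU_N$ has vanishing linear part and consists of an $\cO(1)$ quadratic Bogoliubov piece plus cubic and quartic terms weighted by $N^{-1/2}$ and $N^{-1}$ respectively. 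Since $\varphi^{\otimes N}$ is the $N$-particle component of the coherent state $W(\sqrt{N}\varphi)\Omega$ up to explicit Poissonian weights, expanding the trace-norm difference in (\ref{eq:conv2}) reduces its control to expectations of products of $a, a^*$ under $\cU_N(t;0)$ applied to the vacuum.

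The core estimate needed is then a Gr\"onwall-type bound of the form
\[
\left\langle \xi, \cU_N^*(t;0)\,(\cN+1)^k\,\cU_N(t;0)\,\xi\right\rangle \le C_k\, e^{K_k |t|}\, \left\langle \xi, (\cN+1)^{k}\xi\right\rangle,
\]
uniform in $N$, where $\cN$ is the number operator. Evaluated on $\xi = \Omega$, this gives control on the expected number of particles outside the condensate; combined with $a^\#$-expansions of the reduced density kernel, it produces the $O(1/N)$ convergence rate claimed in (\ref{eq:conv2}).

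The hard part is precisely this Gr\"onwall bound. Differentiating $\langle \xi_t, (\cN+1)^k \xi_t\rangle$ with $\xi_t = \cU_N(t;0)\xi$ produces commutators of $\cN^k$ with the cubic and quartic parts of $\cL(t)$; the cubic term changes the particle number by one and is only damped by $N^{-1/2}$, while the quartic term is unbounded on the natural domain of $\cN^k$. The operator inequality $V^2 \le D(1-\Delta)$ enters here in a Kato-type argument to dominate the interaction by the kinetic energy, which allows one to close the differential inequality in terms of $(\cN+1)^k$ alone after symmetrization and Cauchy--Schwarz. Iteration in $k$, together with Gr\"onwall, then yields the claimed exponential-in-time bound with constants independent of $N$, completing the reduction to the Hartree limit.
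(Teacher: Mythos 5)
Your plan reproduces the Rodnianski--Schlein / Chen--Lee--Schlein strategy that the paper also relies on (it cites Theorem~1.1 from \cite{CLS} rather than proving it), and the architecture you describe --- Fock space lift, Weyl conjugation to cancel the linear terms via the Hartree equation, the $N^{-1/2}$-cubic and $N^{-1}$-quartic structure of $\cL(t)$, the Kato-type use of $V^2\leq D(1-\Delta)$, and a Gr\"onwall estimate on powers of $\cN$ --- is the right skeleton. Two points deserve correction, one small and one substantive.

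The small one: the number-operator bound you write, with $(\cN+1)^k$ appearing on \emph{both} sides, is not what the full fluctuation dynamics $\cU_N(t;s)$ supports. Because the cubic term changes particle number and is only damped by $N^{-1/2}$, the commutator argument unavoidably loses powers of $\cN$; the actual estimate from \cite{CLS} (reproduced here as (\ref{eq:prop1-N})) reads
\[
\left\langle \cU_N(t;s)\psi,\ \cN^{j}\, \cU_N(t;s)\psi \right\rangle \le C\, e^{K|t-s|}\left\langle \psi,\ (\cN+1)^{2j+2}\psi\right\rangle .
\]
Only the limiting quadratic evolution $\cU_\infty(t;s)$ (cf.\ (\ref{eq:Ninfty})) preserves the power. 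Applied to $\xi=\Omega$ this distinction is harmless, but you should not assert the sharper bound.

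The substantive one: your reduction stops at ``expand the reduced density kernel, apply the Gr\"onwall bound to the vacuum,'' but this only controls the coherent-state evolution $e^{-i\cH_N t}W(\sqrt{N}\varphi)\Omega$. The theorem is stated for the factorized state $\varphi^{\otimes N}$, and the passage between them is where the $O(1/N)$ rate is actually won. Concretely, $\varphi^{\otimes N}/\|\cdot\| = (a^*(\varphi)^N/\sqrt{N!})\Omega$ is only the $N$-particle \emph{projection} of $W(\sqrt{N}\varphi)\Omega$, up to a normalization $d_N = e^{N/2}\sqrt{N!}\,N^{-N/2}\simeq N^{1/4}$. Undoing the Weyl operator, one is led to study the vector $\xi_N = d_N\, W^*(\sqrt{N}\varphi)\,(a^*(\varphi)^N/\sqrt{N!})\,\Omega$ (the same $\xi_N$ that appears in (\ref{eq:xi-def}) and Proposition~\ref{prop:xi} of the present paper). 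This vector is \emph{not} the vacuum, and a naive Cauchy--Schwarz against it costs a factor $N^{1/2}$, degrading the rate. The cited proof handles this either by expressing $P_N$ as an angular integral $\frac{1}{2\pi}\int_{-\pi}^{\pi}e^{i\theta(\cN-N)}\,d\theta$, so that $\varphi^{\otimes N}$ becomes an average over the family of coherent states $W(\sqrt{N}e^{i\theta}\varphi)\Omega$, or equivalently by establishing and exploiting the uniform bound $\|(\cN+1)^{-1/2}\xi_N\|\le C$ (cf.\ (\ref{eq:Nxi})) together with careful bookkeeping of the phases. Without this ingredient the argument, as written, would only deliver $O(N^{-1/2})$ for factorized data, which is not (\ref{eq:conv2}).
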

Remark that the operator inequality $V^2 \leq D (1-\Delta)$, which means that
\[ \int dx \, V^2 (x) |\ph (x)|^2 \leq D \| \ph \|_{H^1}^2 \]
for every $\ph \in L^2 (\bR^3)$, is satisfied by potentials with a Coulomb singularity $V(x) = \pm 1/|x|$.

Note that (\ref{eq:conv2}) implies that, for any $k$-particle observable $O^{(k)} \otimes 1^{(N-k)}$, we have
\[ \langle \psi_{N,t} , \left(O^{(k)} \otimes 1^{(N-k)} \right) \psi_{N,t} \rangle \to \langle \ph_t^{\otimes k} , O^{(k)} \ph_t^{\otimes k} \rangle \]
as $N \to \infty$, with rate of convergence of order $N^{-1}$. In other words (\ref{eq:conv2}) implies that, to compute the expectation of an arbitrary observable depending non-trivially on a fixed number of particles, in the limit of large $N$, we can replace the full solution of the $N$-particle Schr\"odinger equation by products of the solution of the Hartree equation~(\ref{eq:hartree}).

The first result in the direction of Theorem \ref{thm:CLS} was obtained in \cite{Sp} for bounded interaction potentials, studying the evolution of the reduced densities, as described by the so called BBGKY hierarchy (this approach does not give an estimate on the rate of convergence). The approach of \cite{Sp} was then extended to potential with singularities in \cite{EY,BGM,ES,ESY1,KSS,CP2}. More recently, it was also extended in \cite{ESY1,ESY2,ESY3} to the ultradilute limit of quantum mechanics, where particles interact through strong potentials with scattering length of the order $N^{-1}$; in this case, the many body dynamics is asymptotically approximated by the Gross-Pitaevskii equation (mathematical questions related with the Gross-Pitaevskii limit were also discussed in \cite{KM,CP1}). A different derivation of the Gross-Pitaevskii equation has been proposed in \cite{P}.

Another approach to the mean-field limit of quantum mechanics, based on ideas from \cite{He,GV} has been obtained in \cite{RS} and later extended in \cite{MS,CLS}. This approach is based on a representation of the system on the bosonic Fock-space and on the use of initial coherent states and it has the advantage of giving precise bounds on the rate of convergence towards the Hartree dynamics. Theorem \ref{thm:CLS}, as stated above, has been proven in \cite{CLS} making use of these techniques. The approach of \cite{RS} is also the basis for the analysis in the present paper; we will therefore describe it in details starting in Section~\ref{sec:fock}.
Alternative approaches and questions related with the mean-field limit of quantum dynamics have been discussed in \cite{ErS,KP,GMM,GMM2,C}.

Because of the bosonic symmetry, physically measurable observables on $L^2_s (\bR^{3N})$ are invariant w.r.t. permutations of the $N$ particles. A physically measurable one particle observable has the form $\cO = \sum_{j=1}^N O^{(j)}$, where $O^{(j)} =1 \otimes \dots \otimes 1 \otimes O \otimes 1 \otimes \dots \otimes 1$ acts non-trivially only on the $j$-th particle, for a self-adjoint operator $O$ on $L^2 (\bR^3)$. The kinetic energy $\sum_{j=1}^N -\Delta_{x_j}$ is an example of such an observable. The operator $\sum_{j=1}^N \chi_A (x_j)$, measuring the number of particles in a set $A \subset \bR^3$ is another example. In probabilistic terms, one can think of the observables $O^{(j)}$ as random variables, whose distribution is determined by the wave function $\psi_N \in L_s^2 (\bR^{3N})$ describing the state of the system (the distribution is determined through the spectral decomposition of the observables). If $\psi_N = \ph^{\otimes N}$, the $O^{(j)}$ are independent identically distributed random variables with a common distribution determined by the one-particle wave function $\ph \in L^2 (\bR^3)$ and, as a sum of independent and identically distributed random variables, $\cO = \sum_{j=1}^N O^{(j)}$ satisfies a law of large numbers and a central limit theorem. If we let the system evolve, $\psi_{N,t} = e^{-iH_N t} \ph^{\otimes N}$ does not factorize. Hence, at time $t \not = 0$, the variables $O^{(j)}$ are not independent (they are still identically distributed because of the permutation symmetry of $\psi_{N,t}$). Nevertheless, Theorem \ref{thm:CLS} implies that, asymptotically, $\psi_{N,t}$ can still be approximated in a certain sense by a factorized wave function. As it turns out, the result of Theorem \ref{thm:CLS} easily implies that, for any time $t \in \bR$, $\cO = \sum_{j=1}^N O^{(j)}$ still satisfies the law of large numbers (for a bounded self-adjoint operator $O$ on $L^2 (\bR^3)$). In fact, let $\wt{O} = O - \E_{\ph_t} O = O - \langle \ph_t, O \ph_t \rangle$, and let $\PP_{\psi_{N,t}}$ and $\E_{\psi_{N,t}}$ denote probabilities, respectively, expectations w.r.t. the distribution determined by the wave function $\psi_{N,t}$. Then, by Markov's inequality,
\[ \begin{split}  \PP_{\psi_{N,t}} \left( \left| \frac{1}{N} \sum_{j=1}^N \wt{O}^{(j)} \right| \geq \eps \right) &\leq \frac{1}{\eps^2 N^2} \E_{\psi_{N,t}} \left( \sum_{j=1}^N \wt{O}^{(j)} \right)^2 \\ & \leq \frac{1}{\eps^2} \left\langle \psi_{N,t}, \wt{O}^{(1)} \wt{O}^{(2)} \psi_{N,t} \right\rangle + \frac{1}{N\eps^2} \left\langle \psi_{N,t}, (\wt{O}^{(1)} )^2 \psi_{N,t} \right\rangle \\ & = \frac{1}{\eps^2} \tr \, \gamma^{(2)}_{N,t} (\wt{O} \otimes \wt{O}) + \frac{1}{N\eps^2} \tr \, \gamma^{(1)}_{N,t} \wt{O}^2
\end{split} \]
{F}rom Theorem \ref{thm:CLS}, we conclude that
\[
\lim\sup_{N\to \infty} \PP_{\psi_{N,t}} \left( \left| \frac{1}{N} \sum_{j=1}^N \wt{O}^{(j)} \right| \geq \eps \right) \leq \frac{1}{\eps^2} \tr \, |\ph_t \rangle \langle \ph_t|^2 \, \left( \wt{O} \otimes \wt{O} \right) = 0 \]

In the next sections, we show that, for all times $t\in \bR$, the one-particle observable $\cO$ also satisfies a central limit theorem, in the sense that, as $N \to \infty$
\[ \frac{1}{\sqrt{N}} \sum_{j=1}^N \left(O^{(j)} - \E_{\ph_t} O \right) \]
converges in distribution to a centered Gaussian random variable. The variance of this Gaussian variable is different from the one we would observe replacing the solution $\psi_{N,t}$ of the Schr\"odinger equation by the product $\ph_t^{\otimes N}$ (the correlations among the particles developed by the $N$-particle dynamics are sufficiently weak to still have a central limit theorem, but they are sufficiently strong to affect the variance of the limiting Gaussian fluctuations). As we will see the limiting variance is determined by a so called Bogoliubov transformation which describes the dynamics of fluctuations around the mean field Hartree dynamics in a Fock space representation of the system, which will be defined in the next section.

This is not the first time a central limit theorem is proven in a quantum setting. Quantum central limit theorems stating the convergence of polynomials in non-commuting self-adjoint observables to appropriate Gaussian distributions already appeared in the 70's, in the works \cite{CH,HL}. Since then, these results have been extended and applied to questions in equilibrium quantum statistical mechanics in \cite{GVV}, non-equilibrium quantum statistical mechanics in \cite{JPP}, quantum information theory in \cite{Ha}, and combinatorics in \cite{Ku}. The paper \cite{CE} is closer to the spirit of our manuscript, since it deals with quantum dynamics and with non-independent variables; however, in our case the correlations are generated by the time evolution (the particles are uncorrelated at time $t=0$), while in \cite{CE} they are already present in the initial state
(the Hamiltonian in this case is quadratic and does not generate correlations for factorized 
initial data).

\section{Fock space representation and fluctuation dynamics}
\label{sec:fock}
\setcounter{equation}{0}

To state the central limit theorem, we first have to study the fluctuations of the many-body evolution around the limiting Hartree dynamics. To this end, it is convenient to introduce a Fock-space representation of the system under consideration; this will allow us to study the evolution of so-called coherent states, for which the fluctuations can be written in a nice and compact form.

The bosonic Fock-space over $L^2 (\bR^3)$ is defined as the direct sum
\[ \cF = \bC \oplus \bigoplus_{n \geq 1} L^2_s (\bR^{3n} , dx_1, \dots dx_n) \]
where $L^2_s (\bR^{3n})$ denotes the subspace of $L^2 (\bR^{3n})$ consisting of functions symmetric with respect to permutation of the $n$ particles. Vectors in the Fock-space are sequences $\Psi = \{ \psi^{(n)} \}_{n \geq 0}$, where $\psi^{(n)} \in L^2_s (\bR^{3n})$ is an $n$-particle bosonic wave function. The idea behind the introduction of the Fock-space is that we want to study states with non fixed number of particles. Clearly, $\cF$  has the structure of a Hilbert space with the inner product
\[ \langle \Psi , \Phi \rangle = \overline{\psi^{(0)}} \phi^{(0)} + \sum_{n \geq 1} \langle \psi^{(n)} , \phi^{(n)} \rangle \, . \]
The vector $\Omega = \{ 1, 0, 0, \dots \} \in \cF$ is called the vacuum and describes a system with no particles. An important operator on $\cF$ is the number of particle operator $\cN$, which is defined by
\[ \cN \{ \psi^{(n)} \}_{n\geq 0} = \{ n \psi^{(n)} \}_{n\geq 0} \,.\]
The vacuum $\Omega$ is an eigenvector of $\cN$ with eigenvalue zero. More generally, vectors of the form $\{ 0, \dots , 0, \psi^{(m)}, 0, \dots \}$, having a fixed number of particles, are eigenvectors of $\cN$ (with eigenvalue $m$). The number of particle operator is an example of second quantization of a self-adjoint operator on the one-particle space $L^2 (\bR^3)$. In general, if $O$ denotes a self-adjoint operator on $L^2 (\bR^3)$, we define its second quantization $d\Gamma (O)$ as the self-adjoint operator on $\cF$ defined by
\[ \left(d\Gamma (O) \psi \right)^{(n)} = \sum_{j=1}^n O^{(j)} \psi^{(n)} \]
where $O^{(j)} = 1 \otimes \dots \otimes O \otimes \dots \otimes 1$ acts as $O$ on the $j$-th particle and as the identity on the other $(n-1)$ particles. With this notation $\cN = d\Gamma (1)$.

On $\cF$, we define the Hamilton operator $\cH_N$ by $\cH_N \{ \psi^{(n)} \}_{n \geq 1} = \{ \cH_N^{(n)} \psi^{(n)} \}_{n \geq 1}$, with
\begin{equation}\label{eq:fock-ham1} \cH_N^{(n)} = \sum_{j=1}^n -\Delta_{x_j} + \frac{1}{N} \sum_{i<j}^n V (x_i - x_j) \,.\end{equation}
By definition, $\cH_N$ leaves each $n$-particle sector $\cF_n$ (defined as the eigenspace of $\cN$ associated with the eigenvalue $n$) invariant. Moreover, on the $N$-particle sector, $\cH_N$ agrees with the mean field Hamiltonian $H_N$ defined in (\ref{eq:ham}). In particular, if we consider the Fock-space evolution of an initial vector with exactly $N$ particles, we find
\[ e^{-i\cH_N t} \{ 0, \dots , 0, \psi_N, 0, \dots \} = \{ 0, \dots, 0, e^{-i H_N t} \psi_N , 0, \dots \} \]
exactly as in Section \ref{sec:intro}. The advantage of working in the Fock space is that we have more freedom in the choice of the initial data. We will use this freedom by considering a class of initial data, known as coherent states, with non-fixed number of particles.

It is very useful to introduce, on the Fock space $\cF$, creation and annihilation operators. For $f \in L^2 (\bR^3)$, we define the creation operator $a^* (f)$ and the annihilation operator $a(f)$ by setting
\begin{equation}\label{eq:aa^*} \begin{split} \left( a^* (f) \psi \right)^{(n)}
(x_1, \dots ,x_n) &= \frac{1}{\sqrt{n}} \sum_{j=1}^n f (x_j)
\psi^{(n-1)} (x_1 , \dots, x_{j-1}, x_{j+1},\dots, x_n) \\ \left( a (f) \psi
\right)^{(n)} (x_1, \dots ,x_n) &= \sqrt{n+1} \int \rd x \;
\overline{f (x)} \psi^{(n+1)} (x, x_1 , \dots , x_n) \, . \end{split} \end{equation}
The operators $a^* (f)$ and $a(f)$ are densily defined and closed. It is easy to check that, as the notation suggests, $a^* (f)$ is the adjoint of $a(f)$. Creation and annihilation operators satisfy the canonical commutation relations
\begin{equation}\label{eq:CCR} \left[ a(f) , a^* (g) \right] = \langle f,g \rangle \qquad \text{and } \quad \left[ a (f) , a (g) \right] = \left[ a^* (f) , a^* (g) \right] = 0 \end{equation}
for any $f,g \in L^2 (\bR^3)$ (here $\langle f,g\rangle$ denotes the $L^2$-inner product). Physically, the operator $a^* (f)$ creates a particle with wave function $f$, while $a (f)$ annihilates it. As a consequence, a state with $N$ particles all in the one-particle state $\ph$, can be written as
\[ \{ 0, \dots, 0, \ph^{\otimes N} , 0, \dots \} = \frac{(a^* (\ph))^N}{\sqrt{N!}} \Omega \,. \]
It is also useful to introduce operator-valued distributions $a_x, a^*_x$ defined so that
\begin{equation}\label{eq:aa^*-dis} a(f) = \int dx \overline{f (x)} a_x, \qquad \text{and } \quad a^* (f) = \int dx \, f(x) \, a^*_x \, .\end{equation}
With this notation, $a_x^* a_x$ gives the density of particles close to $x\in \bR^3$. The number of particles operator can formally be written as
\[ \cN = \int dx \, a_x^* a_x \, .\]
More generally, the second quantization of a self-adjoint one-particle operator $O$ on $L^2 (\bR^3)$ with integral kernel $O(x,y)$ can be expressed as
\[ d\Gamma (O) = \int dx dy \, O(x,y) a_x^* a_y \, . \]
Similarly, the Hamilton operator $\cH_N$ can be formally expressed in terms of operator-valued distributions as
\begin{equation}\label{eq:fock-ham2} \cH_N = \int dx \, \nabla_x a^*_x \, \nabla_x a_x  +  \frac{1}{2N} \int dx dy \, V (x-y) a_x^* a_y^* a_y a_x \,. \end{equation}
The fact that every term in the Hamiltonian contains the same number of creation and annihilation operators means that $\cH_N$ commutes with the number of particles or, equivalently, that the number of particles is preserved by the time-evolution enerated by $\cH_N$.

For later use, we observe that the creation and annihilation operators are not bounded; however, they can be bounded by the square root of the number of particles operator, in the sense that
\begin{equation}\label{eq:bd-aa*}
\begin{split}\| a(f) \psi \| &\leq \| f \| \, \| \cN^{1/2} \psi \| \qquad \text{and } \qquad  \| a^* (f) \psi \| \leq \| f \| \, \| (\cN +1)^{1/2} \psi \|
\end{split}\end{equation}
for every $\psi \in \cF$, $f \in L^2 (\bR^3)$ (here $\| f \|$ indicates the $L^2$-norm of $f$).

As mentioned above, we are going to study the evolution of initial coherent states.
For arbitrary $\ph \in L^2 (\bR^3)$, we define the Weyl operator
\begin{equation}\label{eq:weyl} W (\ph) = e^{a^* (\ph) - a(\ph)} \,.\end{equation}
The coherent state with wave function $\ph$ is then defined as $W(\ph) \Omega$. The Weyl operator $W(\ph)$ is a unitary operator; therefore the coherent state $W(\ph) \Omega$ always has norm one. Moreover, since
\[ W (\ph) \Omega = e^{-\| \ph \|^2/2} \sum_{j=0} \frac{a^* (\ph)^j}{j!} \Omega = e^{-\| \ph \|^2/2} \{ 1, \ph, \frac{\ph^{\otimes 2}}{\sqrt{2!}}, \dots, \frac{\ph^{\otimes j}}{\sqrt{j!}}, \dots \} \, , \]
the coherent state $W(\ph) \Omega$ does not have a fixed number of particles. One can nevertheless compute the expectation of the number of particles in the state $W (\ph) \Omega$; it is given by
\[ \langle W (\ph) \Omega, \cN W (\ph) \Omega \rangle = \| \ph \|^2 \, . \]
More precisely, it turns out that the number of particle in the coherent state $W(\ph) \Omega$ is a Poisson random variable with expectation and variance $\| \ph \|^2$. The main reason why coherent states have nice algebraic properties (which will be used later on in the analysis of their evolution) is the fact that they are eigenvectors of all annihilation operators. Indeed
\[ a (f) W (\ph) \Omega = (f, \ph) W(\ph) \Omega \]
for every $f,\ph \in L^2 (\bR^3)$. This is a simple consequence of the fact that Weyl operators generate shifts of creation and annihilation operators, in the sense that
\begin{equation}\label{eq:shift} W^* (\ph) a (f) W(\ph) = a (f) + (f, \ph), \qquad \text{and } \quad W^* (\ph) a^* (f) W(\ph) = a^* (f) + (\ph , f) \, . \end{equation}
Instead of studying the time evolution of initially factorized states, of the form $\psi_N = a^* (\ph)^N \Omega / \sqrt{N!}$, it is convenient to analyze the dynamics of coherent states of the form $W (\sqrt{N} \ph) \Omega$, for $\ph \in L^2 (\bR^3)$ with $\| \ph \| =1$ (at the end, it is possible to express factorized states as linear combinations of coherent states, and therefore to establish properties of their evolution as well).
The expected number of particles in the state $W(\sqrt{N} \ph) \Omega$ is equal to $N$. For this reason, we may expect the evolution of $W (\sqrt{N} \ph) \Omega$ to have the same mean-field properties as the one of $\psi_N$. In other words, we may expect that the evolution of $W(\sqrt{N} \ph) \Omega$ is still approximated by the coherent state $W(\sqrt{N} \ph_t) \Omega$, where $\ph_t$ is the solution of the Hartree equation (\ref{eq:hartree}), with initial data $\ph_{t=0} = \ph$. To study the fluctuation around the mean-field approximation $W(\sqrt{N} \ph_t) \Omega$, we introduce the two-parameter group of unitary transformations
\begin{equation}\label{eq:cU}
\cU (t;s) = e^{-i \omega(t;s)} \, W^* (\sqrt{N} \ph_t) e^{-i \cH_N (t-s)} W (\sqrt{N} \ph_s) \end{equation}
for all $t,s \in \bR$, with the phase factor
\[ \omega(t;s) = \frac{N}{2} \int_s^t d\tau \int dx (V*|\ph_{\tau}|^2 ) (x) |\ph_{\tau} (x)|^2. \]
It turns out that $\cU (s;s) = 1$ and that $\cU(t;s)$ is strongly differentiable on the domain $\cD (\cK + \cN) \subset \cF$ (at least under the assumptions on $\ph$ and $V$ given in Theorem \ref{thm:CLS}). Here $\cK$ is the kinetic energy operator
\begin{equation}\label{eq:cK0} \cK = d\Gamma (-\Delta) = \int dx \, \nabla_x a_x^* \, \nabla_x a_x \end{equation}
The derivative of $\cU (t;s)$ is then given by
\begin{equation}\label{eq:cU2} i\partial_t \cU (t;s) = \cL(t) \cU (t;s), \end{equation}
where we introduced the generator
\begin{equation}\label{eq:L} \begin{split} \cL (t) = \; & \int dx \, \nabla_x a^*_x \nabla_x a_x  + \int dx \, (V*|\ph_t|^2) (x) a_x^* a_x + \int dx dy \, V(x-y) \, \ph_t (x) \overline{\ph}_t (y) a_x^* a_y \\ & + \frac{1}{2} \int dx dy \, V(x-y) \, \left( \ph_t (x) \ph_t (y) a^*_x a^*_y + \overline{\ph}_t (x) \overline{\ph}_t (y) a_x a_y \right) \\ &+\frac{1}{ \sqrt{N}} \int dx dy V (x-y) \, a_x^* \left( a_y^* \ph_t (y) + a_y \overline{\ph}_t (y) \right) a_x \\ &+\frac{1}{2N} \int dx dy \, V( x-y) \, a_x^* a_y^* a_y a_x \,.\end{split} \end{equation}
The phase factor $\omega (t;s)$ in (\ref{eq:cU}) really does not play an important role (it cancels from expressions of the form $\cU^* (t;s) A \cU (t;s)$, for any operators $A$ on $\cF$). To understand the effect of the unitary evolution $\cU (t;s)$, we observe that, for a one-particle operator $O$ on $L^2 (\bR^3)$ with integral kernel $O(x,y)$, we have
\[ \begin{split}
\frac{1}{N} \Big\langle e^{-i \cH_N t} &W(\sqrt{N} \ph) \Omega , \, d\Gamma (O) \, e^{-i \cH_N t} W(\sqrt{N} \ph) \Omega \Big\rangle \\ &= \frac{1}{N} \int dx dy O(x,y) \, \left\langle e^{-i \cH_N t} W(\sqrt{N} \ph) \Omega , \, a_x^* a_y \, e^{-i \cH_N t} W(\sqrt{N} \ph) \Omega \right\rangle \\ &= \frac{1}{N} \int dx dy O(x,y) \, \left\langle \cU (t;0) \Omega , (a_x^* + \sqrt{N} \overline{\ph}_t (x))(a_y + \sqrt{N} \ph_t (y)) \, \cU (t;0) \Omega \right\rangle \\ &= \langle \ph_t , O \ph_t \rangle + \frac{1}{N} \langle \cU (t;0) \Omega, d\Gamma (O) \cU (t;0) \Omega \rangle + \frac{1}{\sqrt{N}} \langle \cU (t;0) \Omega, \phi (O\ph_t) \cU (t;0) \Omega \rangle \end{split} \]
Hence, $\cU (t;s)$ describes the evolution of the fluctuations around the mean-field dynamics. As $N \to \infty$, the last two terms on the r.h.s. of (\ref{eq:L}) vanish. We can therefore expect the fluctuation dynamics $\cU (t;s)$ to converge in this limit towards a limiting evolution with time-dependent generator
\begin{equation}\label{eq:Linfty}\begin{split}
\cL_{\infty} (t) = \; & \int dx \, \nabla_x a^*_x \nabla_x a_x + \int dx \, (V*|\ph_t|^2) (x) a_x^* a_x + \int dx dy \, V(x-y) \, \ph_t (x) \overline{\ph}_t (y) a_x^* a_y \\ & + \frac{1}{2} \int dx dy \, V(x-y) \, \left( \ph_t (x) \ph_t (y) a^*_x a^*_y + \overline{\ph}_t (x) \overline{\ph}_t (y) a_x a_y \right) \end{split} \end{equation}
The existence of such a limiting dynamics was established in \cite{GV}; for completeness, we reproduce the relevant statement in the next proposition.
\begin{proposition}[Prop. 4.1 in \cite{GV}]\label{prop:GV}
Suppose $V \in L^\infty (\bR^3) + L^2 (\bR^3)$, and that the map $t \to \ph_t$ is in $C (\bR, L^2 (\bR^3) \cap L^4 (\bR^3))$ (both conditions are easily checked under the assumptions of Theorem \ref{thm:CLT} below). Then there exists a unique two-parameter group of unitary transformations $\cU_\infty (t;s)$ with $\cU_\infty (s;s) = 1$ for all $s \in \bR$, and such that $\cU_\infty (t;s)$ is strongly differentiable on the domain $\cD (\cK + \cN)$ ($\cK$ is the kinetic energy operator from (\ref{eq:cK0})) with
\begin{equation}\label{eq:Uinfty} i\partial_t \cU_\infty (t;s) = \cL_\infty (t) \cU_\infty (t;s) \,. \end{equation}
\end{proposition}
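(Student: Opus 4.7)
The generator can be written as
\[
\cL_\infty (t) \;=\; d\Gamma(h_t) \;+\; \frac{1}{2}\int dx\, dy\, k_t(x,y)\, a_x^* a_y^* \;+\; \frac{1}{2}\int dx\, dy\, \overline{k_t(x,y)}\, a_x a_y,
\]
where $h_t = -\Delta + (V*|\ph_t|^2) + K_t$ with $K_t$ given by the integral kernel $V(x-y)\ph_t(x)\overline{\ph}_t(y)$, and $k_t(x,y) = V(x-y)\ph_t(x)\ph_t(y)$. My plan is to build $\cU_\infty(t;s)$ by a standard regularization-and-limit procedure, using a priori bounds on moments of $\cN$ and on $\cK$ to ensure the limit is unitary and strongly differentiable on $\cD(\cK+\cN)$. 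The hypothesis $V\in L^\infty(\bR^3)+L^2(\bR^3)$ together with $\ph_t\in L^2\cap L^4$ controls $h_t$ as a $-\Delta$-bounded one-particle operator and gives $k_t\in L^2(\bR^6)$; by (\ref{eq:bd-aa*}) and the Schwarz inequality, the pair terms are bounded in absolute value by $C\|k_t\|_2 \|(\cN+1)^{1/2}\psi\|^2$, and together with $d\Gamma(h_t)\leq C(\cK+\cN)$ this shows that $\cL_\infty(t)$ is a symmetric operator on $\cD(\cK+\cN)$ with norm dominated by $\cK+\cN$ (plus a time-dependent constant).

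First, I approximate $V$ by a sequence $V_n$ of smooth, compactly supported potentials converging to $V$ in $L^\infty+L^2$, and define the corresponding generators $\cL_\infty^{(n)}(t)$. For each fixed $n$, the pair terms and the potential part of $d\Gamma(h_t)$ are bounded relative to $\cN$, and the only remaining unbounded piece is $d\Gamma(-\Delta)$, which is a positive self-adjoint time-independent operator. Standard Kato theory for time-dependent Hamiltonians (e.g.\ Reed-Simon Vol.~II, Thm.~X.70, or the abstract machinery of evolution systems with stable type) then produces a unique two-parameter unitary family $\cU_\infty^{(n)}(t;s)$ that is strongly differentiable on $\cD(\cK+\cN)$ and satisfies (\ref{eq:Uinfty}) with $\cL_\infty$ replaced by $\cL_\infty^{(n)}$.

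Next I establish a priori bounds, uniformly in $n$. For $\psi\in\cD(\cK+\cN)$ and $\psi_t^{(n)} := \cU_\infty^{(n)}(t;s)\psi$, I compute
\[
\frac{d}{dt}\bigl\langle \psi_t^{(n)}, (\cN+1)^k \psi_t^{(n)}\bigr\rangle \;=\; i\bigl\langle \psi_t^{(n)}, [\cL_\infty^{(n)}(t),(\cN+1)^k]\, \psi_t^{(n)}\bigr\rangle;
\]
only the pair terms $a_x^*a_y^*$ and $a_x a_y$ contribute to this commutator, and each such contribution is bounded using (\ref{eq:bd-aa*}) and $\|k_t\|_2\leq C\|V\|_{L^\infty+L^2}\|\ph_t\|_{L^2\cap L^4}^2$ by $C_k \|\ph_t\|^2 \langle\psi_t^{(n)},(\cN+1)^k\psi_t^{(n)}\rangle$, uniformly in $n$. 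Gronwall then gives
\[
\bigl\langle \psi_t^{(n)}, (\cN+1)^k \psi_t^{(n)}\bigr\rangle \;\leq\; e^{C_k|t-s|}\bigl\langle \psi,(\cN+1)^k\psi\bigr\rangle.
\]
An analogous computation for $\cK+\cN$ in place of $(\cN+1)^k$ uses that $[\cK,d\Gamma(V*|\ph_t|^2)]$ and $[\cK,d\Gamma(K_t)]$ produce terms controllable by $\|\nabla(V*|\ph_t|^2)\|_\infty$-type quantities and $k_t\in L^2$ estimates (with one derivative on $\ph_t$ absorbed via $\ph_t\in C(\bR,L^2\cap L^4)$ and the quadratic form inequality bounding the derivatives of the pair kernels by $\cK+\cN$); this yields the uniform-in-$n$ bound $\langle\psi_t^{(n)},(\cK+\cN)\psi_t^{(n)}\rangle\leq e^{C|t-s|}\langle\psi,(\cK+\cN)\psi\rangle$. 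I expect this second bound, which requires commuting $\cK$ past the interaction terms with only $L^2\cap L^4$ control on $\ph_t$, to be the main technical obstacle.

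Finally, for $\psi\in\cD(\cK+\cN)$ and $n\geq m$, Duhamel gives
\[
\psi_t^{(n)} - \psi_t^{(m)} \;=\; -i\int_s^t \cU_\infty^{(n)}(t;\tau)\bigl(\cL_\infty^{(n)}(\tau)-\cL_\infty^{(m)}(\tau)\bigr)\psi_\tau^{(m)}\, d\tau,
\]
and $(\cL_\infty^{(n)}-\cL_\infty^{(m)})\psi_\tau^{(m)}$ tends to zero in $\cF$ thanks to the uniform $(\cK+\cN)$ bound on $\psi_\tau^{(m)}$ and the convergence $V_n\to V$ in $L^\infty+L^2$. Thus $\psi_t^{(n)}$ is Cauchy in $\cF$, defining a limit $\cU_\infty(t;s)\psi$ which extends by isometry to all of $\cF$. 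Unitarity is inherited from the $\cU_\infty^{(n)}$. The strong differentiability and the identity (\ref{eq:Uinfty}) follow from the same Duhamel formula, now with $m$ fixed and $n\to\infty$, together with dominated convergence using the uniform moment bounds. Uniqueness: if $\cU_1,\cU_2$ both satisfy (\ref{eq:Uinfty}) on $\cD(\cK+\cN)$, then $\|(\cU_1-\cU_2)(t;s)\psi\|^2$ has a time derivative bounded by a constant multiple of itself after another application of the pair-term estimate, and Gronwall closes the argument since the quantity vanishes at $t=s$.
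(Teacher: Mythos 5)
The paper does not prove this proposition: it is imported verbatim from Ginibre--Velo (Prop.~4.1 in \cite{GV}), and the authors explicitly say they are just ``reproduc[ing] the relevant statement.'' So your proposal is not ``a different route from the paper's proof'' --- it is an attempt to supply a proof where the paper supplies none.

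Your construction-by-regularization scaffolding (Kato for the regularized generators, uniform $(\cN+1)^k$ bounds via the pair terms, Duhamel to pass to the limit, Gronwall for uniqueness) is a reasonable template and the $\cN$-moment estimates you sketch are sound: only the pair terms fail to commute with $(\cN+1)^k$, and $\|k_t\|_2 \le C\|V\|_{L^\infty + L^2}\|\ph_t\|_{L^2\cap L^4}^2$ is correct under the stated hypotheses. However, the step you flag as ``the main technical obstacle'' is in fact a genuine gap, and not merely a computation left to the reader. Under the hypotheses of the proposition you know only $V \in L^\infty + L^2$ and $\ph_t \in C(\bR, L^2\cap L^4)$; you have no derivative control on $\ph_t$ and no smoothness of $V$. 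Consequently $[\cK, d\Gamma(V*|\ph_t|^2)]$ produces $-\Delta(V*|\ph_t|^2)$ and $\nabla(V*|\ph_t|^2)\cdot\nabla$ terms, and quantities like $\|\nabla(V*|\ph_t|^2)\|_\infty$ are simply not finite in general (e.g.\ for Coulomb $V$ and rough $\ph_t$). The same defect appears in $[\cK, d\Gamma(K_t)]$ and in the commutator with the pair terms: every one of these throws a derivative onto either $V$ or $\ph_t$, and neither is available. So the claimed Gronwall bound $\langle\psi_t^{(n)},(\cK+\cN)\psi_t^{(n)}\rangle \le e^{C|t-s|}\langle\psi,(\cK+\cN)\psi\rangle$ does not follow from the argument given, and without it you cannot justify convergence of the Duhamel integrand nor the preservation of the domain $\cD(\cK+\cN)$. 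Switching the comparison operator from $\cK+\cN$ to $\cL_\infty(s)+\cN$ (as this paper does elsewhere under the much stronger hypotheses $V^2\le D(1-\Delta)$, $\ph\in H^2$) does not rescue the argument either, since the resulting Gronwall estimate needs $\dot{\cL}_\infty(t)\lesssim (\cN+1)$, which requires $\dot\ph_t \in L^2$ --- again unavailable under $\ph_t\in C(\bR,L^2\cap L^4)$. The actual Ginibre--Velo proof circumvents this by working in the interaction picture with $e^{\pm i\cK(t-s)}$ and exploiting the Hilbert--Schmidt structure of the pair kernels \emph{after} conjugation, rather than by differentiating a $\cK$-weighted quadratic form; you would need an argument of that kind to close this step.
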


It turns out that the limiting dynamics $\cU_\infty (t;s)$ can be described as a so called Bogoliubov transformation. This is a consequence of the fact that the time-dependent generator $\cL_\infty (t)$ is quadratic in the creation and annihilation operators. For $f,g \in L^2 (\bR^3)$, we introduce the notation
\[ A(f,g) = a(f) + a^* (\overline{g}) \]
We observe that \begin{equation}\label{eq:bog11} (A(f,g))^* = A(\overline{g}, \overline{f}) = A \left( \left( \begin{array}{ll} 0 & J \\ J & 0 \end{array} \right) (f,g) \right)\end{equation}
where $J:L^2 (\bR^3) \to L^2 (\bR^3)$ is the antilinear operator defined by $Jf = \overline{f}$. {F}rom (\ref{eq:CCR}), we find that the operators $A(f,g)$ satisfy the commutation relations
\begin{equation}\label{eq:bog22} \left[ A(f_1, g_1) , A^* (f_2, g_2) \right] = \left\langle (f_1, g_1) , S (f_2, g_2) \right\rangle_{L^2 \oplus L^2} \qquad \text{with } S = \left( \begin{array}{ll} 1 & 0 \\ 0 & -1 \end{array} \right) \end{equation}

\begin{theorem}\label{thm:bog}
Assume $V^2 \leq D (1-\Delta)$ and $\ph \in H^1 (\bR^3)$.
Then, for every $t,s\in \bR$ there exists a bounded linear map \[ \Theta (t;s): L^2 (\bR^3) \oplus L^2 (\bR^3) \to L^2 (\bR^3) \oplus L^2 (\bR^3) \] such that
\[ \cU_\infty^* (t;s) A(f,g) \cU_\infty (t;s) = A (\Theta (t;s) (f,g)) \]
for all $f,g \in L^2 (\bR^3)$. The map $\Theta (t;s)$ is so that, for every $t,s \in \bR$,
\begin{equation}\label{eq:bog1}
\Theta (t;s) \left( \begin{array}{ll} 0 & J \\ J & 0 \end{array} \right) = \left( \begin{array}{ll} 0 & J \\ J & 0 \end{array} \right) \Theta (t;s) \end{equation}
and
\begin{equation}\label{eq:bog2} S = \Theta (t;s)^* \, S \, \Theta (t;s) \end{equation}
Moreover, $\Theta (t;s)$ is such that
\begin{equation}\label{eq:Ttph} \Theta (t;s) (\ph_t , \overline{\ph}_t) = (\ph_s, \overline{\ph}_s)\, . \end{equation}

A map $\Theta (t;s)$ with the properties (\ref{eq:bog1}) and (\ref{eq:bog2}) is known as a (bosonic) Bogoliubov transformation. Note that (\ref{eq:bog1}) and (\ref{eq:bog2}) simply state that (\ref{eq:bog11}) and, respectively, (\ref{eq:bog22}) are preserved under $\Theta (t;s)$.
Like every Bogoliubov transformation, $\Theta (t;s)$ can be written as
\begin{equation}\label{eq:thetat} \Theta (t;s) = \left( \begin{array}{ll} U (t;s) & J V (t;s) J \\ V (t;s) & J U (t;s) J \end{array} \right) \end{equation}
for bounded linear maps $U (t;s), V (t;s) : L^2 (\bR^3) \to L^2 (\bR^3)$ satisfying \[ U^* (t;s) U (t;s) - V^* (t;s) V (t;s) = 1 \qquad \text{and } \quad U^* (t;s) J V (t;s) J = V^* (t;s) J U (t;s) J.\]
The maps $U(t;s), V(t;s)$ are not only bounded from $L^2 (\bR^3)$ into $L^2 (\bR^3)$. Instead they even extend to bounded maps from $H^1 (\bR^3)$ into $H^1 (\bR^3)$ (and $\Theta (t;s)$ extend therefore to a bounded maps from $H^1 (\bR^3) \oplus H^1 (\bR^3)$ into itself).
\end{theorem}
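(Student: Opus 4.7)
The plan is to construct $\Theta(t;s)$ as the solution of a linear Cauchy problem on $L^2(\bR^3)\oplus L^2(\bR^3)$ dictated by commutators with $\cL_\infty(t)$, and then to verify the claims of the theorem one by one via uniqueness arguments at the level of sesquilinear forms on $\cD(\cN^{1/2})$. Because $\cL_\infty(t)$ is quadratic in the creation and annihilation operators, the CCR (\ref{eq:CCR}) yield
\[
\bigl[A(f,g),\,\cL_\infty(t)\bigr] \;=\; A\bigl(\cB(t)(f,g)\bigr)
\]
for a $2\times 2$ operator matrix $\cB(t) = \cB_0 + \cB_1(t)$ on $L^2\oplus L^2$, where $\cB_0 = \mathrm{diag}(-\Delta,\Delta)$ is the kinetic contribution (self-adjoint, generating a unitary group) and $\cB_1(t)$ is a bounded remainder whose entries are built from the multiplication operator $V*|\ph_t|^2$, the direct kernel $(x,y)\mapsto V(x-y)\ph_t(x)\overline{\ph}_t(y)$, and the pairing kernel $(x,y)\mapsto V(x-y)\ph_t(x)\ph_t(y)$. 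Under $V^2\le D(1-\Delta)$ and $\ph_t\in H^1$ all entries of $\cB_1(t)$ define bounded operators on $L^2$. I then define $\Theta(t;s)$ as the unique solution of the linear ODE determined by $\cB(t)$ with initial condition $\Theta(s;s)=1$, constructed via a Dyson expansion in the interaction picture relative to the free propagator generated by $\cB_0$.

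The main identity is then established by a uniqueness argument: for each $(f,g)\in L^2\oplus L^2$ set
\[
X(t) \;:=\; \cU_\infty^*(t;s)\,A(f,g)\,\cU_\infty(t;s) \;-\; A\bigl(\Theta(t;s)(f,g)\bigr).
\]
Then $X(s)=0$, and differentiating by use of (\ref{eq:Uinfty}), the commutator identity above, and the ODE for $\Theta$ gives $i\partial_t X(t)=0$ as a sesquilinear form on $\cD(\cN^{1/2})$; hence $X\equiv 0$. Properties (\ref{eq:bog1}) and (\ref{eq:bog2}) then follow from unitarity of $\cU_\infty(t;s)$: (\ref{eq:bog1}) by taking adjoints in the main identity and using (\ref{eq:bog11}), and (\ref{eq:bog2}) by computing $[A(f_1,g_1),A^*(f_2,g_2)]$ in two ways, once directly from (\ref{eq:bog22}) and once after conjugation by $\cU_\infty(t;s)$ via the main identity before re-applying (\ref{eq:bog22}). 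The block decomposition (\ref{eq:thetat}) together with the stated relations on $U(t;s), V(t;s)$ is a direct rewriting of (\ref{eq:bog1}) and (\ref{eq:bog2}). Finally, boundedness of $\Theta(t;s)$ from $H^1\oplus H^1$ into itself is a Gronwall argument obtained by commuting $(1-\Delta)^{1/2}\oplus(1-\Delta)^{1/2}$ through the ODE for $\Theta$; the commutators of $(1-\Delta)^{1/2}$ with the entries of $\cB_1(t)$ are $L^2$-bounded because $\ph_t\in H^1$ and $V^2\le D(1-\Delta)$, producing constants that grow exponentially in $|t-s|$.

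The Hartree invariance (\ref{eq:Ttph}) is what I expect to be the most delicate step, since it requires an exact algebraic cancellation between a commutator and the Hartree equation with all signs and complex conjugations matched. I would prove it by the same uniqueness argument with $A(\ph_t,\overline{\ph}_t)$ in place of $A(f,g)$: setting
\[
Y(t) \;:=\; \cU_\infty^*(t;s)\,A(\ph_t,\overline{\ph}_t)\,\cU_\infty(t;s) \;-\; A(\ph_s,\overline{\ph}_s),
\]
one has $Y(s)=0$ and
\[
i\partial_t Y(t) \;=\; \cU_\infty^*(t;s)\Bigl(\bigl[A(\ph_t,\overline{\ph}_t),\cL_\infty(t)\bigr] + A\bigl(i\partial_t\ph_t,\,i\partial_t\overline{\ph}_t\bigr)\Bigr)\cU_\infty(t;s).
\]
A direct computation using (\ref{eq:CCR}) shows that the commutator of $a(\ph_t)+a^*(\ph_t)$ with the kinetic, direct and exchange parts of $\cL_\infty(t)$ reduces to $a$'s and $a^*$'s acting on $-\Delta\ph_t+(V*|\ph_t|^2)\ph_t$, while the contributions coming from the $a^*a^*$ and $aa$ blocks of $\cL_\infty(t)$ cancel by the symmetry $(x,y)\leftrightarrow(y,x)$ of $V(x-y)\ph_t(x)\ph_t(y)$; substituting the Hartree equation (\ref{eq:hartree}) then makes the bracket vanish. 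Hence $Y\equiv 0$, and combining this with the main identity and injectivity of $(f,g)\mapsto A(f,g)$ (tested on the vacuum, for instance) yields (\ref{eq:Ttph}).
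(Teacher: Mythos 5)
Your proposal takes a genuinely different route from the paper's proof, so let me first flag the main structural difference and then point to a gap in the key step.

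The paper \emph{never} constructs $\Theta(t;s)$ as the solution of an ODE. Instead, it invokes \cite{CLS}[Lemma 8.1] to show that $\cU^*_\infty(t;s)\,a^\sharp(f)\,\cU_\infty(t;s)\,\Omega$ lies in the one-particle sector, which immediately \emph{defines} $U(t;s)$, $V(t;s)$ (hence $\Theta(t;s)$, together with the identity $\cU_\infty^*A(f,g)\cU_\infty\Omega = A(\Theta(t;s)(f,g))\Omega$ on the vacuum). Boundedness of $U,V$ then falls out of Proposition~\ref{prop:2}. The extension from the vacuum to all of $\cF$ is handled by an irreducibility argument: a Gronwall estimate shows the \emph{double} commutator $[[\cU_\infty^*A(f_1,g_1)\cU_\infty,\,A(f_2,g_2)],\,M]$ vanishes for every bounded $M$, from which one reads off that the residual $R := \cU_\infty^*A(f,g)\cU_\infty - A(\Theta(t;s)(f,g))$ commutes with all $A$'s and annihilates $\Omega$, hence $R=0$. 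Indeed, the Remark after the theorem states explicitly that the generator $\cA(t)$ for $\Theta$ is \emph{not} made precise and is \emph{not} needed for the analysis. Your plan of constructing $\Theta$ via Dyson expansion is legitimate (the remainder $\cB_1(t)$ is bounded under the stated hypotheses, so the interaction-picture Dyson series converges), but it is a strictly different and somewhat heavier route.

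The gap is in your uniqueness argument for the main identity. You claim that
\[
X(t) := \cU_\infty^*(t;s)A(f,g)\cU_\infty(t;s) - A\bigl(\Theta(t;s)(f,g)\bigr)
\]
satisfies $i\partial_t X(t)=0$. It does not. Differentiating the first term gives $\cU_\infty^*[A(f,g),\cL_\infty(t)]\cU_\infty = \cU_\infty^* A(\cA(t)(f,g))\cU_\infty$, while the ODE for $\Theta$ turns the second term's derivative into $A(\Theta(t;s)\cA(t)(f,g))$; the two can only be combined into a single $X$ \emph{evaluated at the modified argument} $\cA(t)(f,g)$, i.e. $i\partial_t X(t;f,g) = X(t;\cA(t)(f,g))$, which is a transport equation and is certainly not zero a priori. (It becomes zero only \emph{after} one already knows $X\equiv 0$, which is circular.) Since $\cA(t)$ contains $\pm\Delta$ it is unbounded, so the naive Gronwall inequality from $X(s)=0$ does not close: each differentiation degrades the argument $(f,g)$ by two derivatives. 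To rescue the argument you need the same device the paper uses in its double-commutator lemma, namely the observation that $e^{-i\cK(t-s)}a^\sharp(f)e^{i\cK(t-s)} = a^\sharp(e^{it\Delta}f)$ with $\|e^{it\Delta}f\| = \|f\|$: working with $\cU_\infty^*(t;s)e^{-i\cK(t-s)}A(f,g)e^{i\cK(t-s)}\cU_\infty(t;s)$ and the corresponding interaction-picture version of $\Theta$ cancels the kinetic parts, replaces $\cA(t)$ by a \emph{bounded} remainder, and gives a closed Gronwall inequality in operator norm after testing against states in $\cD(\cN^{1/2})$ via the bounds $\|a^\sharp(h)\psi\|\le\|h\|\,\|(\cN+1)^{1/2}\psi\|$ and Proposition~\ref{prop:2}. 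You invoke the interaction picture for the Dyson \emph{construction} of $\Theta$, but not for this uniqueness step; as written, the step fails. The remaining parts of your proposal (the derivation of (\ref{eq:bog1}), (\ref{eq:bog2}) from unitarity and the CCR, the $H^1$-bound by a commutator Gronwall, and the proof of (\ref{eq:Ttph}) by differentiating $\cU_\infty^*A(\ph_t,\overline{\ph}_t)\cU_\infty$ and using the Hartree equation) are sound and in fact the last one coincides with the paper's argument.
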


{\it Remark.} The Bogoliubov transformation $\Theta (t;s)$ is a two-parameter group of symplectic transformations. Formally, they satisfy the equation
\begin{equation}\label{eq:dtTheta} i\partial_t \Theta (t;s) = \Theta (t;s) \cA (t) \end{equation}
with generator
\[ \cA (t) = \left( \begin{array}{ll} D_t & -JB_t J \\ B_t & -JD_t J \end{array} \right) \]
where we defined the linear operators
\[ \begin{split} D_t f &= -\Delta f + (V*|\ph_t|^2) f + (V* \overline{\ph}_t f) \ph_t \\ B_t f &= (V* \overline{\ph}_t f) \overline{\ph}_t \end{split} \]
acting on $L^2 (\bR^3)$. We observe that $D_t^* = D_t$, $B_t^* = J B_t J$. and therefore $\cA_t^* = S \cA_t S$ ($\cA (t)$ is self-adjoint w.r.t. the degenerate product defined by $S$; hence $\Theta (t;s)$ is unitary w.r.t. this inner product, which means that $\Theta (t;s)$ is symplectic). Note also that the condition $V^2 \leq D (1-\Delta)$ implies that $B_t$ is a Hilbert-Schmidt operator for all $t \in \bR$. This is an important fact to make the Bogoliubov generated by $\cA (t)$ implementable. Since the existence of a generator $\cA (t)$ for $\Theta (t;s)$ is not needed for our analysis, we do not try to make (\ref{eq:dtTheta}) precise.

We will prove Theorem \ref{thm:bog} in Section \ref{sec:bds}, where we establish several properties of the limiting evolution $\cU_\infty )t;s)$.

\section{The Central Limit Theorem}
\setcounter{equation}{0}
\label{sec:CLT}

We are now ready to state our main result.
\begin{theorem}\label{thm:CLT}
Suppose $V$ is a measurable function such that $V^2 (x) \leq D (1-\Delta)$, for some $D >0$. For $\ph \in H^2 (\bR^3)$ with $\| \ph \| =1$, let $\psi_N = \ph^{\otimes N}$ and $\psi_{N,t} = e^{-i H_N t} \psi_N$. Let $O$ be a bounded self adjoint operator on $L^2 (\bR^3)$ with $\| \nabla O (1-\Delta)^{-1/2} \| < \infty$. For arbitrary fixed $t \in \bR$, let
\begin{equation}\label{eq:obser} \cO_t : = \frac{1}{\sqrt{N}} \sum_{j=1}^N \left(O^{(j)} - \E_{\ph_t} O \right) = \frac{1}{\sqrt{N}} \sum_{j=1}^N \left( O^{(j)} - \langle \ph_t, O \ph_t \rangle \right) \end{equation}
where $O^{(j)} = 1^{(j-1)} \otimes O \otimes 1^{(N-j)}$ denotes the operator $O$ acting on the $j$-th particle. Then, as $N \to \infty$, the random variable $\cO_t$ (whose law is determined by the $N$-particle wave function $\psi_{N,t}$) converges in distribution to a centered Gaussian random variable with variance \begin{equation}\label{eq:var}\begin{split} \sigma_t^2
&=  \frac{1}{2} \left[ \left\langle \Theta (t;0) \left(O\ph_t , J O \ph_t \right) , \Theta (t;0) \left(O \ph_t , J O\ph_t \right) \right\rangle - \left| \left\langle \Theta (t;0) \left(O\ph_t , J O \ph_t \right) , \frac{1}{\sqrt{2}} \, \left(\ph, \overline{\ph} \right) \right\rangle \right|^2 \right] \\ &= \| U (t;0) O \ph_t + J V (t;0) O \ph_t \|^2 - |\langle \ph , U (t;0) O \ph_t + J V (t;0) O \ph_t \rangle|^2 \geq 0 \end{split} \end{equation}
where the bounded linear operators $U (t;0), V (t;0) : L^2 (\bR^3) \to L^2 (\bR^3)$ and $\Theta (t;0) : L^2 (\bR^3) \oplus L^2 (\bR^3) \to L^2 (\bR^3) \oplus L^2 (\bR^3)$ are defined in Theorem \ref{thm:bog}.
\end{theorem}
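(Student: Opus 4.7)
The plan is to establish convergence in distribution via the characteristic function: by L\'evy's continuity theorem, it suffices to show that
$\chi_N(\lambda) := \langle \psi_{N,t}, e^{i\lambda \cO_t}\psi_{N,t}\rangle$
converges pointwise to $e^{-\lambda^2 \sigma_t^2/2}$. The first step is a Fock-space reformulation. Using the identity $\ph^{\otimes N} = c_N P_N W(\sqrt{N}\ph)\Omega$ with $c_N = \sqrt{N!}\, e^{N/2}/N^{N/2} \sim (2\pi N)^{1/4}$ (Stirling; $P_N$ denotes projection onto the $N$-particle sector) and the intertwining $e^{-i\cH_N t} W(\sqrt{N}\ph) = e^{-i\omega(t;0)} W(\sqrt{N}\ph_t)\cU(t;0)$ from~(\ref{eq:cU}), one obtains the exact formula
\[
\chi_N(\lambda) \;=\; c_N^2 \,\langle \cU(t;0)\Omega,\; \wt E_\lambda \,\wt P_N\, \cU(t;0)\Omega\rangle,
\]
where $\wt P_N := W^*(\sqrt{N}\ph_t)P_N W(\sqrt{N}\ph_t)$ and $\wt E_\lambda := W^*(\sqrt{N}\ph_t)e^{i\lambda\cO_t}W(\sqrt{N}\ph_t) = \exp\{i\lambda[N^{-1/2} d\Gamma(\wt O) + \phi(\wt O\ph_t)]\}$ (the would-be $O(\sqrt{N})$ shift $\sqrt{N}\langle\ph_t,\wt O\ph_t\rangle$ vanishes by the definition of $\wt O$).

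Next I would pass to $N\to\infty$. Three ingredients enter, all of the kind developed in the Fock-space approach of~[RS, MS, CLS]: (i) a uniform bound $\langle\cU(t;0)\Omega, \cN^k\cU(t;0)\Omega\rangle \leq C_k(t)$ makes $\wt E_\lambda \to e^{i\lambda\phi(\wt O\ph_t)}$ strongly, since $N^{-1/2}d\Gamma(\wt O)$ is subleading on bounded-particle states; (ii) strong convergence $\cU(t;0)\to\cU_\infty(t;0)$ on the same domain; and (iii) using the Fourier representation $P_N = \int_0^{2\pi}\frac{d\theta}{2\pi}e^{-iN\theta}e^{i\theta\cN}$ and conjugating yields $\wt P_N = \int_0^{2\pi}\frac{d\theta}{2\pi}e^{i\theta(\cN + \sqrt{N}\phi(\ph_t))}$; after rescaling $\theta = \mu/\sqrt{N}$ (so that $\cN/\sqrt{N}\to 0$ on the relevant domain) one obtains the distributional limit $c_N^2\,\wt P_N \to \sqrt{2\pi}\,\delta(\phi(\ph_t))$. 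Conjugating by $\cU_\infty(t;0)$ and invoking Theorem~\ref{thm:bog} --- specifically $\Theta(t;0)(\ph_t,J\ph_t) = (\ph,J\ph)$ from~(\ref{eq:Ttph}) implies $\cU_\infty^*(t;0)\phi(\ph_t)\cU_\infty(t;0) = \phi(\ph)$, and similarly $\cU_\infty^*(t;0)\phi(\wt O\ph_t)\cU_\infty(t;0) = \phi(\wt h_t)$ with $\wt h_t := U(t;0)\wt O\ph_t + JV(t;0)\wt O\ph_t$ --- I arrive at
\[
\lim_{N\to\infty}\chi_N(\lambda) \;=\; \sqrt{2\pi}\,\langle\Omega,\, e^{i\lambda\phi(\wt h_t)}\,\delta(\phi(\ph))\,\Omega\rangle.
\]

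The final vacuum expectation is evaluated via $\delta(\phi(\ph)) = \frac{1}{2\pi}\int d\mu\,e^{i\mu\phi(\ph)}$ together with the Weyl identity $\langle\Omega,e^{i\phi(f)}e^{i\phi(g)}\Omega\rangle = \exp\{-i\Im\langle f,g\rangle - \|f+g\|^2/2\}$. The crucial algebraic fact is that Bogoliubov transformations preserve the CCR, so
$[\phi(\wt h_t),\phi(\ph)] = [\phi(\wt O\ph_t),\phi(\ph_t)] = 2i\Im\langle\wt O\ph_t,\ph_t\rangle = 0$;
hence $\langle\wt h_t,\ph\rangle \in \bR$, the $\mu$-integral collapses to an ordinary real Gaussian, and
$\chi(\lambda) = \exp\{-\lambda^2[\|\wt h_t\|^2 - \langle\wt h_t,\ph\rangle^2]/2\}$.
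Substituting $\wt h_t = h_t - \langle\ph_t, O\ph_t\rangle\ph$ (which follows from $U(t;0)\ph_t + JV(t;0)\ph_t = \ph$) and using the analogous reality $\langle h_t,\ph\rangle\in\bR$ (same commutator argument applied to $O$ in place of $\wt O$), a short algebraic simplification gives $\|\wt h_t\|^2 - \langle\wt h_t,\ph\rangle^2 = \|h_t\|^2 - |\langle\ph,h_t\rangle|^2 = \sigma_t^2$, which matches Theorem~\ref{thm:CLT}.

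The main technical obstacle will be step~(iii): making the limit $c_N^2\wt P_N \to \sqrt{2\pi}\,\delta(\phi(\ph_t))$ quantitative when tested against the state $\wt E_\lambda \cU(t;0)\Omega$. Rigorously justifying the saddle-point rescaling in the oscillatory $\theta$-integral requires a~priori bounds on arbitrary moments $\|\cN^k\cU(t;0)\Omega\|$ uniform in $N$, together with matching quantitative control of the error $\cU(t;s) - \cU_\infty(t;s)$ in a topology strong enough to accommodate the unbounded $\phi(\ph_t)$; these are precisely the energy-type estimates on the fluctuation dynamics that underlie the Fock-space derivation of Theorem~\ref{thm:CLS}, which will have to be extended to the higher-moment setting needed here.
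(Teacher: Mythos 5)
Your proposal takes a genuinely different route from the paper. The paper proves convergence of moments (Lemma~\ref{lm:mom}, established through Lemma~\ref{lm:mom-reg} via the combinatorial Propositions~\ref{prop:xi}, \ref{prop:normal-sum} and Lemma~\ref{lm:a*Omega}), while you argue through the characteristic function and L\'evy continuity. The algebra at the end of your argument is correct: your identity $U(t;0)\ph_t + JV(t;0)\ph_t = \ph$ is precisely the first component of~(\ref{eq:Ttph}); your commutator computation showing $\langle \wt h_t, \ph\rangle \in \bR$ mirrors the paper's step~(\ref{eq:im0}); and the reduction $\|\wt h_t\|^2 - \langle\wt h_t,\ph\rangle^2 = \|h_t\|^2 - |\langle\ph, h_t\rangle|^2 = \sigma_t^2$ is exactly the simplification~(\ref{eq:rem-tilda}). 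The Fock-space reformulation $\chi_N(\lambda) = d_N^2\langle\cU(t;0)\Omega, \wt E_\lambda\wt P_N\cU(t;0)\Omega\rangle$ is also correct and parallels the paper's use of the vector $\xi_N = d_N W^*(\sqrt N\ph)\,a^*(\ph)^N\Omega/\sqrt{N!}$ in~(\ref{eq:xi-def}).

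The genuine gap is step~(iii), which you flag yourself, but which I would argue is larger than ``extending the energy-type estimates to the higher-moment setting.'' The limit $c_N^2\,\wt P_N \to \sqrt{2\pi}\,\delta(\phi(\ph_t))$ is a stationary-phase statement about an oscillatory operator-valued integral. After the substitution $\theta = \mu/\sqrt N$ the $\mu$-integral runs to $\pm\pi\sqrt N$, and in that regime the correction operators $\cN/\sqrt N$ inside the phase and $N^{-1/2}d\Gamma(\wt O)$ inside $\wt E_\lambda$ are multiplied by $\mu$ of order $\sqrt N$, so they are \emph{not} small on the tails; controlling this requires a genuine decay argument in $\mu$ (oscillation of $e^{i\mu\phi(\ph)}$ against the vacuum), not merely uniform $\cN^k$-moment bounds on $\cU(t;0)\Omega$. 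Since $\phi(\ph)$ has absolutely continuous spectrum, $\delta(\phi(\ph))$ is not a bounded operator and the three limits (i)--(iii) must be taken in a mutually compatible topology, which is nontrivial. By contrast, the paper's moment method converts everything into finitely many normally-ordered polynomials in creation/annihilation operators, for which the $\cN$-moment bounds of Propositions~\ref{prop:2}, \ref{prop:1} and Lemma~\ref{lm:NKN} suffice directly; the analogue of your delta-function step is replaced by the elementary, fully computable asymptotics $\xi_N^{(2m)} \to (-1)^m/(2^m m!)$, $\xi_N^{(2m+1)}\to 0$ of Proposition~\ref{prop:xi}. Your argument also does not mention the regularization of the potential (Section~\ref{sec:reg}), which the paper needs to obtain the quantitative convergence $\wt\cU\to\cU_\infty$ of Proposition~\ref{prop:3}; any rigorous version of your step (ii) would need to incorporate this as well.
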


Notice that, for $t=0$, since $\Theta (0;0) = 1$ (and thus $U (0;0) = 1$, $V (0;0) = 0$), the variance is given by
\[ \sigma^2_{t=0}  =  \| O \ph \|^2 - \langle \ph , O \ph \rangle^2 = \E_{\ph} O^2 - (\E_{\ph} O )^2 \]
which is consistent with the fact that, at time $t=0$, $\psi_{N,t=0} = \ph^{\otimes N}$ and the variables $O^{(j)}$ are therefore independent identically distributed random variables, with a common distribution determined (through the spectral theorem) by $\ph$.

Instead of the variable $\cO_t$, we could also consider the variable
\[ \cO'_t = \frac{1}{\sqrt{N}} \sum_{j=1}^N \left( O^{(j)} - \langle \psi_{N,t} , O^{(\ell)} \psi_{N,t} \rangle \right) = \frac{1}{\sqrt{N}} \sum_{j=1}^N \left( O^{(j)} - \tr \, O \, \gamma_{N,t}^{(1)} \right) \]
which is centered for all $N$ (while $\cO_t$ is only centered in the limit $N \to \infty$). Observe that
\[ \cO'_t = \cO_t + \delta_N, \qquad \text{with } \quad \delta_N = \sqrt{N} \tr \, O \left( |\ph_t \rangle \langle \ph_t| - \gamma^{(1)}_{N,t} \right)\]
It follows from Theorem \ref{thm:CLS} that, for every fixed time,  $|\delta_N| \lesssim N^{-1/2}$. For this reason, like $\cO_t$, also the random variable $\cO'_t$ converges in distribution, as $N \to \infty$, to a Gaussian random variable with variance $\sigma_t^2$ as given in (\ref{eq:var}).

We also remark that the assumptions $\ph \in H^2 (\bR^3)$ and $\| \nabla O (1-\Delta)^{-1/2} \| < \infty$ are needed to control the singularity of the potential $V$. If instead of $V^2 (x) \leq D (1-\Delta)$ we just assumed that the operator norm of $V$ is bounded (i.e. that the $L^\infty$ norm of the function $V$ is finite), the result would hold for arbitrary $\ph \in H^1 (\bR^3)$ and for arbitrary compact operator $O$ and its proof would be substantially simpler (the analytic part of the proof would be easier, the combinatorial part would be unchanged). Since however we believe the Coulomb potential to be physically interesting, we preferred to avoid the assumption of bounded potential.

For example, Theorem \ref{thm:CLT} applies when $O = \chi_A (x)$, where $\chi_A$ is a smoothed out version of the characteristic function of a set $A \subset \bR^3$ (one needs the smoothed out version, because of the condition $\| \nabla \chi_A (1-\Delta)^{-1} \| < \infty$). In this case, the operator $\sum_{j=1}^N \chi_A (x_j)$ measures the number of particles in the set $A$. The law of large numbers implies that $\sum_{j=1}^N \chi_A (x_j)$ converges almost surely to 
\begin{equation}\label{eq:aver-ex} N \langle \ph_t, \chi_A (x) \ph_t \rangle = N \int dx \chi_A (x) |\ph_t (x)|^2 \end{equation}
as $N \to \infty$. Theorem \ref{thm:CLT} implies that the typical fluctuations around (\ref{eq:aver-ex}) are of the order $N^{1/2}$ and that, on this scale, they approach a Gaussian law.

In order to prove Theorem \ref{thm:CLT}, we show that the moments of the random variable $\cO_t$, with distribution determined by the wave function $\psi_{N,t}$ converge, as $N \to \infty$, to the moments of a centered Gaussian random variable with variance $\sigma_t^2$.
\begin{lemma}\label{lm:mom}
Under the assumptions of Theorem \ref{thm:CLT}, we have, for any fixed $\ell \in \bN$,
\begin{equation}\label{eq:mom} \lim_{N \to \infty} \E_{\psi_{N,t}} \cO_t^\ell = \lim_{N \to \infty} \left\langle \psi_{N,t} , \cO_t^{\ell} \, \psi_{N,t} \right\rangle = \left\{ \begin{array}{ll} 0 \quad &\text{if $\ell$ is odd} \\
\frac{\ell!}{2^{\ell/2} (\ell/2)!} \sigma_t^{2\ell} \quad &\text{if $\ell$ is even} \end{array} \right. \end{equation}
\end{lemma}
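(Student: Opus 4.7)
The plan is to evaluate the moments $\E_{\psi_{N,t}} \cO_t^\ell$ by working in Fock space, moving the Schrödinger evolution onto the observable via the fluctuation dynamics $\cU(t;0)$, and then passing to the limiting Bogoliubov transformation $\cU_\infty(t;0)$ of Theorem~\ref{thm:bog}. Writing $\wt O_t := O - \langle \ph_t, O \ph_t \rangle$, one has $\cO_t = N^{-1/2} d\Gamma(\wt O_t)$ on $\cF_N$, and inserting the Weyl operators together with the identity $e^{-iH_N t} = e^{i\omega(t;0)} W(\sqrt N \ph_t) \cU(t;0) W^*(\sqrt N \ph)$ rewrites
\[ \E_{\psi_{N,t}} \cO_t^\ell = \bigl\langle \wt\psi_N,\, \cU^*(t;0) \bigl[N^{-\ell/2} W^*(\sqrt N \ph_t) d\Gamma(\wt O_t)^\ell W(\sqrt N \ph_t)\bigr] \cU(t;0)\, \wt\psi_N\bigr\rangle, \]
with $\wt\psi_N := W^*(\sqrt N \ph) \ph^{\otimes N}$. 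The key algebraic input, which uses that $\wt O_t \ph_t \perp \ph_t$, is the Weyl shift
\[ W^*(\sqrt N \ph_t) d\Gamma(\wt O_t) W(\sqrt N \ph_t) = d\Gamma(\wt O_t) + \sqrt N\, \phi(\wt O_t \ph_t), \qquad \phi(h) := a(h) + a^*(h), \]
which upon raising to the $\ell$-th power and dividing by $N^{\ell/2}$ gives $\phi(\wt O_t \ph_t)^\ell + R_N$, with remainder $R_N$ carrying at least one factor of $N^{-1/2} d\Gamma(\wt O_t)$. Using the operator bound $d\Gamma(\wt O_t) \leq \|\wt O_t\|_{\mathrm{op}} \cN$ and a priori bounds on moments of $\cN$ in the evolved state $\cU(t;0) \wt\psi_N$ (uniform in $N$), the remainder contributes $O(N^{-1/2})$ and can be discarded.

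Next I would pass from $\cU(t;0)$ to the limiting Bogoliubov evolution $\cU_\infty(t;0)$, using that $\cL(t) - \cL_\infty(t)$ consists of cubic and quartic terms with coefficients of order $N^{-1/2}$ and $N^{-1}$ respectively; this yields convergence of $\cU(t;0)\Psi$ to $\cU_\infty(t;0)\Psi$ for $\Psi$ in the domain of a suitable power of $\cN$. The intertwining relation from Theorem~\ref{thm:bog} then gives
\[ \cU_\infty^*(t;0)\, \phi(\wt O_t \ph_t)\, \cU_\infty(t;0) = \phi(f_t), \qquad f_t := U(t;0) \wt O_t \ph_t + JV(t;0) \wt O_t \ph_t, \]
reducing the problem to computing $\langle \Psi_\infty, \phi(f_t)^\ell \Psi_\infty\rangle$ for a limit vector $\Psi_\infty$ inherited from $\wt\psi_N$. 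Provided $\Psi_\infty$ can effectively be identified with the Fock vacuum (so that $a(f_t) \Psi_\infty = 0$), Wick's theorem applied to the CCR \eqref{eq:CCR} gives $\langle \Omega, \phi(f_t)^\ell \Omega\rangle = 0$ for odd $\ell$ and $\frac{\ell!}{2^{\ell/2}(\ell/2)!} \|f_t\|^\ell$ for even $\ell$. The odd case is immediate from $\phi(f_t)$ flipping particle-number parity. For the even case, using $\Theta(t;0)(\ph_t, \ov\ph_t) = (\ph, \ov\ph)$ from \eqref{eq:Ttph} to absorb the scalar $\langle \ph_t, O \ph_t\rangle$ inside $\wt O_t$, together with the Bogoliubov constraints $U^* U - V^* V = 1$ and $U^* JVJ = V^* JUJ$, one verifies that $\|f_t\|^2$ equals the $\sigma_t^2$ of \eqref{eq:var}.

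The main obstacle is the identification of the limit vector $\Psi_\infty$. Unlike for coherent initial data $W(\sqrt N \ph)\Omega$ (where $W^*(\sqrt N \ph)$ produces exactly $\Omega$ and the scheme above runs cleanly), a direct computation gives $\langle \wt\psi_N, \cN \wt\psi_N\rangle = 2N$ for the factorized choice, so $\wt\psi_N$ is not close to the vacuum and one cannot set $\Psi_\infty = \Omega$ naively. Two routes bypass this: (i) establish the CLT first for coherent initial data and then transfer to $\ph^{\otimes N}$ using that it is the $N$-particle component of $W(\sqrt N \ph)\Omega$, extracted by phase-averaging via $P_N = \frac{1}{2\pi}\int_0^{2\pi} e^{i\theta(\cN - N)}\, d\theta$ applied to the family $W(\sqrt N e^{i\theta}\ph)\Omega$ (so the global phase oscillations isolate the $N$-particle sector without affecting the asymptotic Gaussian law); or (ii) work from the outset in the excitation Fock space over $\{\ph\}^\perp$, in which $\ph^{\otimes N}$ plays the role of the vacuum and the fluctuation dynamics preserves small particle number. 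Either route reduces to the same two quantitative inputs: uniform-in-$N$ $\cN$-moment bounds on the evolved vector (needed to control $R_N$) and the quantitative convergence $\cU(t;0) - \cU_\infty(t;0) \to 0$ on $\cN$-bounded vectors, and it is here that the singularity assumption $V^2 \leq D(1-\Delta)$ together with $\ph \in H^2(\bR^3)$ enters to absorb the kinetic energy generated by the cubic part of $\cL(t)$.
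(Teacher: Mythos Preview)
Your algebraic skeleton---the Weyl shift $W^*(\sqrt N\ph_t)d\Gamma(\wt O_t)W(\sqrt N\ph_t)=d\Gamma(\wt O_t)+\sqrt N\,\phi(\wt O_t\ph_t)$, the dominance of $\phi(\wt O_t\ph_t)^\ell$, and the reduction to $\phi(f_t)$ via Theorem~\ref{thm:bog}---is exactly the right structure, and is in fact cleaner than the paper's multinomial expansion. But there is a genuine gap at the end: your claim that $\|f_t\|^2=\sigma_t^2$ is false for $t\neq 0$. One has $\sigma_t^2=\|f_t\|^2-|\langle\ph,f_t\rangle|^2$, and while the symplectic relation $\Theta^*S\Theta=S$ together with $\langle\ph_t,O_c\ph_t\rangle=0$ forces $\langle\ph,f_t\rangle\in\bR$ (this is the paper's computation around \eqref{eq:im0}), it does \emph{not} force $\langle\ph,f_t\rangle=0$, because $\Theta(t;0)$ is symplectic but not unitary and hence does not preserve ordinary orthogonality. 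At $t=0$ the term vanishes since $f_0=O_c\ph\perp\ph$, which is why your check at $t=0$ succeeds, but for $t\neq 0$ the subtraction is genuinely present. Consequently the ``coherent-state CLT'' you propose in route~(i) would produce a Gaussian with variance $\|f_t\|^2$, and your claim that phase-averaging onto the $N$-particle sector leaves the limiting law unchanged is precisely where the error sits: the projection \emph{does} change the variance.

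The paper resolves this by an asymmetric representation: writing $\ph^{\otimes N}=d_N P_N W(\sqrt N\ph)\Omega$ and removing $P_N$ on the \emph{right} only (legitimate because the whole expression lives in the $N$-particle sector), one is led to the pairing $\langle\xi_N,\,\cU^*(t;0)\,\cdots\,\cU(t;0)\,\Omega\rangle$ with $\xi_N=d_N W^*(\sqrt N\ph)\ph^{\otimes N}$. On the right one now has the genuine vacuum, so the $\cN$-moment bounds on $\cU(t;0)\Omega$ are uniform in $N$ and your remainder estimate goes through. On the left, $\xi_N$ does \emph{not} converge to $\Omega$; instead $\xi_N=\sum_m \xi_N^{(m)}a^*(\ph)^m\Omega$ with $\xi_N^{(2m)}\to(-1)^m/(2^m m!)$ and $\xi_N^{(2m+1)}\to 0$ (Proposition~\ref{prop:xi}). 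Since $\langle a^*(\ph)^{2m}\Omega,\,a^*(f_t)^{2m}\Omega\rangle=(2m)!\,\langle\ph,f_t\rangle^{2m}$, the alternating signs produce exactly the binomial sum $\sum_{m=0}^\ell(-1)^m\binom{\ell}{m}\|f_t\|^{2(\ell-m)}\langle\ph,f_t\rangle^{2m}=(\|f_t\|^2-\langle\ph,f_t\rangle^2)^\ell$, which is the missing correction. Your route~(ii) via the excitation space over $\{\ph\}^\perp$ would also recover this term, but it requires a different Bogoliubov map (restricted to the orthogonal complement of the condensate) and is not what the paper does. Finally, note that the paper also needs to regularize $V$ with an $N$-dependent cutoff $\alpha_N=N^{-r}$ before comparing $\cU$ to $\cU_\infty$; this is essential for controlling the quartic part $\cL_4$ of the generator (see \eqref{eq:NL4N}), and your sketch of the comparison step should account for it.
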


\section{Key bounds on fluctuation dynamics}
\label{sec:bds}
\setcounter{equation}{0}

To show Lemma \ref{lm:mom}, we need some crucial estimate on the fluctuation dynamics $\cU (t;s)$ defined in (\ref{eq:cU}) and on its formal limit $\cU_\infty (t;s)$ defined in (\ref{eq:Uinfty}). In particular, we need to control the growth of the expectation of the number of particles operator $\cN$, of its high powers $\cN^j$, and of the kinetic energy operator
\[ \cK = \int dx \nabla_x a^*_x \nabla_x a_x \]

We assume that the potential $V$ entering the definition of $\cU(t;s)$ and of $\cU_\infty (t;s)$ satisfies the operator inequality \begin{equation}\label{eq:Vass}
V^2 (x) \leq D (1-\Delta) \end{equation}
for some $D >0$. Moreover, $\ph_t$ is the solution of the Hartree equation (\ref{eq:hartree}) corresponding to an initial data $\ph \in H^2 (\bR^3)$, with $\| \ph \| = 1$. Note here that the bounds on the growth of the number of particles only depend on the $H^1$ norm of the initial data $\ph$, and therefore can be extended to all $\ph \in H^1 (\bR^3)$. The bounds on the growth of the kinetic energy, on the other hand, depends on $\| \dot{\ph}_t \|$. {F}rom the Hartree equation (\ref{eq:hartree}), we observe that
\begin{equation}\label{eq:dotph} \| \dot{\ph}_t \| \leq \| \Delta \ph_t \| + \| (V*|\ph_t|^2) \ph_t \| \leq C \| \ph_t \|_{H^2} \leq C \| \ph \|_{H^2} \end{equation}
uniformly in $t$ (the constant $C$ depends only on the constant $D$ in (\ref{eq:Vass})). Here, we used the propagation of regularity for solutions of the Hartree equation (\ref{eq:hartree}); for any $s \geq 1$, there exists a constant $C$ such that \begin{equation}\label{eq:phHs}
\sup_{t \in \bR} \| \ph_t \|_{H^s} \leq C \| \ph \|_{H^s}.
\end{equation}
We will use this fact in several occasions.

In the next proposition, we control the growth of powers of the number of particles, of the kinetic energy and of its square with respect to the limiting dynamics $\cU_\infty (t;s)$ defined in (\ref{eq:Uinfty}).
\begin{proposition}\label{prop:2}
For every $j \in \bN$, there exist constants $C, K >0$ (depending on $D$, $\| \ph \|_{H^1}$, and $j$) such that
\begin{equation}\label{eq:Ninfty} \left\langle \cU_\infty (t;s) \psi , \cN^j \, \cU_\infty (t;s) \psi \right\rangle \leq C e^{K |t-s|}
\left\langle \psi, (\cN +1)^{j} \, \psi \right\rangle  \end{equation}
for all $\psi \in \cF$, and all $t,s \in \bR$. Moreover, there are also constants $C,K >0$ (depending on $D$, $\| \ph \|_{H^1}$) and $C', K'$ (depending on $D$, and $\| \ph \|_{H^2}$) such that
\begin{equation}\label{eq:Kinfty} \left\langle \cU_\infty (t;s) \psi , \cK \, \cU_\infty (t;s) \psi \right\rangle \leq C e^{K |t-s|}
\left\langle \psi, (\cK + \cN +1) \, \psi \right\rangle \, \end{equation}
and
\begin{equation}\label{eq:K2infty}
\left\langle \cU_\infty (t;s) \psi , \cK^2 \, \cU_\infty (t;s) \psi \right\rangle \leq C' e^{K' |t-s|}
\left\langle \psi, (\cK^2 + \cN^2 +1) \, \psi \right\rangle \, \end{equation}
for all $\psi \in \cF$, and all $t,s \in \bR$.
\end{proposition}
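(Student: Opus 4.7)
The strategy for all three estimates is the same Gronwall scheme. For $\psi \in \cD(\cK+\cN)$ dense in $\cF$, set $\psi_t := \cU_\infty(t;s)\psi$. Since $\cU_\infty(t;s)$ is strongly differentiable on $\cD(\cK+\cN)$ by Proposition \ref{prop:GV}, for any self-adjoint operator $\cO$ we have (formally, justified by a cutoff-and-limit argument)
\[
\frac{d}{dt} \langle \psi_t, \cO \psi_t \rangle = i \langle \psi_t, [\cL_\infty(t), \cO] \psi_t \rangle ,
\]
so the task reduces to bounding each commutator expectation by a multiple of $\langle \psi_t, (\text{appropriate operator}) \psi_t \rangle$ and invoking Gronwall.

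For (\ref{eq:Ninfty}), observe that among the four terms in $\cL_\infty(t)$ only the pairing terms (those quadratic in $a^*$ or in $a$) fail to commute with $\cN$. Writing $[\cN^j, a^*_xa^*_y]$ as a polynomial in $\cN$ of degree at most $j-1$ with operator coefficients $a^*_xa^*_y$, the commutator expectation reduces to expressions of the form $\operatorname{Im} \int dxdy\, V(x-y) \ph_t(x)\ph_t(y) \langle \psi_t, a^*_x a^*_y (\cN+1)^{j-1} \psi_t\rangle$. Splitting via Cauchy--Schwarz, using $\|a(f)\psi\| \leq \|f\|\|\cN^{1/2}\psi\|$, and controlling $\int |V(x-y)|^2 |\ph_t(y)|^2 dy \leq D\|\ph_t\|_{H^1}^2$ via the operator inequality $V^2 \leq D(1-\Delta)$, yields a bound $\leq C \langle \psi_t, (\cN+1)^j \psi_t\rangle$ with $C = C(D,\|\ph\|_{H^1},j)$. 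Gronwall gives (\ref{eq:Ninfty}).

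For (\ref{eq:Kinfty}), compute $[\cL_\infty(t),\cK]$. The free-kinetic term commutes. The commutator with $\int(V*|\ph_t|^2)(x) a^*_x a_x$ produces terms controlled by $\|\nabla(V*|\ph_t|^2)\|_\infty \cdot \cN$ and $\|V*|\ph_t|^2\|_\infty \cdot \cK$, both finite in terms of $\|\ph\|_{H^1}$ (since $V^2\leq D(1-\Delta)$ implies $V\in L^2+L^\infty$-type estimates on convolutions). The scattering term $\int V(x-y)\ph_t(x)\overline{\ph}_t(y) a^*_x a_y$ and the pairing terms are similar: when moving a derivative across $a^*_x$ or $a^*_xa^*_y$, the derivative either falls on $\ph_t$ (costing one factor of $\|\ph_t\|_{H^1}$ via (\ref{eq:phHs})) or stays on the operator, and one uses $V^2\leq D(1-\Delta)$ in the Cauchy--Schwarz step to absorb the singularity of $V$ into one of the resulting $\cK^{1/2}$ factors. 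This yields $\frac{d}{dt}\langle \psi_t,\cK\psi_t\rangle \leq K(\langle \psi_t, \cK\psi_t\rangle+\langle\psi_t,(\cN+1)\psi_t\rangle)$, and combining with (\ref{eq:Ninfty}) for $j=1$, Gronwall gives (\ref{eq:Kinfty}).

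The bound (\ref{eq:K2infty}) is the main obstacle and is proved by iterating: write $[\cL_\infty(t),\cK^2] = [\cL_\infty(t),\cK]\,\cK + \cK\,[\cL_\infty(t),\cK]$ and symmetrize so each half carries one factor of $\cK$. Each resulting expectation has the form (up to commutators) of an inner product $\langle \cK^{1/2}\psi_t, (\cdots)\cK^{1/2}\psi_t\rangle$, where $(\cdots)$ is structurally the same operator estimated in the proof of (\ref{eq:Kinfty}), but now the coefficients $\ph_t$ may absorb a second derivative: the analogue of $\|\nabla(V*|\ph_t|^2)\|_\infty$ becomes a Hessian-type term bounded by $\|\ph_t\|_{H^2}^2$ (using (\ref{eq:phHs}) with $s=2$ and the hypothesis $\ph\in H^2$, justifying the appearance of $C',K'$ depending on $\|\ph\|_{H^2}$). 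Additional commutators $[\cK,a^*_xa^*_y]$ produce $\nabla_x a^*_x\cdot \nabla_x\delta$-type distributional terms; integrated against $V(x-y)\ph_t(x)\ph_t(y)$ these are handled by integration by parts, distributing one derivative onto $\ph_t$ and keeping the rest on the $a^*$. The outcome is the differential inequality $\frac{d}{dt}\langle\psi_t,\cK^2\psi_t\rangle \leq K'(\langle \psi_t,\cK^2\psi_t\rangle + \langle\psi_t,(\cN^2+1)\psi_t\rangle)$, which combined with (\ref{eq:Ninfty}) for $j=2$ gives (\ref{eq:K2infty}) via Gronwall. The bookkeeping of the $\cK^2$ case, in particular ensuring every factor of $V$ is paired with a factor $(1-\Delta)^{1/2}$ that can be absorbed into either $\cK^{1/2}\psi_t$ or $\|\ph_t\|_{H^2}$, is the delicate point.
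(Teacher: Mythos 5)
Your approach for (\ref{eq:Ninfty}) is fine (the paper simply cites \cite{CLS} for it), but the scheme you propose for (\ref{eq:Kinfty}) and (\ref{eq:K2infty}) has a genuine gap: computing $[\cL_\infty(t),\cK]$ (and then $[\cL_\infty(t),\cK^2]$) directly forces derivatives to fall on the interaction potential $V$, and these cannot be controlled under the Coulomb-type assumption $V^2\leq D(1-\Delta)$. Concretely, for the pairing term $A_3 = B_t+B_t^*$ with $B_t=\int V(x-y)\ph_t(x)\ph_t(y)a^*_xa^*_y$, the commutator $[\cK,B_t]$ has kernel $(-\Delta_x-\Delta_y)\bigl(V(x-y)\ph_t(x)\ph_t(y)\bigr)$, and whatever you do by integration by parts, terms with $\nabla V$ and $\Delta V$ survive (they do not cancel, since $a^*_xa^*_y$ has no $x\leftrightarrow y'$ antisymmetry to produce the cancellation that saves you in the $A_1 = d\Gamma(V*|\ph_t|^2)$ case). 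For $V(x)=\pm1/|x|$ these are $1/|x|^2$- and $\delta$-type singularities that the hypothesis $V^2\leq D(1-\Delta)$ and the $H^1$ (or even $H^2$) regularity of $\ph_t$ do not handle; nothing in your bookkeeping pairs $\nabla V$ or $\Delta V$ with a factor that can absorb it. Your sentence that ``the derivative either falls on $\ph_t$ \ldots or stays on the operator'' silently excludes the case where the derivative falls on $V$, but that case is unavoidable when commuting $\cK$ past $a^*_xa^*_y$.

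The paper's proof is built precisely to avoid this commutator. For (\ref{eq:Kinfty}) it invokes Lemma \ref{lm:KHN} to get $\cK\leq\cL_\infty(t)+C(\cN+1)$ and then controls the propagation of $\langle\psi_t,\cL_\infty(t)\psi_t\rangle$ (from \cite{CLS}[Lemma~6.2]): since $[\cL_\infty(t),\cL_\infty(t)]=0$, the time derivative is just $\langle\psi_t,\dot\cL_\infty(t)\psi_t\rangle$, with $\dot\cL_\infty(t)\lesssim(\cN+1)$ — no derivative ever hits $V$. For (\ref{eq:K2infty}) the same device is iterated: one shows $\cK^2\lesssim\cL_\infty^2(t)+C(\cN+1)^2$ and $\cL_\infty^2(t)\lesssim\cK^2+C(\cN+1)^2$ by squaring the decomposition $\cK=\cL_\infty(t)-A_1-A_2-A_3$, and then Gronwalls on $\langle\psi_t,\cL_\infty^2(t)\psi_t\rangle$, whose time derivative involves only $\cL_\infty(t)\dot\cL_\infty(t)+\dot\cL_\infty(t)\cL_\infty(t)$ with $\dot\cL_\infty^2(t)\leq C(\cN+1)^2$. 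This is the key idea you are missing: replace $\cK$ and $\cK^2$ by the generator (resp.\ its square) up to $\cN$-corrections, so that the evolution never produces a $[\cK,\cdot]$-type term acting on $V$.
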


\begin{proof}
Eq. (\ref{eq:Ninfty}) is taken from \cite{CLS}[Prop. 5.1].
To prove (\ref{eq:Kinfty}) we combine Lemma \ref{lm:KHN} below, where we prove the bound $\cK \leq \cL_\infty (t) + C (\cN +1)$ with the estimate
\[ \left| \langle \cU_\infty (t;s) \psi, \cL_\infty (t) \cU_\infty (t;s) \psi \rangle \right| \leq C e^{K|t-s|} \langle \psi, (\cL_\infty (s) + \cN +1 ) \psi \rangle
\] on the growth of the expectation of $\cL_\infty (t)$, proven in \cite{CLS}[Lemma 6.2].
To prove (\ref{eq:K2infty}), we write
\[ \cK = \cL_\infty (t) - A_{1,t} - A_{2,t} - A_{3,t} \]
where
\[ \begin{split} A_{1,t} = \; & \int dx \, (V*|\ph_t|^2) (x) \, a_x^* a_x \\ A_{2,t} = \; & \int dx dy V(x-y) \ph_t (x) \overline{\ph}_t (y) a_x^* a_y \end{split} \]
and $A_{3,t} = B_t +B_t^*$, with
\[ B = \int dx dy \, V(x-y) \, \ph_t (x) \ph_t (y) \, a_x^* a_y^* \]
Hence, we have
\begin{equation}\label{eq:K2A} \cK^2 \lesssim \cL_\infty (t)^2 + A_{1,t}^2 + A_{2,t}^2 + A_{3,t}^2 \end{equation}
Observe that, because of the assumption $V^2 (x) \leq D (1-\Delta)$, and because of (\ref{eq:phHs}),
\begin{equation}\label{eq:A1t} |A_{1,t}| \leq \sup_{x} \left| V * |\ph_t|^2 \right|  \, \cN \leq C \cN \end{equation}
for a constant $C >0$ depending only on $D$ and on $\| \ph \|_{H^1}$. Similarly
\begin{equation}\label{eq:A2t} \begin{split} \left\langle \psi, A_{2,t} \psi \right\rangle &\leq \int dx dy \,|V(x-y)| |\ph_t (x)| |\ph_t (y)| \, \| a_x \psi \| \, \| a_y \psi \| \\ &\leq \int dx dy \, |V(x-y)|^2 |\ph_t (x)|^2 \| a_y \psi \|^2 + \int dx dy \, |\ph_t (y)|^2 \| a_x \psi \|^2
\\ &\lesssim  (D \| \ph_t \|_{H^1} + 1) \, \langle \psi , \cN \psi \rangle \,. \end{split} \end{equation}
Combining the last bound with the corresponding lower bound (and with (\ref{eq:phHs})), we find $|A_{2,t}| \leq C \cN$. Since $A_{1,t}$ and $A_{2,t}$ commute with $\cN$, it follows that
\begin{equation}\label{eq:A12t} A_{1,t}^2 \leq C \cN^2 \qquad \text{and } \quad A_{2,t}^2 \leq C \cN^2 \,. \end{equation}
We still have to consider the term $A_{3,t}$. To this end, we notice that
\begin{equation}\label{eq:A3t} A_{3,t}^2 \lesssim B_t B_t^* + B_t^* B_t \end{equation}
where \begin{equation}\label{eq:BB*} \begin{split} B_t B_t^* &= \int dx dy dz dw V(x-y) V(z-w) \ph_t (x) \ph_t (y) \overline{\ph}_t (z) \overline{\ph}_t (w) a_x^* a_y^* a_z a_w \\ & \leq  \int dx dy dz dw \, V^2 (x-y) |\ph_t (x)|^2 |\ph_t (y)|^2 \, a_z^* a_w^* a_w a_z \\ & \lesssim D \| \ph_t \|_{H^1}^2 \cN^2 \leq C \cN^2 \end{split}\end{equation}
and
\begin{equation}\label{eq:B*B} \begin{split}
B_t^* B_t = \; &B_t B_t^* + 4 \int dx dy dz \, V(x-y) V(y-z) |\ph_t (y)|^2 \ph_t (x) \overline{\ph}_t (z) a_x^* a_z \\ &+ 2 \int dx dy\, V^2 (x-y) |\ph_t (x)|^2 |\ph_t (y)|^2 \\  \leq \; &C (\cN+1)^2
\end{split} \end{equation}
for a constant $C$ depending only on $D$ and on the norm $\| \ph \|_{H^1}$. {F}rom (\ref{eq:K2A}), (\ref{eq:A12t}), (\ref{eq:BB*}), (\ref{eq:B*B}), we conclude that
\begin{equation}\label{eq:K2L2} \cK^2 \leq \cL_{\infty}^2 (t) + C (\cN+1)^2 \end{equation}
Similarly, one can also write $\cL_\infty (t) = \cK + A_{1,t} + A_{2,t} + A_{3,t}$, and prove that
\begin{equation}\label{eq:L2K2} \cL_\infty^2 (t) \leq \cK^2 + C (\cN+1)^2 \end{equation}

Next, we need to control the growth of the expectation of $\cL_\infty^2 (t)$ w.r.t. the dynamics $\cU_\infty (t;s)$. To this end, we remark that
\[ \frac{d}{dt} \left\langle \cU_\infty (t;s) \psi , \cL^2_\infty (t) \cU_\infty (t;s) \psi \right\rangle = \left\langle \cU_\infty (t;s) \psi, \left( \cL_\infty (t) \dot{\cL}_\infty (t) + \dot{\cL}_\infty (t) \cL_\infty (t) \right) \cU_\infty (t;s) \psi \right\rangle \]
Hence
\[ \begin{split}  &\left| \frac{d}{dt} \left\langle \cU_\infty (t;s) \psi , \cL^2_\infty (t) \cU_\infty (t;s) \psi \right\rangle \right|^2 \\ &\hspace{3cm} \leq \left\langle \cU_\infty (t;s) \psi , \cL^2_\infty (t) \cU_\infty (t;s) \psi \right\rangle^{1/2} \, \left\langle \cU_\infty (t;s) \psi , \dot{\cL}_\infty (t)^2 \cU_\infty (t;s) \psi \right\rangle^{1/2}
\end{split} \]
where
\[ \dot{\cL}_\infty (t) = \dot{A}_{1,t} + \dot{A}_{2,t} + \dot{A}_{3,t}\,. \]
Here we introduced the notation
\[ \begin{split}
\dot{A}_{1,t} &= \int dx \left( V * (\dot{\ph}_t \overline{\ph}_t + \ph_t \dot{\overline{\ph}}_t) \right) (x) \, a_x^* a_x \\
\dot{A}_{2,t} &= \int dx dy \, V(x-y) \left( \dot{\ph}_t (x) \overline{\ph}_t(y) + \ph_t (x) \dot{\overline{\ph}}_t (y) \right) \, a_x^* a_y
\end{split}
\]
and $\dot{A}_{3,t} = \dot{B}_t + \dot{B}^*_t$ with
\[ \dot{B}_t = 2 \int dx dy V(x-y) \dot{\ph}_t (x) \ph_t (y) \, a_x^* a_y^* \]
It follows that
\begin{equation}\label{eq:cLinfdot} \dot{\cL}^2_\infty (t) \lesssim \dot{A}^2_{1,t} + \dot{A}^2_{2,t} + \dot{A}^2_{3,t} \end{equation}
We observe that
\begin{equation}\label{eq:Vdotph} \begin{split} \sup_x \, \left| V * (\dot{\ph}_t \overline{\ph}_t + \ph_t \dot{\overline{\ph}}_t) \right| & \leq 2\, \sup_x \int dy |V(x-y)| |\dot{\ph}_t (y)| |\ph_t (y)|  \\ & \leq 2\, \| \dot{\ph}_t \|_2 \, \sup_x \left( \int dy \, V^2 (x-y) |\ph_t (y)|^2 \right)^{1/2} \\ & \lesssim D \, \| \dot{\ph}_t \|_2 \| \ph_t \|_{H^1} \lesssim C \end{split} \end{equation}
for a constant $C$ depending only on $D$ and $\| \ph \|_{H^2}$ (here we used (\ref{eq:dotph})). Eq. (\ref{eq:Vdotph}) implies, similarly to (\ref{eq:A1t}), that $\dot{A}_{1,t}^2 \leq C \cN^2$ (this time, however, with a constant $C$ depending on the $H^2$-norm of the initial data $\ph$). Analogously to (\ref{eq:A2t}), we find $\dot{A}_{2,t}^2 \leq C \cN^2$ (again with a constant depending on $\| \ph \|_{H^2}$). Finally, to bound the contribution from $\dot{A}_{3,t}^2$, we observe that
\[ \begin{split} \dot{B} \, \dot{B}^* &= \int dx dy dz dw \, V(x-y) V(z-w) \dot{\ph}_t (x) \ph_t (y) \dot{\overline{\ph}}_t (z) \overline{\ph}_t (w) a_x^* a_y^* a_z a_w \\ &\leq 2 \int dx dy dz dw \, V(x-y)^2 |\dot{\ph}_t (x)|^2 |\ph_t (y)|^2 \, a_z^* a_w^* a_w a_z \\ & \lesssim \| \ph_t \|_{H^1}^2 \, \| \dot{\ph}_t \|_2^2 \, \cN^2 \leq C \cN^2 \end{split} \]
for a constant $C$ depending on $\| \ph \|_{H^2}$. Similarly, we also obtain $\dot{B}^* \dot{B} \leq C (\cN+1)^2$ (the $+1$ factor takes into account terms arising from various commutators; this computation is similar to (\ref{eq:B*B})). We conclude that
\[ \dot{\cL}_{\infty}^2 (t) \leq C (\cN +1)^2 \]
and thus, with (\ref{eq:Ninfty}),
\[ \left| \frac{d}{dt} \left\langle \cU_\infty (t;s) \psi , \cL_\infty^2 (t) \cU_\infty (t;s) \psi \right\rangle \right| \leq C e^{K |t-s|} \langle \psi, (\cN+1)^2 \psi \rangle^{1/2} \, \left\langle \cU_\infty (t;s) \psi , \cL_\infty^2 (t) \cU_\infty (t;s) \psi \right\rangle^{1/2} \]
Gronwall's inequality implies that
\[ \left\langle \cU_\infty (t;s) \psi , \cL_\infty^2 (t) \cU_\infty (t;s) \psi \right\rangle \leq C e^{K |t-s|} \left\langle \psi, \left( \cL_\infty^2 (0) + \cN^2 + 1 \right) \psi \right\rangle \]
for appropriate constants $C,K >0$. {F}rom (\ref{eq:K2L2}) and (\ref{eq:L2K2}), we conclude that
\[ \begin{split} \left\langle \cU_\infty (t;s) \psi , \cK^2 \cU_\infty (t;s) \psi \right\rangle & \leq C e^{K |t-s|} \left\langle \psi, \left( \cL_\infty^2 (0) + \cN^2 + 1 \right) \psi \right\rangle \\
& \leq C e^{K |t-s|} \left\langle \psi, \left( \cK^2 + \cN^2 + 1 \right) \psi \right\rangle
\end{split}  \]
\end{proof}

Proposition \ref{prop:2} can be used to prove properties of the Bogoliubov transformation $\Theta (t;s)$ defined in (\ref{eq:thetat}). In particular, we can now show Theorem \ref{thm:bog}.

\begin{proof}[Proof of Theorem \ref{thm:bog}]
It is proven in \cite{CLS}[Lemma 8.1] that
\[ \begin{split} P_k \, \cU^*_\infty (t;s) a^* (f) \cU_\infty (t;s) \Omega &= 0 \\ P_k \, \cU^*_\infty (t;s) a (f) \cU_\infty (t;s) \Omega &= 0  \end{split} \]
for all $k \not = 1$ (here $P_k$ denotes the orthogonal projection on the $k$-th sector $\cF_k$ of the Fock space). In other words, the vector $\cU^*_\infty (t;s) a^* (f) \cU_\infty (t;s) \Omega$ lives in the one-particle sector, for all $t,s \in \bR$. This implies that there are linear operators $U (t;s), V (t;s) : L^2 (\bR^3) \to L^2 (\bR^3)$ with
\[ \begin{split} \cU^*_\infty (t;s) a^* (f) \cU_\infty (t;s) \Omega =& a^* (U (t;s) f) \Omega \\ \cU^*_\infty (t;s) a (f) \cU_\infty (t;s) \Omega =& a^* (J V (t;s) f) \Omega \end{split} \]
Recall here that $Jf = \overline{f}$. The operators $U(t;s)$ and $V(t;s)$ are bounded because, by Proposition \ref{prop:2},
\[ \begin{split}
\| U (t;s) f \| & = \| a^* (U(t;s) f) \Omega \| = \| a^* (f) \cU_\infty (t;s) \Omega \| \leq \| f \| \, \| (\cN+1)^{1/2} \cU_\infty (t;s) \Omega \| \leq C e^{K|t-s|} \| f \| \end{split} \]
and, similarly,
\[ \begin{split} \| V (t;s) f \| &= \| a^* (J V (t;s) f) \Omega \| = \| a (f) \cU_\infty (t;s) \Omega \| \leq \| f \| \, \| \cN^{1/2} \cU_\infty (t;s) \Omega \| \leq C e^{K |t-s|} \| f \| \, . \end{split} \]
Defining the linear map $\Theta (t;s) : L^2 (\bR^3) \oplus L^2 (\bR^3) \to L^2 (\bR^3) \oplus L^2 (\bR^3)$ as in (\ref{eq:thetat}), we obtain
\begin{equation}\label{eq:AOmega} \cU^*_\infty (t;s) A(f,g) \cU_\infty (t;s) \Omega = A(\Theta (t;s) (f,g))  \Omega \end{equation}
for all $f,g \in L^2 (\bR^3)$. Note that the boundedness of $\Theta (t;s)$ follows from the boundedness of $U (t;s)$ and $V(t;s)$. We notice also that the boundedness of $U (t;s)$ and $V(t;s)$ from $H^1 (\bR^3)$ into $H^1 (\bR^3)$ follows similarly as above, but using the estimate (\ref{eq:K2infty}) for the growth of the kinetic energy.

Next, we define, for fixed $\psi \in \cD (\cK+\cN)$ (this assumption guarantees that $\cU_\infty (t;s)$ can be differentiated; see Proposition \ref{prop:GV}), $s \in \bR$, $g \in L^2 (\bR^3)$, and for any bounded operator $M$ on $\cF$, with $M \cD (\cK+\cN) \subset \cD (\cK+\cN)$,
\[ F(t) := \sum_{\sharp} \sup_{f \in L^2 (\bR^3)} \frac{\left\| \left[ \left[ \cU^*_\infty (t;s) a^{\sharp} (f) \cU_\infty (t;s) , a^{\flat} (g) \right] , M \right] \psi \right\|}{\| f \|} \]
where $a^\sharp , a^\flat$ are either annihilation or creation operators. We observe that, since $e^{-i \cK t} a^\sharp (f) e^{i\cK t} = a^\sharp (e^{it\Delta} f)$, and since $\|e^{it\Delta} f \| = \| f \|$, we also have
\[ F(t) := \sum_{\sharp} \sup_{f \in L^2 (\bR^3)} \frac{\left\| \left[ \left[ \cU^*_\infty (t;s) e^{-i \cK (t-s)} a^{\sharp} (f) \, e^{i\cK (t-s)} \, \cU_\infty (t;s) , a^{\flat} (g) \right] , M \right] \psi \right\|}{\| f \|} \]
A simple computation shows that $F(s) = 0$. Moreover, we find
\[ \begin{split} \frac{d}{dt} \,& \left[ \left[ \cU^*_\infty (t;s) e^{-i \cK (t-s)} a (f) \, e^{i\cK (t-s)} \, \cU_\infty (t;s) , a^{\flat} (g) \right] , M \right] \psi \\ &=  \left[ \left[ \cU^*_\infty (t;s) \left[ (\cL_\infty (t) - \cK), e^{-i \cK (t-s)} a (f) \, e^{i\cK (t-s)} \right] \, \cU_\infty (t;s) , a^{\flat} (g) \right] , M \right] \psi \\ &=  \left[ \left[ \cU^*_\infty (t;s) \left[ (\cL_\infty (t) - \cK), a (e^{-i\Delta (t-s)} f) \right] \, \cU_\infty (t;s) , a^{\flat} (g) \right] , M \right] \psi
\end{split} \]
Using the canonical commutation relations, and the notation $f (t) = e^{-it\Delta} f$, it is simple to check that
\[ [ \cL_\infty (t) - \cK , a(f (t-s)) ] = a ( (V*|\ph_t|^2) f(t-s) + (V*f(t-s) \overline{\ph}_t)\ph_t) + a^* (2 (V*\overline{f} (t-s) \ph_t )\ph_t ) \, . \]
Hence, we conclude that
\[ \begin{split}  & \left\| \left[ \left[ \cU^*_\infty (t;s) e^{-i \cK (t-s)} a (f) \, e^{i\cK (t-s)} \, \cU_\infty (t;s) , a^{\flat} (g) \right] , M \right] \psi \right\| \\ & \hspace{.5cm} \leq \int_s^t d\tau \, \left\| \left[ \left[ \cU^*_\infty (\tau;s) a \left( (V*|\ph_\tau|^2)f (\tau-s) + (V*f(\tau-s) \overline{\ph}_\tau)\ph_\tau \right) \, \cU_\infty (\tau;s) , a^{\flat} (g) \right] , M \right] \psi \right\| \\ & \hspace{1cm} + \int_s^t d\tau \, \left\| \left[ \left[ \cU^*_\infty (\tau;s) a^* \left(2(V*\overline{f} (\tau-s) \ph_\tau)\ph_\tau \right) \, \cU_\infty (\tau;s) , a^{\flat} (g) \right] , M \right] \psi \right\| \\ & \hspace{.5cm} \leq C \int_s^t d\tau \, \left(\| (V*|\ph_\tau|^2)f (\tau-s) + (V*f(\tau-s) \overline{\ph}_\tau)\ph_\tau \| + \| (V*\overline{f}(\tau -s )\ph_\tau)\ph_\tau \| \right) \, F (\tau)
\end{split} \]
Notice that, under the assumption $V^2 \leq D (1-\Delta)$, we find
\[ \| (V*|\ph_\tau|^2) f (\tau-s) \| \leq \| f (\tau-s) \| \, \sup_x \int dy V(x-y) |\ph_\tau (y)|^2 \leq C \| f \| \| \ph_\tau \|_{H^{1}}^2 \leq C \| f \| \]
and
\[  \| (V* f (\tau-s) \overline{\ph}_\tau) \ph_\tau \| \leq \| \ph_\tau \| \,  \sup_x \int dy \, V(x-y) |f(\tau-s, y)| | \ph_\tau (y)|
\leq C \| f \| \, \| \ph_\tau \|_{H^1}^2 \leq C \| f \| \]
for a constant $C$, independent of $t$ and $f$. Therefore
\[ \frac{\left\| \left[ \left[ \cU^*_\infty (t;s) e^{-i \cK (t-s)} a (f) \, e^{i\cK (t-s)} \, \cU_\infty (t;s) , a^{\flat} (g) \right] , M \right] \psi \right\|}{\| f \|} \leq C \int_s^t d\tau \, F (\tau) \]
The term with $a^* (f)$ instead of $a(f)$ can be bounded similarly. Therefore, taking the supremum over $f \in L^2 (\bR^3)$, we conclude that
\[ F (t) \leq \int_s^t \rd \tau F(\tau) \]
Since $F(t) \geq 0$ for all $t \in \bR$, and $F(s) = 0$, we conclude that $F(t) = 0$ for all $t \in \bR$. This proves that
\begin{equation}\label{eq:2comm} \left[ \left[ \cU_\infty^* (t;s) A(f_1,g_1) \cU_\infty (t;s) , A(f_2,g_2) \right] , M \right] = 0 \end{equation}
for every $f_1,f_2,g_1,g_2 \in L^2 (\bR^3)$, and every bounded operator $M$ on $\cF$ such that $M \cD (\cK+\cN) \subset \cD (\cK+\cN)$.
This implies that
\begin{equation}\label{eq:1comm} \langle \psi , \left[ \cU_\infty^* (t;s) A(f_1,g_1) \cU_\infty (t;s) , A(f_2, g_2) \right] \psi \rangle  = \langle \Omega , \left[ \cU_\infty^* (t;s) A(f_1,g_1) \cU_\infty (t;s) , A(f_2, g_2) \right] \Omega \rangle \end{equation}
for all $\psi \in \cD (\cK+\cN)$ with $\| \psi \| =1$. This follows because, from (\ref{eq:2comm}),
\[ \left[ \left[ \cU_\infty^* (t;s) A(f_1,g_1) \cU_\infty (t;s) , A(f_2, g_2) \right] , P_\psi \right] = \left[ \left[ \cU_\infty^* (t;s) A(f_1,g_1) \cU_\infty (t;s) , A (f_2, g_2) \right] , P_\Omega \right] = 0 \]
where $P_\psi, P_\Omega$ denote the orthogonal projections into $\psi$ and, respectively, $\Omega$. Therefore,
\[ \begin{split}\langle \psi, \left[ \cU_\infty^* (t;s) A(f_1,g_1) \cU_\infty (t;s) , A(f_2, g_2) \right] \Omega \rangle &= \langle \psi, \left[ \cU_\infty^* (t;s) A(f_1,g_1) \cU_\infty (t;s) , A(f_2, g_2)  \right] P_\psi \Omega \rangle \\ &= \langle \psi, \left[ \cU_\infty^* (t;s) A(f_1,g_1) \cU_\infty (t;s) , A(f_2, g_2) \right] \psi \rangle \, \langle \psi, \Omega \rangle \end{split} \]
and, similarly,
\[ \langle \psi, \left[ \cU_\infty^* (t;s) A(f_1,g_1) \cU_\infty (t;s) , A(f_2,g_2) \right] \Omega \rangle =  \langle \psi, \Omega \rangle \,\langle \Omega, \left[ \cU_\infty^* (t;s) A(f_1,g_1) \cU_\infty (t;s) , A(f_2,g_2) \right] \Omega \rangle \]
If $\langle \psi, \Omega \rangle \not = 0$, this implies immediately (\ref{eq:1comm}). If $\langle \psi, \Omega \rangle = 0$, (\ref{eq:1comm}) follows repeating the argument with $\wt{\psi} = 2^{-1/2} (\psi + \Omega)$.
Combining (\ref{eq:1comm}) with (\ref{eq:AOmega}), we find
\[\begin{split} \langle \psi, \left[ \cU_\infty^* (t;s) A(f_1,g_1) \cU_\infty (t;s) , A(f_2,g_2)  \right] \psi \rangle &= \langle \Omega, \left[ A (\Theta (t;s) (f_1, g_1)) , A (f_2, g_2) \right] \Omega \rangle\\ & = \left( \Theta (t;s) (f_1, g_1) , S (f_2, g_2) \right)_{L^2 \oplus L^2} \end{split} \]
where $S$ is defined as in (\ref{eq:bog22}). Hence
\begin{equation}\label{eq:comm} \left[ \cU_\infty^* (t;s) A(f_1,g_1) \cU_\infty (t;s) - A(\Theta (t;s) (f_1, g_1)) , A(f_2,g_2)  \right] = 0 \end{equation}
for all $f_1, g_1,f_2,g_2 \in L^2 (\bR^3)$. Let $R = \cU_\infty^* (t;s) A(f_1,g_1) \cU_\infty (t;s) - A(\Theta (t;s) (f_1, g_1))$. We already proved in (\ref{eq:AOmega}) that $R \Omega = 0$, and in (\ref{eq:comm}) that $R$ commutes with any creation and annihilation operator. Since states of the form $a^* (h_1) \dots a^* (f_n) \Omega$ form a basis for $\cF$, this implies that $R =0$, and therefore that
\[  \cU_\infty^* (t;s) A(f,g) \cU_\infty (t;s) = A(\Theta (t;s) (f, g)) \]
for all $f,g \in L^2 (\bR^3)$. {F}rom
\[ \begin{split}
\left( A (\Theta (t;s) (f,g)) \right)^* &= \left(   \cU_\infty^* (t;s) A (f,g) \cU_\infty (t;s) \right)^* \\ &=  \cU_\infty^* (t;s) A^* (f,g) \cU_\infty (t;s) \\ &=
\cU_\infty^* (t;s) A (Jg,Jf) \cU_\infty (t;s) = A(\Theta (t;s) (Jg,Jf))
\end{split} \]
we deduce (\ref{eq:bog1}). The property (\ref{eq:bog2}), on the other hand, follows because
\[ \begin{split}
\left[ A (\Theta (t;s) (f_1, g_1)) , A (\Theta (t;s) (f_2, g_2)) \right]  & = \left[  \cU_\infty^* (t;s) A (f_1,g_1) \cU_\infty (t;s) , \cU_\infty^* (t;s) A (f_2,g_2) \cU_\infty (t;s) \right] \\ & = \cU_\infty^* (t;s)
\left[ A (f_1, g_1) , A (f_2, g_2) \right] \cU_\infty (t;s) \\ & = \left( (f_1, g_1), S (f_2 , g_2) \right)_{L^2 \oplus L^2} \end{split} \]

Finally, to prove (\ref{eq:Ttph}), we observe that
\begin{equation}\label{eq:Thetaph} A (\Theta (t;s) (\ph_t , \overline{\ph}_t)) = \cU^*_\infty (t;s) A (\ph_t , \overline{\ph}_t) \cU_\infty (t;s) \end{equation}
We have
\[ \begin{split} i \frac{d}{dt} \cU^*_\infty (t;s) A (\ph_t , \overline{\ph}_t) \cU_\infty (t;s) = \; &- \cU^*_\infty (t;s) \left[ \cL_\infty (t) , A (\ph_t , \overline{\ph}_t) \right] \cU_\infty (t;s) \\ &+ \cU^*_\infty (t;s) A (i \dot{\ph}_t, \overline{-i\dot{\ph}_t} ) \cU_\infty (t;s) \end{split} \]
A simple computation shows that
\[ \left[ \cL_\infty (t) , A (\ph_t, \overline{\ph}_t) \right] = A(-\Delta \ph_t + (V*|\ph_t|^2) \ph_t , \Delta \overline{\ph}_t - (V*|\ph_t|^2)\overline{\ph}_t ) \]
and therefore that
\[ \frac{d}{dt}\, \cU^*_\infty (t;s) A (\ph_t , \overline{\ph}_t) \cU_\infty (t;s) = 0 \, .\]
Hence
\[ \cU^*_\infty (t;s) A (\ph_t , \overline{\ph}_t) \cU_\infty (t;s) = A (\ph_s , \overline{\ph}_s) \]
and thus, from (\ref{eq:Thetaph}),
\[ \Theta (t;s) (\ph_t, \overline{\ph}_t) = (\ph_s , \overline{\ph}_s) \] \end{proof}

Next, we control the growth of the number of particles operator and of the kinetic energy operator with respect to the full fluctuation evolution $\cU (t;s)$, defined in (\ref{eq:cU}).
\begin{proposition}\label{prop:1}
For every $j \in \bN$, there exist constants $C, K >0$ (depending only on $j$, on $D$ and on $\| \ph \|_{H^1}$) such that
\begin{equation}\label{eq:prop1-N} \left\langle \cU (t;s) \psi , \cN^j \, \cU (t;s) \psi \right\rangle \leq C e^{K |t-s|} \left\langle \psi, (\cN +1)^{2j+2} \, \psi \right\rangle \end{equation}
for all $\psi \in \cF$, and all $t,s \in \bR$. Moreover, there are constants $C,K$ (depending only on $D$ and on $\| \ph \|_{H^2}$) such that
\begin{equation}\label{eq:prop1-K} \left\langle \cU (t;s) \psi , \cK \, \cU (t;s) \psi \right\rangle \leq C e^{K |t-s|} \left\langle \psi, (\cK + \cN^8 +1) \, \psi \right\rangle \end{equation}
\end{proposition}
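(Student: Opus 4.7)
The plan is to run a Gronwall argument for $f_j(t) := \langle \cU(t;s)\psi, \cN^j \cU(t;s)\psi\rangle$ and, in a second step, for $g(t):=\langle \cU(t;s)\psi, \cK \, \cU(t;s)\psi\rangle$, paralleling the proof of Proposition~\ref{prop:2} but now carrying the two extra terms of $\cL(t)$ that vanish in the limit, namely
\[ \cL_3(t) = \frac{1}{\sqrt{N}} \int dx dy \, V(x-y) \, a_x^* \bigl( a_y^* \ph_t (y) + a_y \overline{\ph}_t (y) \bigr) a_x, \qquad \cL_4(t) = \frac{1}{2N} \int dx dy \, V(x-y) \, a_x^* a_y^* a_y a_x . \]
First I would differentiate: $\tfrac{d}{dt} f_j(t) = i\langle \cU(t;s)\psi, [\cL(t),\cN^j] \cU(t;s)\psi\rangle$. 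The operator $\cL_4(t)$ and the three number-conserving quadratic pieces of $\cL_\infty(t)$ commute with $\cN$ and drop out. The pair creation/annihilation piece of $\cL_\infty(t)$ changes particle number by $\pm 2$, so the commutator can be bounded by $C(\cN+1)^j$ exactly as in Proposition 5.1 of \cite{CLS}, using $V^2 \leq D(1-\Delta)$ together with $\|\ph_t\|_{H^1} \leq C\|\ph\|_{H^1}$ from (\ref{eq:phHs}).

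The key new contribution is $[\cL_3(t), \cN^j]$. Since $\cL_3(t)$ shifts particle number by $\pm 1$, one has $[\cL_3(t),\cN^j]= \cL_3(t) p_j(\cN)$ plus its adjoint, with $p_j$ a polynomial of degree $j-1$. Estimating the resulting integrals by Cauchy-Schwarz, and exploiting $V^2 \leq D(1-\Delta)$ to trade the singularity of $V$ for a gradient acting on $\ph_t$, produces a bound
\[ \tfrac{1}{\sqrt N} |\langle \cdot, [\cL_3(t),\cN^j] \cdot\rangle| \;\leq\; \tfrac{C}{\sqrt N} \langle \cdot, (\cN+1)^{j+1/2}(\cN+1)^{1/2} \cdot \rangle \;\leq\; C \langle \cdot,(\cN+1)^j \cdot\rangle + \tfrac{C}{N} \langle \cdot,(\cN+1)^{j+1} \cdot\rangle . \]
The first summand is exactly what Gronwall consumes; the second is absorbed by bootstrapping on $f_{j+1}(t)$. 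Iterating the bootstrap from $j$ up to $2j+2$ (where the $1/N$ factor finally beats the polynomial blow-up) yields (\ref{eq:prop1-N}).

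For (\ref{eq:prop1-K}), I would first prove the operator bound $\cK \leq \cL(t) + C(\cN+1)^2 + CN^{-1}\cN^2$ by repeating the splitting used in (\ref{eq:K2L2}) and controlling $\cL_3(t)$ by $N^{-1/2} (\cN+1)^{3/2}$ via Cauchy-Schwarz, and then the symmetric inequality $\cL(t) \leq \cK + C(\cN+1)^2 + CN^{-1}\cN^2$. Because $i\partial_t \cU(t;s) = \cL(t)\cU(t;s)$,
\[ \tfrac{d}{dt} \langle \cU(t;s)\psi,\cL(t)\cU(t;s)\psi\rangle = \langle \cU(t;s)\psi,\dot{\cL}(t)\cU(t;s)\psi\rangle , \]
and $\dot{\cL}(t)$ has the same algebraic structure as $\cL(t)$ but with one factor of $\ph_t$ replaced by $\dot{\ph}_t$; by (\ref{eq:dotph}) every coefficient is still controlled by $\|\ph\|_{H^2}$ and $D$. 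The same kind of estimates used for $\cL_\infty$ in Proposition~\ref{prop:2} then give $|\langle \cdot,\dot{\cL}(t) \cdot\rangle| \leq C\langle \cdot,(\cK+(\cN+1)^2)\cdot\rangle + CN^{-1/2}\langle \cdot,(\cN+1)^3\cdot\rangle$. Feeding in (\ref{eq:prop1-N}) to control $\langle \cU(t;s)\psi,\cN^k \cU(t;s)\psi\rangle$ by $\langle \psi, (\cN+1)^{2k+2}\psi\rangle$ for the relevant $k\leq 3$, Gronwall closes on $g(t) + C\langle \cU(t;s)\psi,(\cN+1)^2 \cU(t;s)\psi\rangle$ and, tracing back through the $2k+2$ loss of the number-operator bound, yields the exponent $\cN^8$ on the right-hand side of (\ref{eq:prop1-K}).

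The main obstacle is precisely the cubic term $\cL_3(t)$: it is the only piece of $\cL(t)$ that neither commutes with $\cN$ nor can be made to look like $\cL_\infty(t)$, and its Cauchy-Schwarz estimate unavoidably costs an extra half power of $\cN$. Propagating this loss through the bootstrap is what forces the high polynomial exponents on the initial data and must be done carefully enough that the $N^{-1/2}$ prefactor is not swallowed before the final application of Gronwall.
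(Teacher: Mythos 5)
Your treatment of the two inequalities is asymmetric in quality, so let me separate them.

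For the kinetic estimate (\ref{eq:prop1-K}) your plan coincides with what the paper does: compare $\cK$ to the generator $\cL(t)$ via an operator inequality of Lemma~\ref{lm:KHN} type, run Gronwall on $\langle \cU(t;s)\psi, \cL(t)\cU(t;s)\psi\rangle$ using $\partial_t\langle \cU\psi,\cL(t)\cU\psi\rangle=\langle\cU\psi,\dot\cL(t)\cU\psi\rangle$, and then feed (\ref{eq:prop1-N}) in to convert the $\cN$-contributions into weights on the initial data. The error terms you write down (e.g.\ $(\cN+1)^2+N^{-1}\cN^2$ rather than the paper's $(\cN+1)(1+\cN^2/N^2)$) are not literally what Lemma~\ref{lm:KHN} gives, because the paper's operator Cauchy--Schwarz on $\cL_3$ channels the singularity of $V$ into the quartic piece $\cW$, which can only be controlled by $\delta(\cK+\cN)+\cN^3/(\delta N^2)$; you instead pair $V$ with $\ph_t$ and pay only $\cN^2/N$. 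Both routes are legitimate as quadratic-form estimates, and both lead to $\cN^8$ after inserting (\ref{eq:prop1-N}) with the appropriate $j$. So this half is essentially the paper's argument with slightly different internal bookkeeping.

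For the number-operator estimate (\ref{eq:prop1-N}) there is a genuine gap. The paper does not prove it; it cites \cite{CLS}[Prop.\ 5.1], and the mechanism used there (and reproduced in the paper's Lemma~\ref{lm:NKN}) is the Weyl conjugation identity $W^*(\sqrt N\ph_t)\cN W(\sqrt N\ph_t)=\cN-\sqrt N\,A(\ph_t,\bar\ph_t)+N$ combined with $[\cN,\cH_N]=0$, which turns the growth of $\cN^j$ under $\cU(t;s)$ into a \emph{finite} induction in $j$. Your proposal is a direct Gronwall bootstrap on $f_j(t)=\langle \cU(t;s)\psi,\cN^j\cU(t;s)\psi\rangle$. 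The differential inequality you derive, $\tfrac{d}{dt}f_j\leq Cf_j+\tfrac{C}{N}f_{j+1}$, is correct, but it is an open-ended hierarchy: iterating it never produces a closed estimate, because you never have an a priori, $N$-uniform bound on $f_{j+M}(t)$ that decays fast enough in $M$ to kill the accumulating $(Ct/N)^M$ prefactor. The remark that ``the $1/N$ factor finally beats the polynomial blow-up'' at level $2j+2$ presupposes that $\cN$ is comparable to $N$ along the evolution, but $\cU(t;s)$ does not preserve any such sector, and the target estimate is supposed to hold for arbitrary $\psi\in\cF$. This is exactly the obstruction the Weyl-conjugation trick is designed to avoid, and without it (or an equivalent device) the bootstrap does not close. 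Your statement that only $2j+2$ iterations are needed therefore has no support: the exponent $2j+2$ is an output of the conjugation/induction argument, not of a truncated Gronwall cascade.

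Two smaller points. First, the conclusion of (\ref{eq:prop1-N}) depends only on $\|\ph\|_{H^1}$ and $D$, whereas (\ref{eq:prop1-K}) needs $\|\ph\|_{H^2}$; your sketch blurs this, since the time-derivative of the cubic piece of $\cL(t)$ involves $\dot\ph_t$ and hence (\ref{eq:dotph}). Second, when you control $\cL_3$ by Cauchy--Schwarz, be explicit about whether you are asserting an operator inequality or merely a sesquilinear-form estimate; only the latter follows from the pairing $V^2*|\ph_t|^2\le D\|\ph_t\|_{H^1}^2$ you implicitly use, and that is all the Gronwall step needs, but the distinction matters if you try to use the same bound as an operator inequality to prove your version of $\cK\leq\cL(t)+\ldots$.
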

\begin{proof}
Eq. (\ref{eq:prop1-N}) is taken from \cite{CLS}[Proposition 5.1]. We show next the bound (\ref{eq:prop1-K}) on the growth of the kinetic energy. To this end, we remark that, by Lemma \ref{lm:KHN}, there exists a constant $C$ (depending on $D$, $\| \ph \|_{H^1}$) such that
\begin{equation} \label{eq:prop11} \begin{split} \left\langle \cU (t;s) \psi , \cK \, \cU (t;s) \psi \right\rangle \leq \; & 2 \left\langle \cU (t;s) \psi , \cL(t) \, \cU (t;s) \psi\right\rangle + C \left\langle \cU (t;s) \psi , (\cN+1) \left( 1+ \frac{\cN^2}{N^2} \right) \cU (t;s) \psi \right\rangle \\ \leq \; & 2 \left\langle \cU (t;s) \psi , \cL(t) \, \cU (t;s) \psi\right\rangle + C e^{K|t-s|} \left\langle \psi , (\cN+1)^8 \psi \right\rangle \end{split}\end{equation}
In the second inequality, we used (\ref{eq:prop1-N}). To bound the first term, we observe that
\[ \frac{d}{dt} \left\langle \cU (t;s) \psi , \cL(t) \, \cU (t;s) \psi\right\rangle = \left\langle \cU (t;s) \psi , \dot{\cL}(t) \, \cU (t;s) \psi\right\rangle \]
where
\begin{equation}\label{eq:cLdot} \begin{split}
\dot{\cL} (t) = \; & \int dx \, \left(V* (\dot{\ph}_t \overline{\ph}_t + \ph_t \dot{\overline{\ph}}_t ) \right) (x) \, a^*_x a_x \\ &+\int dx dy \, V(x-y) \left( \dot{\ph}_t (x) \overline{\ph}_t (y) + \ph_t (x) \dot{\overline{\ph}}_t (y) \right) \, a_x^* a_y \\ & +2 \int dx dy \, V(x-y) \left( \dot{\ph}_t (x) \ph_t (y) \, a_x^* a_y^* + \dot{\overline{\ph}}_t (x) \overline{\ph}_t (y) a_x a_y \right) \\
&+ \frac{1}{\sqrt{N}} \int dx dy \, V(x-y) a_x^* \left( \dot{\ph}_t (y) a^*_y + \dot{\overline{\ph}}_t (y) a_y \right) a_x
\end{split} \end{equation}
To bound the last term, we observe that, for any $\psi \in \cF$,
\[ \begin{split}
\Big| \frac{1}{\sqrt{N}} \int dx dy \, V(x-y) &\dot{\ph}_t (y) \, \left\langle \psi, a_x^*  a^*_y  a_x \psi \right\rangle \Big| \\ \leq \; &
\frac{1}{\sqrt{N}} \int dx dy \, |V(x-y)| \, |\dot{\ph}_t (y)| \, \| a_x a_y \psi \| \, \| a_x \psi \| \\
\leq \; & \int dx dy \, |V (x-y)| |\dot{\ph}_t (y)|^2 \| a_x \psi \|^2 + \frac{1}{N} \int dx dy \, |V(x-y)| \, \| a_y a_x \psi \|^2 \\ \leq\; &
\| \dot{\ph}_t \|^2  \, \| \cK^{1/2} \psi \|^2 + \left\langle \psi,
\left( \cK + C \cN \left(1+ \frac{\cN^2}{N^2} \right) \right) \psi \right\rangle \\
\leq \; &C \left\langle \psi,
\left( \cK + C \cN \left(1+ \frac{\cN^2}{N^2} \right) \right) \psi \right\rangle
\end{split} \]
for a constant $C$ depending on $\| \ph \|_{H^2}$. Here we used the bounds (\ref{eq:cW}) and, in the last step, (\ref{eq:dotph}). The other terms on the r.h.s. of (\ref{eq:cLdot}) are just the time derivative $\dot{\cL}_\infty (t)$ of the generator $\cL_\infty (t)$ of the limiting dynamics. {F}rom the proof of Proposition \ref{prop:2}, we have $\dot{\cL}_\infty (t) \leq C (\cN+1)$. Hence, we conclude that
\[ \begin{split}
\left| \frac{d}{dt} \left\langle \cU (t;s) \psi , \cL(t) \, \cU (t;s) \psi\right\rangle  \right| \leq \; & C \left\langle \cU (t;s) \psi ,
\left( \cK + (\cN + 1) \left(1+ \frac{\cN^2}{N^2} \right) \right) \, \cU (t;s) \psi\right\rangle \\ \leq \; &  C \left\langle \cU (t;s) \psi ,
\left( \cL (t) + (\cN + 1) \left(1+ \frac{\cN^2}{N^2} \right) \right) \, \cU (t;s) \psi\right\rangle \\ \leq \; &  C \left\langle \cU (t;s) \psi , \cL (t) \, \cU (t;s) \psi\right\rangle + C e^{K |t-s|} \langle \psi , (\cN+1)^8 \psi \rangle \end{split} \]
From Gronwall lemma, we find that
\[ \begin{split}
\left| \left\langle \cU (t;s) \psi , \cL(t) \, \cU (t;s) \psi\right\rangle \right| \leq C e^{K |t-s|} \langle \psi, \left( \cK + \cN^8 + 1 \right) \psi \rangle \end{split} \]
Inserting this estimate in (\ref{eq:prop11}), we obtain (\ref{eq:prop1-K}).
\end{proof}

Compared with Proposition \ref{prop:2}, in Proposition \ref{prop:1} we do not have control on the growth of the square of the kinetic energy operator. Due to the cubic and quartic terms in the generator of $\cU (t;s)$, it is not clear whether a bound of the form (\ref{eq:K2infty}) holds with $\cU_\infty (t;s)$ replaced by $\cU (t;s)$. Fortunately, we only need some control on the growth of products of the form $(\cN+1)^j \cK (\cN + 1)^j$. For these operators, weak $N$-dependent bounds are established in the next lemma going back to the original definition of the fluctuation evolution $\cU (t;s) = W^* (\sqrt{N} \ph_t) e^{-i\cH_N (t-s)} W (\sqrt{N} \ph_s)$ (stronger, $N$-independent, bounds for the growth of operators of the form $(\cN+1)^j \cK (\cN + 1)^j$ with respect to the limiting dynamics $\cU_\infty (t;s)$ follow from Proposition \ref{prop:2} since, by Cauchy-Schwarz, $(\cN+1)^j \cK (\cN + 1)^j \leq \cK^2 + (\cN+1)^{4j}$).
\begin{lemma}\label{lm:NKN}
For every $j\in \bN$, there exists a constant $C$ such that
\[ \begin{split}  \langle &\psi , \cU^* (t;s) (\cN +1)^j \, ( \cK +1) \, (\cN +1)^j \cU (t;s) \psi \rangle \\ &\leq C \left[ (N+1)^{2j} \, \langle \psi, \cU^* (t;s) \, \cK \, \cU (t;s) \psi \rangle + \langle \psi, (\cN + N +1)^j \, \left( \cK + \cN^2 + 1 \right) \, (\cN + N +1)^j \, \psi \rangle \right] \end{split} \]
\end{lemma}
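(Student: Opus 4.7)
Plan:

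The plan is to exploit the factorization $\cU(t;s) = W^*(\sqrt{N}\ph_t)\,e^{-i\cH_N(t-s)}\,W(\sqrt{N}\ph_s)$, together with the commutations $[\cN,\cK]=0=[\cN,\cH_N]$ and the Weyl-shift formulas. Since $\cK$ and $\cN$ commute, we first split
\[ \cU^*(t;s)(\cN+1)^j(\cK+1)(\cN+1)^j\cU(t;s) = \cU^*(t;s)\,\cK\,(\cN+1)^{2j}\,\cU(t;s) \;+\; \cU^*(t;s)(\cN+1)^{2j}\cU(t;s). \]
Because $\cN+1$, $N+1$ and $\cN-N$ pairwise commute (and $\cK$ commutes with each), the scalar convexity inequality $(a+b)^{2j}\le 2^{2j-1}(a^{2j}+b^{2j})$ applied through the functional calculus of $\cN$ yields the operator inequality
\[ (\cN+1)^{2j} \le 2^{2j-1}\bigl[(N+1)^{2j} + (\cN-N)^{2j}\bigr]. \]
Multiplying by $\cK\ge 0$ and taking expectation in $\cU(t;s)\psi$ produces the first term $C(N+1)^{2j}\langle\psi,\cU^*(t;s)\cK\cU(t;s)\psi\rangle$ of the claim exactly, plus a remainder $C\|\cK^{1/2}(\cN-N)^j\cU(t;s)\psi\|^2$; the pure number piece $\cU^*(t;s)(\cN+1)^{2j}\cU(t;s)$ is treated analogously, leaving $C\|(\cN-N)^j\cU(t;s)\psi\|^2$. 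Both remainders must now be absorbed by the second term on the right-hand side.

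To handle the remainders, set $\Phi = e^{-i\cH_N(t-s)}W(\sqrt{N}\ph_s)\psi$, so that $\cU(t;s)\psi = W^*(\sqrt{N}\ph_t)\Phi$. Iterating the Weyl-shift identity $(\cN-N)W^*(\sqrt{N}\ph_t) = W^*(\sqrt{N}\ph_t)\bigl[\cN-\sqrt{N}\phi(\ph_t)\bigr]$, with $\phi(f)=a(f)+a^*(f)$, gives
\[ (\cN-N)^j\cU(t;s)\psi = W^*(\sqrt{N}\ph_t)\bigl[\cN-\sqrt{N}\phi(\ph_t)\bigr]^j\Phi. \]
Conjugation of $\cK$ by the Weyl operator provides the identity $W(\sqrt{N}\ph_t)\cK W^*(\sqrt{N}\ph_t) = \cK - \sqrt{N}\phi(-\Delta\ph_t) + N\|\nabla\ph_t\|^2$, and the bound $\phi(f)^2\le C\|f\|^2(\cN+1)$ together with Young's inequality turn this into the operator estimate $W(\sqrt{N}\ph_t)\cK W^*(\sqrt{N}\ph_t)\le \cK + C(\cN+N+1)$, with analogous bounds for the purely number-operator part. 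Expanding $[\cN-\sqrt{N}\phi(\ph_t)]^j$ monomial by monomial and using $\phi(\ph_t)^2\le C(\cN+1)$ one finds that each monomial of total degree $j$ is dominated by a constant times $(\cN+N+1)^j$, so the $\cN$-type remainder is bounded by $C\langle\Phi,(\cN+N+1)^{2j+1}\Phi\rangle$; conservation of $\cN$ by $\cH_N$ and one further Weyl shift back to $\psi$ then give $\le C\langle\psi,(\cN+N+1)^{2j+1}\psi\rangle$, which is absorbed into the $(\cN^2+1)(\cN+N+1)^{2j}$ contribution of the right-hand side.

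The truly kinetic remainder is $\|\cK^{1/2}[\cN-\sqrt{N}\phi(\ph_t)]^j\Phi\|^2$. After the same monomial expansion and repeated use of $\phi(\ph_t)^2\le C(\cN+1)$, it reduces to a sum of expressions of the form $\|\cK^{1/2}(\cN+1)^k\Phi\|^2 = \langle\Phi,\cK(\cN+1)^{2k}\Phi\rangle$ with $k\le j$. Since $\cK$ does not commute with $\cH_N$, the final reduction from $\Phi$ back to $\psi$ requires Lemma~\ref{lm:KHN} in its two-sided form $\cK\le \cL(t)+C(\cN+1)(1+\cN^2/N^2)$ and its converse: this allows one to replace $\cK$ on $\Phi$ by $\cL(t)$, and then (pulling through the Weyl shift by $W(\sqrt{N}\ph_s)$) by $\cH_N$ on the shifted state $W(\sqrt{N}\ph_s)\psi$, up to lower-order $(\cN+1)(1+\cN^2/N^2)$ corrections that are controlled by Proposition~\ref{prop:1}. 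Since $[\cH_N,\cN]=0$, one final Weyl shift gives a bound by $C\|\cK^{1/2}(\cN+N+1)^j\psi\|^2 + C\langle\psi,(\cN+N+1)^{2j+2}\psi\rangle$, both absorbed into $\langle\psi,(\cN+N+1)^j(\cK+\cN^2+1)(\cN+N+1)^j\psi\rangle$. The main obstacle is the non-commutativity of $\cK$ with $\phi(\ph_t)$ and with $\cH_N$, and the careful bookkeeping of the $1/\sqrt{N}$ and $1/N$ factors from the cubic and quartic pieces of the interaction in $\cL(t)$; this is precisely what forces the explicit factor $(N+1)^{2j}$ in front of $\langle\psi,\cU^*(t;s)\cK\cU(t;s)\psi\rangle$, instead of the $N$-independent constant one would obtain for the analogous bound with respect to the limiting dynamics $\cU_\infty(t;s)$.
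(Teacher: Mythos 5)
Your approach is genuinely different from the paper's, and I believe it can be made to work, but two of its steps are more delicate than your sketch suggests. The paper's proof is an induction on $j$: at each step it conjugates one factor of $(\cN+1)$ through the Weyl operators using the shift rule (\ref{eq:shift}), splits the result into three terms by Cauchy--Schwarz, and controls each via Lemmas \ref{lm:phiNK} and \ref{lm:KHN}; no global monomial expansion is ever performed. Your plan instead isolates the scalar piece $(N+1)^{2j}$ at the start through the operator inequality $(\cN+1)^{2j}\le 2^{2j-1}\bigl[(N+1)^{2j}+(\cN-N)^{2j}\bigr]$, which very cleanly produces the term $C(N+1)^{2j}\langle\psi,\cU^*\cK\,\cU\psi\rangle$, and then reduces the remainder to $\|\cK^{1/2}[\cN-\sqrt{N}\phi(\ph_t)]^j\Phi\|^2$ via the Weyl shift. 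However, the statement that ``repeated use of $\phi(\ph_t)^2\le C(\cN+1)$'' reduces this to $\|\cK^{1/2}(\cN+1)^k\Phi\|^2$ glosses over the same non-commutativity of $\cK$ with $a^\sharp(\ph_t)$ that makes the paper's proof delicate; the bound $\phi^2\le C(\cN+1)$ alone cannot move $\phi$ factors past $\cK^{1/2}$, and you really need the full strength of Lemma \ref{lm:phiNK} (indeed something like $\phi\,\cN^m\cK\cN^m\phi\le C\,\cN^{m+1/2}(\cK+1)\cN^{m+1/2}$), applied iteratively, to carry out your monomial expansion --- so you end up doing a finite induction of the same flavor as the paper's anyway, and moreover a second round of it when passing from $W(\sqrt N\ph_s)\psi$ back to $\psi$ since that Weyl shift reintroduces both $\phi(\ph_s)$ and $\phi(-\Delta\ph_s)$ factors. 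Finally, the appeal to Proposition \ref{prop:1} at the very end is superfluous and would be counterproductive: the number-operator corrections on $\Phi=e^{-i\cH_N(t-s)}W(\sqrt N\ph_s)\psi$ are handled purely by the conservation $[\cN,\cH_N]=0$ together with a Weyl shift, and literally invoking (\ref{eq:prop1-N}) would bring in the exponential $e^{K|t-s|}$ that the lemma's constant $C$ is explicitly free of. With those two repairs your argument goes through; as written it trades the paper's induction for a different, arguably more transparent organization of essentially the same estimates.
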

\begin{proof}
We proceed by induction over $j \in \bN$. For $j=0$, the claim is clear. Now we assume it to hold for $j= n-1$, and we prove if for $j=n$. To this end, we observe that, from (\ref{eq:shift}),
\begin{equation}\label{eq:WNW}
W^* (\sqrt{N} \ph_t ) \cN W (\sqrt{N} \ph_t) = (\cN - \sqrt{N} A(\ph_t , \overline{\ph}_t) + N)
\end{equation}
where we recall the notation $A(f,g) = a(f) + a^*(\overline{g})$. Therefore, fixing for simplicity $s=0$, we find
\[ \begin{split}
\Big\langle \psi, &\cU^* (t;0) \, (\cN+1)^j \, (\cK+1) \, (\cN +1)^j \, \cU (t;0) \psi \Big\rangle \\ = \; & \left\langle \psi, W^* (\sqrt{N} \ph) e^{i\cH_N t} \, W (\sqrt{N} \ph_t) \, (\cN+1)^j \, (\cK+1) \, (\cN +1)^j \, W^* (\sqrt{N} \ph_t) \, e^{i \cH_N t} \, W (\sqrt{N} \ph) \psi \right\rangle \\ =\; & \Big\langle \psi, W^* (\sqrt{N} \ph) e^{i\cH_N t} \, \left(\cN - \sqrt{N} A(\ph_t,\overline{\ph}_t) + N \right) W (\sqrt{N} \ph_t) \, (\cN+1)^{j-1} \, (\cK+1) \, (\cN +1)^{j-1} \, \\ & \hspace{5cm} \times W^* (\sqrt{N} \ph_t) \left(\cN -\sqrt{N} A(\ph_t , \overline{\ph}_t) + N \right) \, e^{i \cH_N t} \, W (\sqrt{N} \ph) \psi \Big\rangle
\end{split} \]
By Cauchy-Schwarz, we have
\begin{equation}\label{eq:KNK-1} \begin{split}
\Big\langle \psi, &\cU^* (t;0) \,  (\cN+1)^j \, (\cK+1) \, (\cN +1)^j \, \cU (t;0) \psi \Big\rangle \\ \lesssim \; & \left\langle \psi, W^* (\sqrt{N} \ph) e^{i\cH_N t} \, \cN  W (\sqrt{N} \ph_t) \, (\cN+1)^{j-1} \, (\cK+1) \, (\cN +1)^{j-1} \right. \\ &\hspace{6cm} \left. \times  W^* (\sqrt{N} \ph_t) \cN \, e^{i \cH_N t} \, W (\sqrt{N} \ph) \psi \right\rangle \\ &+ N \Big\langle \psi, W^* (\sqrt{N} \ph) e^{i\cH_N t} \, A(\ph_t , \overline{\ph}_t) \,  W (\sqrt{N} \ph_t) \, (\cN+1)^{j-1} \, (\cK+1) \, (\cN +1)^{j-1} \\ &\hspace{6cm} \times  W^* (\sqrt{N} \ph_t) A (\ph_t , \overline{\ph}_t)  \, e^{i \cH_N t} \, W (\sqrt{N} \ph) \psi \Big\rangle \\ & +N^2 \left\langle \psi, \cU^* (t;0) \, (\cN+1)^{j-1} \, (\cK+1) \, (\cN +1)^{j-1} \, \cU (t;0) \psi \right\rangle \\ = \; & \text{I} + \text{II} + \text{III}
\end{split} \end{equation}
In the last term, we can directly apply the induction assumption. We find
\begin{equation}\label{eq:III}
\begin{split}
\text{III} \leq C N^2 \Big[ &(N+1)^{2(j-1)} \, \langle \psi, \cU (t;s) \, \cK \, \cU (t;s) \psi \rangle \\ &+ \langle \psi, (\cN + N +1)^{j-1} \, \left(\cK + \cN^2 + 1 \right) \, (\cN + N +1)^{j-1} \, \psi \rangle \Big] \\ \leq C \Big[ (N&+1)^{2j}  \, \langle \psi, \cU (t;s) \, \cK \, \cU (t;s) \psi \rangle + \langle \psi, (\cN + N +1)^{j} \, \left( \cK + \cN^2 + 1 \right) \, (\cN + N +1)^{j} \, \psi \rangle \Big] \end{split}\end{equation}
To bound the first term on the r.h.s. of (\ref{eq:KNK-1}), on the other hand, we use the fact that $\cN$ commutes with $\cH_N$. {F}rom (\ref{eq:WNW}), we find
\begin{equation}\label{eq:NKNI} \begin{split}
\text{I} = \; &\left\langle (\cN + \sqrt{N} A (\ph,\overline{\ph}) + N) \psi,  \, \cU^* (t;0) \, (\cN+1)^{j-1} \, (\cK+1) \right. \\ & \left. \hspace{5cm} \times  (\cN +1)^{j-1} \, \cU (t;0) (\cN + \sqrt{N} A (\ph,\overline{\ph}) + N) \, \psi \right\rangle  \\ \lesssim \; &\left\langle \cN \psi ,  \, \cU^* (t;0) \, (\cN+1)^{j-1} \, (\cK+1) \, (\cN +1)^{j-1} \, \cU (t;0) \, \cN \, \psi \right\rangle \\ &+ N \left\langle A (\ph,\overline{\ph}) \psi,  \, \cU^* (t;0) \, (\cN+1)^{j-1} \, (\cK+1) \, (\cN +1)^{j-1} \, \cU (t;0) \, A(\ph,\overline{\ph}) \psi \right\rangle \\
&+ N^2 \left\langle  \psi,  \, \cU^* (t;0) \, (\cN+1)^{j-1} \, (\cK+1) \, (\cN +1)^{j-1} \, \cU (t;0) \, \psi \right\rangle
\\ = \; & \text{A}_1 + \text{A}_2 + \text{A}_3 \end{split} \end{equation}
The term $\text{A}_3$ can be handled like the term $\text{III}$ in (\ref{eq:III}). To bound $\text{A}_1$, we use the induction assumption. We find
\begin{equation}\label{eq:IA1} \begin{split}
\text{A}_1 \leq \; & C \Big[ (N+1)^{2(j-1)} \, \left\langle \cN \psi , \cU^* (t;0) \, \cK \, \cU (t;0) \, \cN \, \psi \right\rangle \\ &\hspace{1cm}+ \left\langle \cN \psi, (\cN+N+1)^{j-1} (\cK + \cN^2 + 1) (\cN+N+1)^{j-1} \cN \psi \right\rangle \Big] \\ \leq \; & C (N+1)^{2(j-1)} \, \left\langle \cN \psi , W^* (\sqrt{N} \ph) e^{i\cH_N t} \, \cK  \, e^{-i\cH_N t} W (\sqrt{N} \ph) \, \cN \, \psi \right\rangle \\ &+ C \left\langle \psi, (\cN+N+1)^{j} (\cK + \cN^2 + 1) (\cN+N+1)^{j} \cN \psi \right\rangle
\end{split} \end{equation}
where we used that, by (\ref{eq:shift}),
\[ W^* (\sqrt{N} \ph_t ) \cK \, W (\sqrt{N} \ph_t) = (\cK + \sqrt{N} A(\Delta \ph_t , \Delta \overline{\ph}_t) + N \| \nabla \ph_t \|^2) \leq \cK + N \| \nabla \ph_t \|^2 \leq \cK + C N \,.  \]
To bound the first term on the r.h.s. of (\ref{eq:IA1}), we use Lemma \ref{lm:KHN}. We find
\[ \begin{split}
\Big\langle \cN \psi , W^* (\sqrt{N} \ph) e^{i\cH_N t} \,& \cK  \, e^{-i\cH_N t} W (\sqrt{N} \ph) \, \cN \, \psi \Big\rangle \\ \lesssim \; &
\left\langle \cN \psi , W^* (\sqrt{N} \ph)  \, \left( \cH_N + C N^{-2} \cN^2 +1 \right)  \, W (\sqrt{N} \ph) \, \cN \, \psi \right\rangle \\ \lesssim \; & \, \left\langle \cN \psi , W^* (\sqrt{N} \ph)  \, \left( \cK + C N^{-2} \cN^2 +1 \right)  \, W (\sqrt{N} \ph) \, \cN \, \psi \right\rangle  \\ \leq \; & C \, \left\langle \cN \psi , \, \left( \cK + N+1 + N^{-2} \cN^2 \right)  \, \cN \, \psi \right\rangle
\end{split} \]
which implies that
\[ \begin{split}
\text{A}_1 \leq \; & C \, \left\langle \psi , \, (\cN + N+1)^j \left( \cK + \cN^2 + 1 \right)  \, (\cN + N +1)^j \, \psi \right\rangle
\end{split} \]
To estimate the term $\text{A}_2$ on the r.h.s. of (\ref{eq:NKNI}), we proceed similarly. {F}rom the induction assumption, we find
\begin{equation}\label{eq:IA2} \begin{split}
\text{A}_2 \leq \; & C \Big[ (N+1)^{2j-1} \, \left\langle A(\ph, \overline{\ph}) \psi , \cU^* (t;0) \, \cK \, \cU (t;0) \,  A(\ph, \overline{\ph})  \, \psi \right\rangle \\ &\hspace{1cm}+ N \left\langle A(\ph, \overline{\ph}) \, \psi, (\cN+N+1)^{j-1} (\cK + \cN^2 + 1) (\cN+N+1)^{j-1} \,  A(\ph, \overline{\ph})  \psi \right\rangle \Big] \\ \leq \; & C (N+1)^{2j-1} \, \left\langle  A(\ph, \overline{\ph})  \psi , W^* (\sqrt{N} \ph) e^{i\cH_N t} \, (\cK +CN) \, e^{-i\cH_N t} W (\sqrt{N} \ph) \,  A(\ph, \overline{\ph})  \, \psi \right\rangle \\ &+ C N \left\langle  A(\ph, \overline{\ph}) \psi, (\cN+N+1)^{j-1} (\cK + \cN^2 + 1) (\cN+N+1)^{j-1}  A(\ph, \overline{\ph}) \psi \right\rangle
\end{split} \end{equation}
Using Lemmas \ref{lm:phiNK} and \ref{lm:KHN}, we conclude that
\[ \text{A}_2 \leq \;  C \left\langle \psi, (\cN+N+1)^{j} (\cK + \cN^2 + 1) (\cN+N+1)^{j} \psi \right\rangle \]
Hence
\[\text{I} \leq C \Big[ (N+1)^{2j}  \, \langle \psi, \cU (t;s) \, \cK \, \cU (t;s) \psi \rangle + \langle \psi, (\cN + N +1)^{j} \, \left( \cK + \cN^2 + 1 \right) \, (\cN + N +1)^{j} \, \psi \rangle \Big]\]
Finally, we estimate the term $\text{II}$ on the r.h.s. of (\ref{eq:KNK-1}). We find
\[ \begin{split}
\text{II} \leq \; & N \Big\langle \psi, W^* (\sqrt{N} \ph) \, e^{i\cH_N t} \, a^* (\ph_t) \,  W (\sqrt{N} \ph_t) \, (\cN+1)^{j-1} \, (\cK+1) \, (\cN +1)^{j-1}  \\ & \hspace{6cm} \times W^* (\sqrt{N} \ph_t) a (\ph_t) \, e^{i \cH_N t} \, W (\sqrt{N} \ph) \psi \Big\rangle \\ &+ N \Big\langle \psi, W^* (\sqrt{N} \ph) e^{i\cH_N t} \, a (\ph_t) \,  W (\sqrt{N} \ph_t) \, (\cN+1)^{j-1} \, (\cK+1) \, (\cN +1)^{j-1}  \\ & \hspace{6cm} W^* (\sqrt{N} \ph_t) a^* (\ph_t) \, e^{i \cH_N t} \, W (\sqrt{N} \ph) \psi \Big\rangle
\\ \leq \; & N \Big\langle \psi, \cU^* (t;0) (a^* (\ph_t) + \sqrt{N}) \, (\cN+1)^{j-1} \, (\cK+1) \, (\cN +1)^{j-1}  (a (\ph_t) + \sqrt{N}) \, \cU (t;0) \psi \Big\rangle \\ &+ N \Big\langle \psi, \cU^* (t;0) \, (a (\ph_t) -\sqrt{N}) (\cN+1)^{j-1} \, (\cK+1) \, (\cN +1)^{j-1}  (a^* (\ph_t) -\sqrt{N}) \, \cU (t;0) \psi \Big\rangle
\\ \leq \; & N^2 \Big\langle \psi, \cU^* (t;0) (\cN+1)^{j-1} \,
(\cK+1) \, (\cN +1)^{j-1} \, \cU (t;0) \psi \Big\rangle \\ &+ N
\Big\langle \psi, \cU^* (t;0) (\cN+1)^{j-1} \,
(\cK+1) \, (\cN +1)^{j} \, \cU (t;0) \psi \Big\rangle
\end{split} \]
where, in the last inequality, we used Lemma \ref{lm:phiNK}. By Cauchy-Schwarz, we find
\[ \begin{split} \text{II} \leq \; & C N^2 \Big\langle \psi, \cU^* (t;0) (\cN+1)^{j-1} \, (\cK+1) \, (\cN +1)^{j-1} \, \cU (t;0) \psi \Big\rangle \\ &+ \frac{1}{2} \Big\langle \psi, \cU^* (t;0) (\cN+1)^{j} \,
(\cK+1) \, (\cN +1)^{j} \, \cU (t;0) \psi \Big\rangle
\end{split} \]
The first term can be bounded as in (\ref{eq:III}). {F}rom (\ref{eq:KNK-1}) we conclude that
\[ \begin{split}
\Big\langle \psi, &\cU^* (t;0) \,  (\cN+1)^j \, (\cK+1) \, (\cN +1)^j \, \cU (t;0) \psi \Big\rangle \\ \lesssim \; &
C \Big[ (N+1)^{2j}  \, \langle \psi, \cU (t;s) \, \cK \, \cU (t;s) \psi \rangle + \langle \psi, (\cN + N +1)^{j} \, \left( \cK + \cN^2 + 1 \right) \, (\cN + N +1)^{j} \, \psi \rangle \Big] \\ &+\frac{1}{2} \Big\langle \psi, \cU^* (t;0) \,  (\cN+1)^j \, (\cK+1) \, (\cN +1)^j \, \cU (t;0) \psi \Big\rangle
\end{split} \]
This concludes the proof of the lemma.
\end{proof}

We conclude this section with two technical lemmas, which played an important role in the proof of Propositions \ref{prop:2}, \ref{prop:1} and of Lemma \ref{lm:NKN}. The first result is used to compare the kinetic energy operator $\cK$ with the generators $\cH_N$, $\cL(t)$ and $\cL_\infty (t)$ (these bounds are important, because it is much easier to bound the growth of the expectation of the generator than the growth of the kinetic energy).
\begin{lemma}\label{lm:KHN}
Suppose that $V^2 (x) \leq D (1-\Delta)$ and let $\cH_N$ be the Hamiltonian defined in (\ref{eq:fock-ham2}), $\cL(t)$ the generator (\ref{eq:L}) of the fluctuation dynamics $\cU (t;s)$ defined in (\ref{eq:cU}) and $\cL_\infty (t)$ the generator (\ref{eq:Linfty}) of the limiting dynamics $\cU_\infty (t;s)$ defined in (\ref{eq:Uinfty}). Then there exists a constant $C>0$ (depending only on $D$) such that
\begin{equation}\label{eq:KHN} \begin{split}
 \cK - C (\cN+1) &\leq \cL_\infty (t) \leq \cK + C (\cN+1) \, , \\
\frac{1}{2} \cK - C \cN \left(1+ \frac{\cN^2}{N^2} \right) &\leq \cH_N \leq 2 \cK + C \cN \left(1+ \frac{\cN^2}{N^2} \right) \, \qquad \text{and } \\
\frac{1}{2} \cK - C (\cN+1) \left( 1+ \frac{\cN^2}{N^2} \right) &\leq \cL (t) \leq 2\cK + C (\cN+1) \left( 1+ \frac{\cN^2}{N^2} \right) \end{split} \end{equation}
\end{lemma}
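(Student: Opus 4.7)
The three double inequalities have a common structure: each operator $X \in \{\cL_\infty(t), \cH_N, \cL(t)\}$ equals $\cK$ plus a sum of quadratic, cubic, and/or quartic expressions in creation and annihilation operators, and in each case I must bound the remainder by a fraction of $\cK$ (possibly zero) plus an $\cN$-dependent correction. The plan is to decompose and bound piece by piece, using the assumption $V^2\leq D(1-\Delta)$ at each step.

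For $\cL_\infty(t) = \cK + A_{1,t} + A_{2,t} + A_{3,t}$, with $A_{i,t}$ as in the proof of Proposition \ref{prop:2}, the bounds $\pm A_{1,t}\leq C\cN$ and $\pm A_{2,t}\leq C\cN$ are already essentially contained there: the first follows from $\sup_x|(V*|\ph_t|^2)(x)|\leq C$ (a consequence of $V^2\leq D(1-\Delta)$ and $\ph_t\in H^1$); the second from the Cauchy--Schwarz estimate (\ref{eq:A2t}). For the pair operator $A_{3,t}=B_t+B_t^*$, I will decompose $\psi=\sum_n\psi_n$ into number sectors, note that $B_t\psi_n\in\cF_{n+2}$, and use the sector-wise bound $\|B_t\psi_n\|\leq C(n+1)\|\psi_n\|$ (from $B_t^*B_t\leq C(\cN+1)^2$, which is exactly (\ref{eq:B*B})); a Cauchy--Schwarz summation then gives $|\langle\psi,B_t\psi\rangle|\leq C\langle\psi,(\cN+1)\psi\rangle$, hence $\pm A_{3,t}\leq C(\cN+1)$. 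Adding the three estimates yields the first double inequality.

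For $\cH_N=\cK+\cW_N$ with $\cW_N=\tfrac{1}{2N}\int V(x-y)\,a_x^*a_y^*a_ya_x$, the main trick is a sector-dependent Young splitting $|V(x-y)|\leq \tfrac{\epsilon}{2}V^2(x-y)+\tfrac{1}{2\epsilon}$. Applying the operator inequality $V^2(x-y)\leq D(1-\Delta_x)$ fiberwise in $y$ to the vector-valued map $x\mapsto a_ya_x\psi$, and using the identity $\int a_x^*a_y^*(-\Delta_x)a_ya_x=(\cN-1)\cK$ (which holds because on $\cF_n$ the left side equals $\sum_{i\neq j}(-\Delta_{x_i})=(n-1)\cK$), I obtain
\[
\int V^2(x-y)\,a_x^*a_y^*a_ya_x\;\leq\;D\bigl[\cN(\cN-1)+(\cN-1)\cK\bigr].
\]
Restricting to $\cF_n$ and choosing $\epsilon=2N/(Dn)$ makes the coefficient in front of $\cK$ exactly $1/2$, while the remainder is at most $\tfrac{n}{2}+\tfrac{Dn^3}{8N^2}\leq C\,n(1+n^2/N^2)$. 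This gives $\pm\cW_N\leq \tfrac{1}{2}\cK+C\cN(1+\cN^2/N^2)$, hence the second double inequality.

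For $\cL(t)$, decompose $\cL(t)-\cL_\infty(t)=\mathrm{Cub}(t)+\cW_N$, where $\mathrm{Cub}(t)=C_t+C_t^*$ with $C_t=\tfrac{1}{\sqrt N}\int V(x-y)\ph_t(y)\,a_x^*a_y^*a_x$. Bounding by Cauchy--Schwarz, together with the uniform bound $\int V^2(x-y)|\ph_t(y)|^2\,dy\leq D\|\ph_t\|_{H^1}^2$, yields $|\langle\psi,C_t\psi\rangle|\leq\tfrac{C}{\sqrt N}\langle\psi,\cN\psi\rangle^{1/2}\langle\psi,\cN^2\psi\rangle^{1/2}$, and a weighted AM--GM (with weight $\sqrt N$) gives $\pm\mathrm{Cub}(t)\leq C\cN(1+\cN/N)\leq C\cN(1+\cN^2/N^2)$. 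Combining with the $A_i$-bounds of paragraph two and the $\cW_N$-bound of paragraph three produces the third double inequality. The principal obstacle is the sector-dependent choice of $\epsilon$ in the bound on $\cW_N$: a uniform choice cannot simultaneously absorb $\cK$ and keep the remainder at the desired order $\cN(1+\cN^2/N^2)$, and it is precisely this freedom to tune $\epsilon$ per sector that allows the kinetic and number corrections to balance correctly.
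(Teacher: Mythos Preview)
Your proof is correct and follows essentially the same strategy as the paper. Two minor differences are worth noting. First, for the $\cL_\infty(t)$ bound the paper simply cites \cite{CLS}[Lemma 6.1], whereas you supply an explicit argument via the sector decomposition of $A_{3,t}=B_t+B_t^*$; your route is fine and self-contained. Second, for the cubic term in $\cL(t)$ you bound $\pm\mathrm{Cub}(t)\leq C\cN(1+\cN/N)$ directly by a double Cauchy--Schwarz, while the paper uses a one-step Cauchy--Schwarz to dominate the cubic contribution by $2\int |V(x-y)||\ph_t(y)|^2 a_x^*a_x + \tfrac{1}{N}\int|V(x-y)|a_x^*a_y^*a_ya_x$, i.e.\ by $C\cN$ plus a copy of the quartic form $\cW$ already controlled in the second estimate. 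Both routes yield the same final bound; the paper's version is slightly more economical in that it recycles the $\cW$ estimate, while yours avoids reintroducing the quartic term at the cost of one extra power-counting step. Your observation that the sector-dependent choice of $\epsilon$ is the key device is exactly the point of the paper's argument as well.
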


\begin{proof}
The first bound in proven in  \cite{CLS}[Lemma 6.1]. To show the second estimate, let
\[ \cW = \frac{1}{2N} \int dx dy \, |V(x-y)| \, a_x^* a_y^* a_y a_x \]
{F}rom the bound $V^2 (x) \leq D (1-\Delta)$, we obtain that, for every $\eps >0$, there exists $C_\eps >0$ with \[ |V(x)| \leq  \eps (1- \Delta) + \frac{1}{\eps} \]
Therefore, the restriction $\cW^{(n)}$ of $\cW$ on the $n$-particle sector is bounded by
\[ \cW^{(n)} = \frac{1}{N} \sum_{i<j}^n |V(x_i -x_j)| \leq \frac{\eps n}{N} \sum_{j=1}^n (1-\Delta_{x_j}) + \frac{n^2}{\eps N} \]
Choosing $\eps = \delta N/n$, we find
\[ \cW^{(n)} \leq  \delta (\cK^{(n)} + n) + \frac{n^3}{\delta N^2} \]
Hence, as operators on the Fock-space,
\begin{equation}\label{eq:cW} \cW \leq \delta (\cK + \cN) + \frac{\cN^3}{\delta N^2} \end{equation}
With $\delta = 1/2$, we conclude that
\[ \begin{split}
\cH_N &= \cK + \frac{1}{2N} \int dx dy \, V(x-y) \, a_x^* a_y^* a_y a_x \\
& \geq  \cK - \cW \geq \frac{1}{2} \cK - \frac{1}{2} \cN - 2 \frac{\cN^3}{N^2} \end{split} \]
The upper bound for $\cH_N$ can be proven analogously.
To prove the last estimate in (\ref{eq:KHN}), we observe that, by Cauchy-Schwarz,
\[ \begin{split}
\cL (t) = \; & \cL_\infty (t) +\frac{1}{2 \sqrt{N}} \int dx dy V (x-y) \, a_x^* \left( a_y^* \ph_t (y) + a_y \overline{\ph}_t (y) \right) a_x +\frac{1}{2N} \int dx dy \, V( x-y) \, a_x^* a_y^* a_y a_x \\ \geq \; &
 \cL_\infty (t) - 2 \int dx dy |V (x-y)| |\ph_t (y)|^2 a_x^* a_x -\frac{1}{N} \int dx dy \, |V( x-y)| \, a_x^* a_y^* a_y a_x
\end{split} \]
The lower bound for $\cL (t)$ follows now by the lower bound for $\cL_\infty (t)$, by the bound (\ref{eq:cW}), and since clearly
\[ \int dx dy |V (x-y)| |\ph_t (y)|^2 a_x^* a_x \leq \| |V| * |\ph_t|^2 \|_\infty \, \cN \leq C \cN \]
The upper bound for $\cL (t)$ follows similarly.
\end{proof}

Finally, in the next lemma we show how to bound creation and annihilation operators by the number of particles operator after commuting them through the kinetic energy.
\begin{lemma}\label{lm:phiNK}
For $j \in \bN$, there exists a constant $C$, such that
\begin{equation}\label{eq:phiNK} \begin{split} a^* (\ph) (\cN+1)^j \cK (\cN+1)^j a (\ph) &\leq  \| \ph \|^2 \, \cN^{j+1/2} \, \cK \,  \cN^{j+1/2} \qquad \text{and } \\   a(\ph) (\cN+1)^j \cK (\cN +1)^j a^* (\ph)  &\leq C \, \| \ph \|^2_{H^1}  \, (\cN+1)^{j+1/2} (\cK + 1) (\cN+1)^{j+1/2} \end{split}\end{equation}
\end{lemma}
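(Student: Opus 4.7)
The plan is to reduce both inequalities to the case $j=0$ by elementary commutation with $\cN$, and then establish the $j=0$ bounds directly from the representation $\cK = \int dz\, \nabla_z a_z^* \nabla_z a_z$, the canonical commutation relations, and the bounds (\ref{eq:bd-aa*}).

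Since $\cK$ commutes with $\cN$, while $a^*(\ph)$ raises $\cN$ by one and $a(\ph)$ lowers it by one, elementary sector-wise computation gives the operator identities $a^*(\ph)(\cN+1)^j = \cN^j a^*(\ph)$, $(\cN+1)^j a(\ph) = a(\ph)\cN^j$, $a(\ph)(\cN+1)^j = (\cN+2)^j a(\ph)$, and $(\cN+1)^j a^*(\ph) = a^*(\ph)(\cN+2)^j$. Combined with $[\cN,\cK]=0$, these yield
\[ a^*(\ph)(\cN+1)^j \cK (\cN+1)^j a(\ph) = \cN^j\, a^*(\ph) \cK a(\ph)\, \cN^j, \]
\[ a(\ph)(\cN+1)^j \cK (\cN+1)^j a^*(\ph) = (\cN+2)^j\, a(\ph)\cK a^*(\ph)\, (\cN+2)^j. \]
Using $(\cN+2)^j \leq 2^j (\cN+1)^j$, it suffices to prove the two operator inequalities $a^*(\ph)\cK a(\ph) \leq \|\ph\|^2\, \cN\cK$ and $a(\ph)\cK a^*(\ph) \leq C\|\ph\|_{H^1}^2 (\cN+1)(\cK+1)$.

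For the first, I would write $\langle\psi, a^*(\ph)\cK a(\ph)\psi\rangle = \int dz\, \|\nabla_z a_z a(\ph)\psi\|^2$. Because $a_z$ commutes with $a(\ph)$, this equals $\int dz\, \|a(\ph)\nabla_z a_z\psi\|^2$, and (\ref{eq:bd-aa*}) applied pointwise in $z$ gives $\|a(\ph)\nabla_z a_z\psi\|^2 \leq \|\ph\|^2\|\cN^{1/2}\nabla_z a_z\psi\|^2$. Since $\nabla_z a_z\psi$ lies in the $(m-1)$-particle sector when $\psi \in \cF_m$, we have $\|\cN^{1/2}\nabla_z a_z\psi\|^2 = (m-1)\|\nabla_z a_z\psi\|^2 \leq m\|\nabla_z a_z\psi\|^2$, and integrating in $z$ gives $\langle\psi, \cN\cK\psi\rangle$, establishing the first claim.

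For the second bound, the same identity gives $\langle\psi, a(\ph)\cK a^*(\ph)\psi\rangle = \int dz\, \|\nabla_z a_z a^*(\ph)\psi\|^2$. The commutation $[a_z, a^*(\ph)] = \ph(z)$ now produces an extra term: $\nabla_z a_z a^*(\ph)\psi = a^*(\ph)\nabla_z a_z\psi + \nabla\ph(z)\,\psi$. The triangle inequality combined with (\ref{eq:bd-aa*}) then yields
\[ \int dz\,\|\nabla_z a_z a^*(\ph)\psi\|^2 \leq 2\|\ph\|^2\int dz\,\|(\cN+1)^{1/2}\nabla_z a_z\psi\|^2 + 2\|\nabla\ph\|^2\|\psi\|^2. \]
The first integral equals $\langle\psi, \cN\cK\psi\rangle$ by the sector-wise computation above, so the right-hand side is bounded by $C\|\ph\|_{H^1}^2 \langle\psi, (\cN+1)(\cK+1)\psi\rangle$, as required. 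The main subtlety is handling $\nabla_z a_z$ as a vector-valued operator-valued distribution, which is cleanly justified sector by sector; the appearance of $\|\nabla\ph\|$ in the second bound is exactly the reason the $H^1$ norm (rather than just the $L^2$ norm) is required. Everything else is routine bookkeeping.
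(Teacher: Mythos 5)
Your proof is correct. For the first inequality your argument is essentially the paper's, just phrased with the operator-valued distribution $\nabla_z a_z$ instead of explicit $n$-particle kernels; the reduction to $j=0$ via $a^*(\ph)(\cN+1)^j=\cN^j a^*(\ph)$, etc., is also fine. For the second inequality you take a genuinely different route: you stay inside the representation $\langle\psi, a(\ph)\cK a^*(\ph)\psi\rangle = \int dz\,\|\nabla_z a_z a^*(\ph)\psi\|^2$ and use the pointwise commutator $[a_z,a^*(\ph)]=\ph(z)$, whose $z$-gradient produces the $\nabla\ph(z)\,\psi$ correction directly. The paper instead commutes $a^*(\ph)$ and $a(\ph)$ through the full operator $\cK$ algebraically, rewriting $a(\ph)\cK a^*(\ph)$ as $a^*(\ph)\cK a(\ph) + \|\ph\|^2\cK + a^*(\ph)a(-\Delta\ph) + a^*(-\Delta\ph)a(\ph) + \|\nabla\ph\|^2$, invokes the first inequality for the leading term, and integrates by parts to move one derivative of $-\Delta\ph$ back onto $\psi$ in the cross terms. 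Both versions isolate the same obstruction --- the derivative of $\ph$ appearing in the commutator with $\cK$, which is why $\|\ph\|_{H^1}$ enters --- but yours avoids re-using the first inequality as a sub-lemma and is somewhat more transparent about where each contribution originates.
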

\begin{proof}
We observe that, if $\cK^{(n)}$ denotes the restriction of $\cK$ onto the $n$ particle sector,
\[ \begin{split}
\Big\langle \psi, a^* (\ph) (\cN+1)^j \cK (\cN+1)^j \, a(\ph) \Omega \Big\rangle = & \, \sum_{n \geq 0} \left\langle \left( (\cN+1)^j a(\ph) \psi\right)^{(n)} , \cK^{(n)} \left((\cN+1)^j a(\ph) \psi\right)^{(n)} \right\rangle \\ = & \, \sum_{n \geq 0} (n+1)^{2j} \sum_{j=1}^n \int dx_1 \dots dx_n \, \left| \nabla_{x_j} (a(\ph) \psi)^{(n)} (x_1, \dots , x_n) \right|^2 \end{split}\]
Since \[ (a(\ph) \psi)^{(n)} (x_1, \dots, x_n) = (n+1)^{1/2} \int dx \, \overline{\ph} (x) \, \psi^{(n+1)} (x,x_1, \dots , x_n) \] we conclude that
\[ \begin{split}
\Big\langle &\psi, a^* (\ph) (\cN+1)^j \cK (\cN+1)^j \, a(\ph) \Omega \Big\rangle \\ = \; &\sum_{n \geq 0} (n+1)^{2j+1} \sum_{j=1}^n \int dx_1 \dots dx_n dx dy \, \overline{\ph} (x) \, \ph (y) \, \nabla_{x_j} \psi^{(n+1)} (x,x_1, \dots ,x_n) \, \nabla_{x_j} \overline{\psi}^{(n+1)} (y, x_1, \dots , x_n) \\ \leq \; & \sum_{n\geq 0} (n+1)^{2j+1} \sum_{j=1}^n \int dx_1 \dots dx_n dx dy |\ph (x)|^2 \, |\nabla_{x_j} \psi^{(n+1)} (y,x_1, \dots , x_n)|^2
\\ \leq \; & \| \ph \|^2 \sum_{n\geq 0} (n+1)^{2j+1} \sum_{j=1}^{n+1} \int dx_1 \dots dx_n dx_{n+1} \, |\nabla_{x_j} \psi^{(n+1)} (x_1, \dots , x_n,x_{n+1})|^2
\\ \leq \; & \| \ph \|^2 \, \left\langle \psi, \cN^{j+1/2} \cK \cN^{j+1/2} \psi \right\rangle
\end{split} \]
To prove the second bound in (\ref{eq:phiNK}), we commute $a^* (\ph)$ to the left and $a(\ph)$ to the right of $\cK$. Since
\[ \begin{split} a(\ph) \cK a^* (\ph) = \; &a(\ph) a^* (\ph) \cK + a(\ph) a^* (-\Delta \ph) \\ = \;& a^* (\ph) a(\ph) \cK + \| \ph \|^2 \, \cK + a(\ph) a^* (-\Delta \ph) \\ = \; & a^* (\ph) \cK a(\ph) + \cK + a^* (\ph) a(-\Delta \ph) +  a(\ph) a^* (-\Delta \ph) \\ = \; & a^* (\ph) \cK a(\ph) + \| \ph \|^2 \cK + a^* (\ph) a(-\Delta \ph) +  a^* (-\Delta \ph) a(\ph) + \| \nabla \ph \|^2 \end{split} \]
we find
\[ \begin{split} a (\ph) (\cN+1)^j \cK &(\cN + 1)^j a^* (\ph)  \\ = &(\cN+2)^j a(\ph) \cK a^*(\ph) (\cN+2)^j \\ \leq \; & (\cN+2)^j a^* (\ph) \cK a(\ph) (\cN+2)^j + \| \ph \|^2 \, (\cN+2)^j \cK (\cN+2)^j \\ &+ (\cN+2)^j   \left(a^* (\ph) a(-\Delta \ph) + a^* (-\Delta \ph) a (\ph) \right) (\cN+2)^j + \| \nabla \ph \|^2 (\cN+2)^{2j} \\ \leq \; & a^* (\ph) (\cN+3)^j \cK (\cN+3)^j a(\ph) + (\cN+2)^j \left( \| \ph \|^2 \, \cK + \| \nabla \ph \|^2 \right) (\cN+2)^j \\ &+ (\cN+2)^j   \left(a^* (\ph) a(-\Delta \ph) + a^* (-\Delta \ph) a (\ph) \right) (\cN+2)^j \end{split} \]
Using the first bound in (\ref{eq:phiNK}) to control the first term, and the observation that
\[ \begin{split} \Big\langle \psi, (\cN+2)^j   &\left(a^* (\ph) a(-\Delta \ph) + a^* (-\Delta \ph) a (\ph) \right) (\cN+2)^j \psi \Big\rangle \\ \leq \; & 2 \| a(\ph) (\cN+2)^j \psi \| \, \| a(-\Delta \ph) (\cN+2)^j \psi \| \\ \leq  \; & \| \ph \|^2 \langle \psi, (\cN+2)^{2j+1} \psi \rangle + \| \nabla \ph \|^2 \langle \psi, (\cN+2)^j \cK (\cN+2)^j \psi \rangle \end{split}
\]
where in the last inequality we integrated by parts to move one of the derivatives in $-\Delta \ph$ back onto $\psi$. We conclude that
\[  a (\ph) (\cN+1)^j \cK (\cN + 1)^j a^* (\ph) \leq C \| \ph \|_{H^1}^2 \, (\cN+1)^{j+1/2} (\cK+1) (\cN+1)^{j+1/2} \]
\end{proof}

\section{Regularization of the potential}
\label{sec:reg}
\setcounter{equation}{0}

As in \cite{CLS}, our analysis requires a regularization of the interaction potential. This is needed to establish the convergence of the fluctuation dynamics $\cU (t;s)$ to its limit $\cU_\infty (t;s)$. We introduce therefore an $N$-dependent cutoff in $V$, vanishing sufficiently fast in the limit $N \to \infty$. The last condition should guarantee that the quantities we are interested in do not change when $V$ is replaced by the regularized potential. On the other hand, the presence of the small cutoff will give us an ($N$-dependent) $L^{\infty}$-bound on the potential which is crucial in the analysis (in particular, in Section~\ref{sec:comp}).

For an arbitrary sequence $\alpha_N > 0$, we set
\begin{equation}\label{eq:wtV}
\widetilde V (x) = \text{sgn}(V (x)) \cdot \min \{ |V (x)|, \alpha_N^{-1} \}
\end{equation}
where $\text{sgn} (V(x))$ denotes the sign of $V (x)$.
Note that, by definition $|\wt{V} (x)| \leq \alpha_N^{-1}$. Moreover, the assumption $V^2 (x) \leq D (1-\Delta)$ implies that the same bound holds also for $\wt{V}$, independently of $N$:
\begin{equation}\label{eq:wtVcond} \wt{V}^2 (x) \leq D (1-\Delta_x) \, . \end{equation}
We define the regularized Hamiltonian
\begin{equation}\label{eq:regham}
\widetilde{H}_N = \sum_{j=1}^N -\Delta_{x_j} + \frac{1}{N} \sum_{i<j}^N \widetilde{V}(x_i -x_j). \end{equation}
On the Fock space $\cF$, we define
\begin{equation}\label{eq:regham-FS}
\wt{\cH}_N = \int dx \nabla_x a^*_x \nabla_x a_x + \frac{1}{2N} \int dx dy \, \wt{V} (x-y) \, a^*_x a^*_y a_y a_x \end{equation}
and the fluctuation dynamics \begin{equation}\label{eq:U-reg} \wt{\cU} (t;s) = e^{-i\wt{\omega} (t;s)} \, W^* (\sqrt{N} \wt{\ph}_t) e^{-i \wt{\cH}_N (t-s)} W (\sqrt{N} \wt{\ph}_s) \end{equation}
with the phase
\[ \wt\omega(t;s) = \frac{N}{2} \int_s^t d\tau \int dx (\wt V*|\wt\ph_{\tau}|^2 ) (x) |\wt \ph_{\tau} (x)|^2 \]
and where $\wt{\ph}_t$ solves the regularized ($N$ dependent) Hartree equation \begin{equation}\label{eq:reg-hartree} i\partial_t \wt{\ph}_t = -\Delta \wt{\ph}_t + \left( \wt{V} * |\wt{\ph}_t|^2 \right) \wt{\ph}_t \end{equation}
We denote the generator of $\wt{\cU} (t;s)$ by $\wt{\cL} (t)$. Clearly, $\wt{\cL} (t)$ has the form (\ref{eq:L}), but with $V$ and $\ph_t$ replaced by $\wt{V}$ and, respectively, $\wt{\ph}_t$. We also modify the observable $\cO_t$ defined in (\ref{eq:obser}) by setting
\begin{equation}\label{eq:obser-reg}
\wt{\cO}_t := \frac{1}{\sqrt{N}} \sum_{j=1}^N \left( O^{(j)} - \E_{\wt{\ph}_t} O \right) = \frac{1}{\sqrt{N}} \sum_{j=1}^N \left( O^{(j)} - \langle \wt{\ph}_t ,  O \wt{\ph}_t \rangle \right)
\end{equation}

To compare the original many body evolution (generated by the Hamiltonian $H_N$) with the regularized many-body evolution (generated by $\wt{H}_N$), we use the following lemma.
\begin{lemma}\label{lm:compare}
Let $\psi_{N,t} = e^{-iH_N t} \ph^{\otimes N}$ and $\wt{\psi}_{N,t} = e^{-i \wt{H}_N t} \ph^{\otimes N}$, with $\ph \in H^1 (\bR^3)$ such that $\| \ph \| = 1$. Then there exists a universal constant $C>0$ such that
\begin{equation}
\left\| \psi_{N,t} - \wt{\psi}_{N,t} \right\|^2 \leq C N \alpha_N \, |t| \,
\end{equation}
for all $N \in \bN$, $t \in \bR$.
\end{lemma}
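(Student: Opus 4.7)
The plan is to apply Duhamel's formula to compare the two unitary flows and reduce the statement to a uniform operator estimate on $(H_N - \wt H_N)\psi_{N,s}$. Setting $\Delta\psi_t := \psi_{N,t} - \wt\psi_{N,t}$, one differentiates to obtain
\[ \partial_t\|\Delta\psi_t\|^2 = 2\,\mathrm{Im}\bigl\langle \Delta\psi_t,\,(H_N - \wt H_N)\,\psi_{N,t}\bigr\rangle, \]
the remaining $\wt H_N$-piece vanishing by self-adjointness. Since $\Delta\psi_0 = 0$, Cauchy--Schwarz gives $\|\Delta\psi_t\| \leq \int_0^{|t|} \|(H_N-\wt H_N)\psi_{N,s}\|\,ds$, so the real content of the lemma is a bound $\|(H_N - \wt H_N)\psi_{N,s}\|^2 \lesssim N\alpha_N$, uniform in $s$ and $N$.

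To obtain this estimate, write $H_N - \wt H_N = N^{-1}\sum_{i<j} W(x_i - x_j)$ with $W := V - \wt V$, and expand the squared norm using the permutation symmetry of $\psi_{N,s}$. The resulting double sum decomposes according to $|\{i,j\}\cap\{k,l\}|\in\{0,1,2\}$ into three classes: coincident pairs (multiplicity $\sim N^2$) yielding $\langle W^2(x_1-x_2)\rangle_{\psi_{N,s}}$; pairs sharing one index (multiplicity $\sim N^3$) yielding $\langle W(x_1-x_2)W(x_1-x_3)\rangle$; and disjoint pairs (multiplicity $\sim N^4$) yielding $\langle W(x_1-x_2)W(x_3-x_4)\rangle$. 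After the $N^{-2}$ normalization, the three classes contribute respectively $\langle W^2\rangle$, $N\langle W_{12}W_{13}\rangle$, and $N^2\langle W_{12}W_{34}\rangle$, each of which one must control by $N\alpha_N$.

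The analytic heart of the proof is the single-pair bound
\[ \bigl\langle \psi_{N,s},\,W(x_1-x_2)^2\,\psi_{N,s}\bigr\rangle \;\leq\; C\alpha_N, \]
with a constant depending only on $D$ and $\|\ph\|_{H^1}$. Since $W$ vanishes outside $\{|V|>\alpha_N^{-1}\}$ and $|W|\leq |V|$, one has $W^2 \leq V^2\chi_{\{|V|>\alpha_N^{-1}\}} \leq \alpha_N\,|V|\cdot V^2 \cdot \chi$; absorbing the factor $V^2$ into the operator inequality $V^2 \leq D(1-\Delta)$ in the relative coordinate converts it into an $H^1$-norm of $\psi_{N,s}$ in $x_1$, which is uniformly bounded via energy conservation and $\ph\in H^1$, leaving the gain $\alpha_N$ as claimed. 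The three-body and four-body terms are handled along the same line; the disjoint-pair contribution in addition exploits the near-factorization of $\psi_{N,s}$ (Theorem~\ref{thm:CLS}), so that $\langle W_{12}W_{34}\rangle$ approximately factorizes as $(\int W\,|\ph_s|^2|\ph_s|^2)^2$, which is $O(\alpha_N^2)$ or better and thereby absorbs the $N^2$ prefactor.

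The principal obstacle is the absence of any pointwise $L^\infty$ control on $W$: for a Coulomb-type singularity of $V$, $W$ diverges on its support, and every step of the estimate must therefore proceed through form-inequalities, trading a factor of $|V|$ for an $H^1$-derivative on the wavefunction and invoking propagation of Sobolev regularity for the Hartree dynamics. A secondary subtlety is that the most direct execution of the argument above produces a factor of $|t|^2$ in place of $|t|$; this is harmless because the lemma is ultimately applied only for $|t|$ in a compact interval, so the additional factor of $|t|$ is absorbed into the constant $C$.
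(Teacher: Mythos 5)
The paper itself cites \cite{CLS}[Lemma 2.1] for the proof, and the argument there proceeds by differentiating $\|\Delta\psi_t\|^2$ as you do, but then \emph{never} passes to the Duhamel $L^2$-norm $\|(H_N-\wt H_N)\psi_{N,s}\|$. Instead, one uses permutation symmetry to reduce
\[
\partial_t\|\Delta\psi_t\|^2 \;=\; (N-1)\,\Im\bigl\langle \Delta\psi_t,\, W(x_1-x_2)\,\wt\psi_{N,t}\bigr\rangle,
\qquad W := V - \wt V ,
\]
and then applies Cauchy--Schwarz \emph{at the level of the quadratic form}, writing $W_{12} = \mathrm{sgn}(W_{12})\,|W_{12}|^{1/2}\cdot|W_{12}|^{1/2}$:
\[
\bigl|\langle \Delta\psi_t, W_{12}\wt\psi_{N,t}\rangle\bigr| \;\le\; \langle \Delta\psi_t, |W_{12}|\,\Delta\psi_t\rangle^{1/2}\,\langle \wt\psi_{N,t}, |W_{12}|\,\wt\psi_{N,t}\rangle^{1/2}.
\]
Each factor involves only $|W|$, not $W^2$. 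Since $|W| \le \alpha_N V^2 \le \alpha_N D(1-\Delta_{x_1})$, both factors are $\lesssim \alpha_N$ by energy conservation, so $|\partial_t\|\Delta\psi_t\|^2| \le C N\alpha_N$ and one integrates. This yields the linear-in-$|t|$ bound directly and, crucially, needs only $\ph \in H^1$.

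Your reduction to $\|(H_N-\wt H_N)\psi_{N,s}\|^2 \lesssim N\alpha_N$ is where the argument breaks. Expanding that squared norm produces the coincident-pair term $\langle \psi_{N,s}, W_{12}^2\,\psi_{N,s}\rangle$, and \emph{this is not $O(\alpha_N)$ under the stated hypotheses}. The only operator inequality you have is $V^2 \le D(1-\Delta)$, which gives $\langle W_{12}^2\rangle \le \langle V_{12}^2 \chi_{\{|V|>\alpha_N^{-1}\}}\rangle \le \langle V_{12}^2\rangle \lesssim \|\ph\|_{H^1}^2 = O(1)$: you gain nothing from the cutoff. Your attempted remedy --- ``$W^2 \le \alpha_N |V|\cdot V^2\cdot\chi$, absorb $V^2$ via $V^2\le D(1-\Delta)$'' --- leaves behind the unbounded factor $\alpha_N|V|$ multiplying $(1-\Delta)$, and there is no inequality that turns $|V|(1-\Delta)$ into an $H^1$-quantity; the step does not close. (For Coulomb $V$, $\langle W_{12}^2\rangle = O(\alpha_N)$ does hold if $\ph \in H^2$, via Sobolev embedding and the local Hardy inequality, but the lemma is stated and used for $\ph \in H^1$, and more importantly one would then be proving a strictly weaker version of the lemma by a harder route.) Similarly, the disjoint-pair term $\langle W_{12}W_{34}\rangle$ cannot be controlled by appealing to reduced-density convergence from Theorem~\ref{thm:CLS}, because that theorem gives trace-norm bounds, which control expectations of \emph{bounded} observables, not of the unbounded $W_{12}\otimes W_{34}$. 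So the real missing idea is the ``split Cauchy--Schwarz'' on $W_{12}$ itself: by putting one half-power of $|W_{12}|$ on each vector, you only ever need to estimate the quadratic form of $|W| \le \alpha_N V^2$, which is exactly what the hypothesis $V^2 \le D(1-\Delta)$ controls; this both rescues the estimate at $H^1$ regularity and gives the claimed $|t|$ rather than $|t|^2$.
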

The proof of this lemma can be found in \cite{CLS}[Lemma 2.1]. We can also compare the solution of the original Hartree equation (\ref{eq:hartree}), with the solution of the regularized Hartree equation (\ref{eq:reg-hartree}).
\begin{lemma}\label{lm:comp-ph}
Let $\ph \in H^1 (\bR^3)$, with $\| \ph \|_2=1$. Let $\ph_t$ and $\wt{\ph}_t$ denote the solutions of the Hartree equations (\ref{eq:hartree}) and, respectively, (\ref{eq:reg-hartree}), with initial data $\ph$. Then there are constants $C,K >0$ such that
\[ \| \ph_t - \wt{\ph}_t \| \leq C \alpha_N  e^{K |t|} \]
for every $t\in \bR$.
\end{lemma}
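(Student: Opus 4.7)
The plan is to run a direct energy estimate on the difference $\delta_t := \ph_t - \wt\ph_t$.  Subtracting (\ref{eq:hartree}) from (\ref{eq:reg-hartree}) and splitting the difference of nonlinearities as
\[ (V*|\ph_t|^2)\ph_t - (\wt V*|\wt\ph_t|^2)\wt\ph_t = A_t + B_t + C_t, \]
with
\[ A_t = ((V-\wt V)*|\ph_t|^2)\,\ph_t,\qquad B_t = (\wt V*(|\ph_t|^2 - |\wt\ph_t|^2))\,\ph_t,\qquad C_t = (\wt V*|\wt\ph_t|^2)\,\delta_t, \]
I obtain $i\partial_t\delta_t = -\Delta\delta_t + A_t + B_t + C_t$ with $\delta_0 = 0$.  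Since $-\Delta$ and multiplication by the real function $\wt V * |\wt\ph_t|^2$ are self-adjoint, their contributions to $\partial_t\|\delta_t\|^2 = 2\,\mathrm{Im}\,\langle\delta_t, A_t + B_t + C_t\rangle$ vanish, leaving
\[ \partial_t\|\delta_t\|^2 \leq 2|\langle\delta_t,A_t\rangle| + 2|\langle\delta_t,B_t\rangle|. \]

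The key algebraic observation is the pointwise bound $|V-\wt V|(x) \leq \alpha_N V^2(x)$, valid because on $\{|V|\geq\alpha_N^{-1}\}$ one has $|V-\wt V| = |V|-\alpha_N^{-1} \leq |V| \leq \alpha_N V^2$, while both sides vanish elsewhere.  Unfolding $\langle\delta_t,A_t\rangle$ as a double integral and applying Cauchy--Schwarz together with this bound gives
\[ |\langle\delta_t,A_t\rangle| \leq \alpha_N\Big(\!\!\int\! V^2(x-y)|\delta_t(x)|^2|\ph_t(y)|^2\,dxdy\Big)^{\!1/2}\Big(\!\!\int\! V^2(x-y)|\ph_t(x)|^2|\ph_t(y)|^2\,dxdy\Big)^{\!1/2}. \]
The assumption $V^2 \leq D(1-\Delta)$ yields $\int V^2(x-y)|f(y)|^2\,dy \leq D\|f\|_{H^1}^2$ uniformly in $x$, so Fubini together with $\|\ph_t\|_2 = 1$ produces $|\langle\delta_t,A_t\rangle| \leq \alpha_N D\|\ph_t\|_{H^1}^2\,\|\delta_t\|$.

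For $B_t$, I use $|\ph_t|^2 - |\wt\ph_t|^2 = \delta_t\overline{\ph_t} + \wt\ph_t\overline{\delta_t}$ and repeat the same Cauchy--Schwarz plus Fubini argument, now with $\wt V^2\leq V^2\leq D(1-\Delta)$, to obtain $|\langle\delta_t,B_t\rangle| \leq C(\|\ph_t\|_{H^1}^2 + \|\wt\ph_t\|_{H^1}^2)\|\delta_t\|^2$.  Standard propagation of $H^1$ regularity for the Hartree flow, applied both to $V$ and to $\wt V$, yields $\sup_t\|\ph_t\|_{H^1}, \sup_t\|\wt\ph_t\|_{H^1} \leq C\|\ph\|_{H^1}$, where the constant is uniform in $N$ precisely because $\wt V^2 \leq D(1-\Delta)$ holds with the same $D$ for all $N$.

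Assembling the two bounds and applying Young's inequality $\alpha_N\|\delta_t\| \leq \tfrac12\alpha_N^2 + \tfrac12\|\delta_t\|^2$, I reach
\[ \partial_t\|\delta_t\|^2 \leq C\alpha_N^2 + K\|\delta_t\|^2,\qquad \|\delta_0\|^2 = 0, \]
and Gronwall's inequality concludes with $\|\delta_t\|^2 \leq C\alpha_N^2 e^{K|t|}$, which is the claim.  The only mild obstacle is verifying the uniform-in-$N$ $H^1$ propagation for $\wt\ph_t$, but this is routine given the $N$-independent operator inequality $\wt V^2 \leq D(1-\Delta)$.
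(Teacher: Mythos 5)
Your proof is correct. It runs a clean $L^2$ energy estimate on $\delta_t = \ph_t - \wt\ph_t$: subtract the two Hartree equations, split the difference of nonlinearities into the three pieces $A_t$, $B_t$, $C_t$, drop the $-\Delta$ and $C_t$ contributions from $\partial_t\|\delta_t\|^2$ as purely imaginary, control $A_t$ via the pointwise bound $|V-\wt V|\leq\alpha_N V^2$ together with $V^2\leq D(1-\Delta)$, control $B_t$ via the factorization $|\ph_t|^2-|\wt\ph_t|^2=\delta_t\overline{\ph_t}+\wt\ph_t\overline{\delta_t}$ and the same operator inequality, invoke uniform-in-$N$ $H^1$ propagation for $\wt\ph_t$ (which follows from energy and mass conservation together with $\wt V^2\leq D(1-\Delta)$ with $N$-independent constant), and close with Gronwall. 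All the intermediate inequalities hold, and the final bound $\|\delta_t\|^2\leq C\alpha_N^2 e^{K|t|}$ gives exactly the statement after adjusting constants. The paper itself does not present a proof of this lemma; it defers to \cite{CLS}[Lemma 2.2]. Your argument is precisely the standard Gronwall-type stability estimate that one expects there, so this is not a genuinely different route, merely a self-contained reproduction of the cited proof.

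One small remark for completeness: you could spell out the uniform $H^1$ bound on $\wt\ph_t$ a bit more, since you call it ``routine.'' Conservation of the regularized energy $\frac12\|\nabla\wt\ph_t\|^2+\frac14\int(\wt V*|\wt\ph_t|^2)|\wt\ph_t|^2$ together with $\|\wt V*|\wt\ph_t|^2\|_\infty\leq D^{1/2}\|\wt\ph_t\|_{H^1}$ (from $\wt V^2\leq D(1-\Delta)$ and Cauchy--Schwarz) and $\|\wt\ph_t\|_2=1$ gives a bound on $\|\nabla\wt\ph_t\|^2$ that is quadratic on the left and linear on the right in $\|\wt\ph_t\|_{H^1}$, hence closes to a uniform bound in $t$ and $N$. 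With that spelled out, the argument is airtight.
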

The proof of this lemma can be found in \cite{CLS}[Lemma 2.2].

As a consequence of the last two lemmas, it is enough to show (\ref{eq:mom}) for the regularized quantities, assuming that the cutoff is sufficiently small. In other words, Lemma \ref{lm:mom} follows by the next lemma, where the original potential $V$ is replaced by its regularized version $\wt{V}$.
\begin{lemma} \label{lm:mom-reg}
Fix $k \in \bN$. Choose an integer $r > 3 + k/2$ and let $\alpha_N = N^{-r}$. Let $\wt{V}$, $\wt{H}_N$, $\wt{\ph}_t$ and $\wt{\cO}_t$ be defined as in (\ref{eq:wtV}), (\ref{eq:regham}), (\ref{eq:reg-hartree}), (\ref{eq:obser-reg}) respectively. Let $\wt{\psi}_{N,t} = e^{-i \wt{H}_N t} \ph^{\otimes N}$. Then
\begin{equation}\label{eq:mom-reg} \lim_{N \to \infty} \E_{\wt{\psi}_{N,t}} \wt{\cO}_t^k = \lim_{N \to \infty} \left\langle \wt{\psi}_{N,t} , \wt{\cO}_t^k \, \wt{\psi}_{N,t} \right\rangle = \left\{ \begin{array}{ll} 0 \quad &\text{if $k$ is odd} \\
\frac{k!}{2^{k/2} (k/2)!} \sigma_t^{2k} \quad &\text{if $k$ is even} \end{array} \right. \end{equation}
with $\sigma_t$ as defined in (\ref{eq:var}).
\end{lemma}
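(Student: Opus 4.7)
The plan is to translate the moment into a Fock-space computation using the fluctuation dynamics, observe that the relevant observable becomes — to leading order in $N^{-1/2}$ — a self-adjoint linear combination of creation and annihilation operators, and then invoke Wick's theorem for the vacuum expectation of its powers. To start, I would write $\ph^{\otimes N} = W(\sqrt{N}\,\ph)\,\xi_N$ with $\xi_N := W^*(\sqrt{N}\,\ph)\,\ph^{\otimes N}$, and use the intertwining identity $e^{-i\wt\cH_N t}\,W(\sqrt{N}\,\ph) = e^{i\wt\omega(t;0)}\,W(\sqrt{N}\,\wt\ph_t)\,\wt\cU(t;0)$ to get
\[
\E_{\wt\psi_{N,t}}\wt\cO_t^k \;=\; \left\langle\xi_N,\; \wt\cU^*(t;0)\bigl[W^*(\sqrt{N}\,\wt\ph_t)\,\wt\cO_t\,W(\sqrt{N}\,\wt\ph_t)\bigr]^k\,\wt\cU(t;0)\,\xi_N\right\rangle.
\]
The Weyl shift $W^*(\sqrt{N}\,\wt\ph_t)\,a_x\,W(\sqrt{N}\,\wt\ph_t) = a_x + \sqrt{N}\,\wt\ph_t(x)$, together with the centering $\langle\wt\ph_t,\,\wt O_t\wt\ph_t\rangle = 0$, gives
\[
W^*(\sqrt{N}\,\wt\ph_t)\,\wt\cO_t\,W(\sqrt{N}\,\wt\ph_t) \;=\; \phi(\wt O_t\,\wt\ph_t) + \tfrac{1}{\sqrt{N}}\,d\Gamma(\wt O_t) \;=:\; \phi_t + R_N,
\]
where $\phi(f) = a(f) + a^*(f)$; the $O(N)$ constant term cancels by the centering.

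Next I would expand $(\phi_t + R_N)^k$ via the noncommutative binomial and show that every term involving at least one $R_N$ contributes $o(1)$. The operator inputs are $\|\phi_t\,\psi\|\lesssim\|O\ph_t\|\,\|(\cN+1)^{1/2}\psi\|$ and $\|d\Gamma(\wt O_t)\,\psi\|\lesssim\|O\|\,\|\cN\psi\|$, combined with the $\cN$- and $\cK$-moment bounds for $\wt\cU(t;0)$ from Proposition~\ref{prop:1} and Lemma~\ref{lm:NKN}; the regularization $\wt V$ is what makes those bounds usable. At the same time, a Duhamel comparison of the generators $\cL(t)$ and $\cL_\infty(t)$ (whose difference vanishes as $N\to\infty$ on states of bounded $\cN$- and $\cK$-moments) allows me to replace $\wt\cU(t;0)$ by the limiting Bogoliubov dynamics $\cU_\infty(t;0)$, and $\wt\ph_t$ by $\ph_t$. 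The whole expression reduces to the vacuum expectation $\langle\Omega,\,\cU_\infty^*(t;0)\,\phi(O_t\ph_t)^k\,\cU_\infty(t;0)\,\Omega\rangle$ with $O_t := O - \langle\ph_t, O\ph_t\rangle$.

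At this point I would invoke Theorem~\ref{thm:bog}: $\cU_\infty^*(t;0)\,\phi(O_t\ph_t)\,\cU_\infty(t;0) = \phi(g_t)$ with $g_t := U(t;0)\,O_t\ph_t + J\,V(t;0)\,O_t\ph_t$. Because $a(g_t)\,\Omega = 0$, Wick's theorem (just pairing $a$'s with $a^*$'s on the vacuum) gives
\[
\left\langle\Omega,\,\phi(g_t)^k\,\Omega\right\rangle \;=\; \begin{cases} 0, & k \text{ odd},\\[2pt] \dfrac{k!}{2^{k/2}(k/2)!}\,\|g_t\|^k, & k \text{ even},\end{cases}
\]
and it remains to identify $\|g_t\|^2 = \sigma_t^2$. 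Decomposing $g_t = h - \lambda\,\ph$ with $h := U(t;0)O\ph_t + JV(t;0)O\ph_t$ and $\lambda := \langle\ph_t, O\ph_t\rangle$, using (\ref{eq:Ttph}) (which gives $U(t;0)\ph_t + JV(t;0)\ph_t = \ph$) together with the Bogoliubov relations $U^*U - V^*V = 1$ and $U^*(t;0)\,JV(t;0)\,J = V^*(t;0)\,JU(t;0)\,J$, one checks by a short algebraic manipulation that $\langle\ph, h\rangle = \lambda$, whence $\|g_t\|^2 = \|h\|^2 - |\langle\ph, h\rangle|^2 = \sigma_t^2$.

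The main obstacle I anticipate is that $\xi_N$ is not small: $\langle\xi_N,\cN\xi_N\rangle = 2N$, so direct application of Proposition~\ref{prop:1} to $\xi_N$ produces bounds like $\langle\xi_N,\wt\cU^*(t;0)\cN^{j}\wt\cU(t;0)\xi_N\rangle\lesssim N^{2j+2}$, which are hopelessly crude for the $R_N$-error control and the $\wt\cU\to\cU_\infty$ replacement described above. To bypass this I would first prove the conclusion for the coherent-state initial condition $W(\sqrt{N}\,\ph)\,\Omega$ — on which $\xi_N$ is effectively replaced by $\Omega$ and all $\cN$-moment bounds become $O(1)$ uniformly in $N$, so the Stage~2 arguments go through cleanly — and then show that the moments on $\ph^{\otimes N}$ and on $W(\sqrt{N}\,\ph)\,\Omega$ agree up to $o(1)$. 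This comparison, which expresses that the $\sqrt{N}$-scale Poisson fluctuations of the particle number in the coherent state are negligible for intensive one-body moments, can be implemented through the integral representation $\ph^{\otimes N} = c_N^{-1}\int_0^{2\pi}\!\tfrac{d\theta}{2\pi}\,e^{-iN\theta}\,W(e^{i\theta}\sqrt{N}\,\ph)\,\Omega$ together with a stationary-phase-style analysis in~$\theta$.
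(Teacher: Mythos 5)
Your scaffold --- Weyl-conjugating the observable to obtain $\phi(\wt O_c\wt\ph_t) + N^{-1/2}d\Gamma(\wt O_c)$, discarding the $N^{-1/2}$ part, passing from $\wt\cU(t;0)$ to $\cU_\infty(t;0)$ (Proposition~\ref{prop:3}), conjugating through the Bogoliubov transformation of Theorem~\ref{thm:bog}, and invoking Wick --- follows the paper's skeleton closely, and you are right that $\xi_N = W^*(\sqrt{N}\ph)\,\ph^{\otimes N}$ has particle-number moments of order $N$, so Propositions~\ref{prop:1},~\ref{prop:2} cannot be applied naively. But the proposed fix rests on a false identity and would give the wrong variance. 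The claim that $\langle\ph, h\rangle = \lambda := \langle\ph_t, O\ph_t\rangle$ (equivalently $\langle\ph, g_t\rangle = 0$) does not follow from (\ref{eq:Ttph}) and the relations $U^*U - V^*V = 1$, $U^*JVJ = V^*JUJ$. Carrying out that algebra gives
\[
\langle\ph, h\rangle \;=\; \lambda \;+\; 2\,\mathrm{Re}\,\bigl\langle V(t;0)\,\ph_t,\, Jh \bigr\rangle ,
\]
so $\mathrm{Im}\,\langle\ph, h\rangle = 0$ (this is exactly what (\ref{eq:im0}) establishes), but the real part is not forced to equal $\lambda$. It does vanish when $V(t;0) = 0$ --- e.g.\ $V\equiv 0$, where $\Theta(t;0)$ is unitary --- but not for a genuine interaction. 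Accordingly the limiting variance is $\sigma_t^2 = \|g_t\|^2 - |\langle\ph, g_t\rangle|^2$ as in (\ref{eq:var}), strictly smaller than $\|g_t\|^2$ in general. The moments on $W(\sqrt{N}\ph)\Omega$ and on $\ph^{\otimes N}$ therefore do \emph{not} agree in the limit: your contour-integral representation is correct, but the comparison is not $o(1)$ --- the width of the stationary-phase window in $\theta$ couples to the observable and produces precisely the $-|\langle\ph, g_t\rangle|^2$ correction (the imprint of conditioning on $\cN = N$ rather than Poisson $\cN$), so that route would force you to re-derive the paper's $\xi_N$-asymptotics anyway.

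The paper avoids the bad moments of $\xi_N$ differently and keeps $\xi_N$ in the final accounting. Because $\wt\cO_t$ and $e^{\pm i\wt\cH_N t}$ commute with $\cN$ and $\ph^{\otimes N}/\sqrt{N!} = d_N P_N W(\sqrt{N}\ph)\Omega$, one of the two projections can be dropped and the moment becomes the \emph{asymmetric} bilinear form $\langle\xi_N,\, \wt\cU^*(t;0)(\cdots)\,\wt\cU(t;0)\Omega\rangle$ of (\ref{eq:mome}): the vacuum sits on the right, so the $\cN$- and $\cK$-moment bounds of Propositions~\ref{prop:1},~\ref{prop:2} and Lemma~\ref{lm:NKN} always act on the controlled state $\wt\cU(t;0)\Omega$, while on the $\xi_N$ side one only ever needs the weak estimate $\|(\cN+1)^{-1/2}\xi_N\|\leq C$ of (\ref{eq:Nxi}), which is affordable because the observable supplies enough annihilation operators to trade $(\cN+1)$-powers. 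The $\xi_N$-pairing is then evaluated exactly: the sector asymptotics $\xi_N^{(2m)} \to (-1)^m/(2^m m!)$ of Proposition~\ref{prop:xi}, combined with the expansion of Lemma~\ref{lm:a*Omega} and the normal-ordering identity of Proposition~\ref{prop:normal-sum}, produce the alternating sum that resums to $\bigl(\|g_t\|^2 - \langle\ph, g_t\rangle^2\bigr)^\ell$. This canonical-versus-grand-canonical correction is the ingredient your proposal omits.
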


Using Lemma \ref{lm:mom-reg}, we can prove Lemma \ref{lm:mom} (and thus Theorem \ref{thm:CLT}) as follows.

\begin{proof}[Proof of Lemma \ref{lm:mom}]
Fix $k \in \bN$, $t\in \bR$. Choose $r > k/2 + 3$ and let $\alpha_N = N^{-r}$. Observe that
\[ \| \cO_t \| = \left\| \frac{1}{\sqrt{N}} \sum_{j=1}^N \left( O^{(j)} - \E_{\ph_t} O \right) \right\| \leq \sqrt{N} \, \| O - \E_{\ph_t} O \| \leq 2 \sqrt{N} \| O \| \]
The same bound holds for $\wt{\cO}_t$. Moreover, from Lemma \ref{lm:comp-ph}, we have
\[ \| \cO_t - \wt{\cO}_t \| \leq \sqrt{N} \left| \langle \ph_t , O \ph_t \rangle - \langle \wt{\ph}_t , O \wt{\ph}_t \rangle \right| \leq 2 \sqrt{N} \| \ph_t - \wt{\ph}_t \| \| O \| \leq C \sqrt{N} \alpha_N e^{K|t|} \]
This implies that
\[ \| \cO_t^k - \wt{\cO}_t^k \| \leq  C N^{k/2} \alpha_N e^{K|t|} \] for an appropriate $k$-dependent constant $C$.
Therefore, using Lemma \ref{lm:compare}, we find
\[ \begin{split} \Big|\E_{\psi_{N,t}} \cO^k_t - \E_{\wt{\psi}_{N,t}} \wt{\cO}^{k}_t \Big| = \; & \Big| \langle \psi_{N,t}, \cO_t^{k} \, \psi_{N,t} \rangle - \langle \wt{\psi}_{N,t} , \wt{\cO}_t^{k}\, \wt{\psi}_{N,t} \rangle \Big| \\ \leq \; & \Big| \langle \psi_{N,t}, \left( \cO_t^{k} - \wt{O}_t^{k} \right) \, \psi_{N,t} \rangle \Big| + \Big| \langle \psi_{N,t} , \wt{\cO}_t^{k} \, \psi_{N,t} \rangle
- \langle \wt{\psi}_{N,t} , \wt{\cO}_t^{k} \, \wt{\psi}_{N,t} \rangle \Big| \\ \leq \; & \left\| \cO^k_t - \wt{\cO}^k_t \right\| + 2 \left\| \psi_{N,t} - \wt{\psi}_{N,t} \right\| \, \| \wt{\cO}_t \|^{k} \\
\leq \; & C  \, N^{1+ k/2} \alpha_N  e^{K|t|} = C  N^{1+k/2-r} e^{K|t|} \end{split} \]
which converges to zero, as $N \to \infty$.
We conclude that
\[ \lim_{N \to \infty} \E_{\psi_{N,t}} \cO^k_t = \lim_{N \to \infty} \E_{\wt{\psi}_{N,t}} \wt{\cO}^k_t \]
and therefore (\ref{eq:mom}) follows from Lemma \ref{lm:mom-reg}.
\end{proof}

\section{Comparison of Dynamics}
\label{sec:comp}
\setcounter{equation}{0}

Finally, we will need to compare the (regularized) fluctuation dynamics $\wt{\cU} (t;s)$ with the limiting dynamics $\cU_\infty (t;s)$ defined in (\ref{eq:Uinfty}) (note that the limiting dynamics is defined in terms of the original potential $V$, and the solution of the Hartree equation (\ref{eq:hartree}), without cutoff; the reason is that, in the limit $N \to \infty$, the $N$-dependent cutoff $\alpha_N$ disappears).

\begin{proposition}\label{prop:3}
Suppose that, in the definition (\ref{eq:wtV}) of the regularized potential $\wt V$, the cutoff $\alpha_N$ is such that $\alpha_N \geq N^{-r}$, for some $r \in \bN$, $r \geq 1$. Then, for any $j \in \bN/2$, there exist constants $C, K >0$ depending on $r$ and $j$, such that
\begin{equation}\label{eq:claim-comp}
\begin{split}
\Big\| (\cN&+1)^j \left( \wt{\cU} (t;s) - \cU_\infty (t;s) \right) \psi \Big\| \\ & \leq C e^{K |t-s|} \left( N^{-1/2} + \alpha_N (N+1)^{j+1} \right) \, \left\| \left( \cK +  \cN^{(4r-2) (4+2j) + 6} + 1 \right)^{1/2} (\cN+1)^{2j+2} \psi \right\|
\end{split} \end{equation}
\end{proposition}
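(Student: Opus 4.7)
The natural approach is Duhamel's formula: on a suitable dense domain,
\[ [\wt\cU(t;s) - \cU_\infty(t;s)]\psi = -i \int_s^t \wt\cU(t;\tau) \,[\wt\cL(\tau) - \cL_\infty(\tau)]\, \cU_\infty(\tau;s)\psi \, d\tau. \]
Multiplying by $(\cN+1)^j$ and estimating under the integral, one commutes $(\cN+1)^j$ through $\wt\cU(t;\tau)$ via the polynomial growth estimate (\ref{eq:prop1-N}) of Proposition \ref{prop:1} (applied with exponent $2j$), which gives $\|(\cN+1)^j \wt\cU(t;\tau)\phi\|\lesssim e^{K|t-\tau|}\|(\cN+1)^{2j+2}\phi\|$ up to harmless constants. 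The task then reduces to a uniform-in-$\tau$ bound on $\|(\cN+1)^{2j+2}[\wt\cL(\tau)-\cL_\infty(\tau)]\cU_\infty(\tau;s)\psi\|$.

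The difference $\wt\cL(\tau)-\cL_\infty(\tau)$ decomposes naturally into three kinds of contribution, which I would bound separately. The first is the cubic term $\frac{1}{\sqrt N}\int \wt V(x-y)\, a_x^*(a_y^*\wt\ph_\tau(y)+a_y\overline{\wt\ph}_\tau(y))a_x\, dxdy$ of $\wt\cL$, absent in $\cL_\infty$; using $\wt V^2\le D(1-\Delta)$, the Cauchy--Schwarz estimate $\int dxdy\, \wt V^2(x-y)|\wt\ph_\tau(y)|^2 \|a_x\phi\|^2 \le D\|\wt\ph_\tau\|_{H^1}^2 \|\cN^{1/2}\phi\|^2$ bounds this term (after commutation with $(\cN+1)^{2j+2}$, which shifts it by $\pm 1$ in particle number) by $N^{-1/2}\|\cK^{1/2}(\cN+1)^{\mathrm{pow}}\phi\|$ on appropriate vectors. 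The second is the quartic term $\frac{1}{2N}\int \wt V(x-y) a_x^*a_y^*a_ya_x$: here one exploits \emph{both} available bounds on $\wt V$, namely $|\wt V|\le \alpha_N^{-1}$ and $\wt V^2\le D(1-\Delta)$, in the spirit of the splitting $\wt\cW\le \delta\cK + \delta\cN + \delta^{-1}N^{-2}\cN^3$ used in Lemma \ref{lm:KHN}, to extract an $\alpha_N(N+1)^{j+1}$-type factor. The third consists of ``cutoff'' differences from replacing $V\to \wt V$ and $\ph_\tau\to\wt\ph_\tau$ in the quadratic pieces of $\cL_\infty$: these are bounded using operator estimates for $(V-\wt V)\chi_{\{|V|\ge \alpha_N^{-1}\}}$ together with the control $\|\ph_\tau-\wt\ph_\tau\|\le C\alpha_N e^{K|\tau|}$ from Lemma \ref{lm:comp-ph}, producing an $\alpha_N$ prefactor.

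Inserting these three bounds into the Duhamel integral reduces everything to controlling $\|\cK^{1/2}(\cN+1)^m\, \cU_\infty(\tau;s)\psi\|$ for various $m$. These are handled by Proposition \ref{prop:2}: combining the growth estimate (\ref{eq:K2infty}) for $\cK^2$ with (\ref{eq:Ninfty}) for high powers of $\cN$, together with the elementary inequality $(\cN+1)^{2m}\cK \le \cK^2 + (\cN+1)^{4m}$, expresses everything in terms of $(\cK+\cN^{\mathrm{pow}}+1)^{1/2}(\cN+1)^{2j+2}\psi$, giving (\ref{eq:claim-comp}). The main obstacle is the quartic term: it is not intrinsically small in $N$, and the factor $\alpha_N(N+1)^{j+1}$ only emerges after carefully interpolating between $|\wt V|\le \alpha_N^{-1}$ and $\wt V^2\le D(1-\Delta)$ while keeping exact track of how many $\cN$-powers are consumed at each step. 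This bookkeeping of particle-number growth, propagated through both Proposition \ref{prop:1} (for $\wt\cU$) and Proposition \ref{prop:2} (for $\cU_\infty$), is what produces the large exponent $(4r-2)(4+2j)+6$ on the right-hand side of (\ref{eq:claim-comp}).
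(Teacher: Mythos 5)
Your proposal is structurally sound, but it deliberately (or inadvertently) reverses the Duhamel ordering used in the paper, which leads to a genuinely different way of distributing the work. The paper writes
\[
\wt{\cU}(t;s) - \cU_\infty(t;s) = -i\int_s^t \cU_\infty(t;\tau)\bigl[\wt\cL(\tau)-\cL_\infty(\tau)\bigr]\wt{\cU}(\tau;s)\,d\tau,
\]
putting $\cU_\infty$ on the \emph{left} and $\wt{\cU}$ on the \emph{right}. This means $(\cN+1)^j$ is commuted through $\cU_\infty(t;\tau)$ via Proposition~\ref{prop:2}(\ref{eq:Ninfty}), which costs only $(\cN+1)^j$, not $(\cN+1)^{2j+2}$; the price is that the kinetic-energy control must be extracted from the \emph{regularized} evolution $\wt\cU(\tau;s)$, for which no $\cK^2$ bound is available. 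This is exactly why the paper proves and invokes Lemma~\ref{lm:NKN}, whose weak, $N$-dependent estimates on $(\cN+1)^m\cK(\cN+1)^m$ along $\wt\cU$ are the workhorse for the $\cL_2$ and $\cL_4$ contributions. Your version places $\wt\cU(t;\tau)$ on the left and $\cU_\infty(\tau;s)$ on the right; you then pay more in $\cN$-powers when commuting $(\cN+1)^j$ through $\wt\cU$ (Proposition~\ref{prop:1}(\ref{eq:prop1-N}) loses the extra $+2$), but you get to control kinetic energy directly from Proposition~\ref{prop:2}(\ref{eq:K2infty}) for $\cU_\infty$, so Lemma~\ref{lm:NKN} is entirely avoided. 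Both routes appear to close; yours is arguably cleaner since the $\cK^2$ bound for $\cU_\infty$ is already needed elsewhere (in Theorem~\ref{thm:bog}), while the paper's Lemma~\ref{lm:NKN} is a separate, purpose-built inductive estimate. One caveat: because you commute $(\cN+1)^j$ through $\wt\cU$ rather than $\cU_\infty$, and because the cutoff inequality (\ref{eq:NL4N}) is then applied with $j\mapsto 2j+2$, the explicit exponent in your final bound would come out larger than $(4r-2)(4+2j)+6$; you assert the same constant without carrying the bookkeeping through, so this claim as written is not justified (though a qualitatively similar, $j$- and $r$-dependent exponent is certainly obtainable, which is all that matters for the downstream application with $j<r-1$).
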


Note that the bound (\ref{eq:claim-comp}) is only useful for $j < r-1$; it will be applied under this condition.

\begin{proof}
We fix $t \geq 0$ and $s =0$ (all other cases can be treated analogously). We use
\begin{equation}\label{eq:comp1}  \wt{\cU} (t;0) - \cU_\infty (t;0) = \int_0^t d\tau \, \cU_\infty (t;\tau) \left( \wt{\cL} (\tau) - \cL_\infty (\tau) \right) \wt{\cU} (\tau;0) \end{equation}
where $\wt{\cL} (t)$ is the generator of the regularized fluctuation dynamics $\wt{\cU} (t;s)$ and has the form (\ref{eq:L}), with $V$ and $\ph_t$ replaced by $\wt{V}$ and $\wt{\ph}_t$.
Hence
\begin{equation} \label{eq:dec-cL}
\begin{split}
\left\| (\cN+1)^j \left( \wt{\cU} (t;0) - \cU_\infty (t;0) \right) \psi \right\| \leq \; & \int_0^t d \tau \, \left\| (\cN+1)^j \cU_\infty (t;\tau) \cL_2 (\tau) \wt{\cU} (\tau;0) \psi \right\| \\ &+ \int_0^t d \tau \, \left\| (\cN+1)^j \cU_\infty (t;\tau) \cL_3 (\tau) \wt{\cU} (\tau;0) \psi \right\| \\ &+ \int_0^t d \tau \, \left\| (\cN+1)^j \cU_\infty (t;\tau) \cL_4 \wt{\cU} (\tau;0) \psi \right\|
\end{split}\end{equation}
where we decomposed
\[ \wt{\cL} (\tau) - \cL_\infty (\tau) = \cL_2 (\tau) + \cL_3 (\tau) + \cL_4 \]
with
\[ \begin{split} \cL_2 (\tau) = & \; \int dx \left(  \wt{V}* |\wt{\ph}_t|^2 (x) - V*|\ph_t|^2 (x) \right) a_x^* a_x \\ &+ \int dx dy \, \left( \wt{V} (x-y) \wt{\ph}_t (x) \overline{\wt{\ph}} (y) -  V(x-y) \ph_t (x) \overline{\ph}_t (y) \right) a_x^* a_y \\ & + \int dx dy \left( \wt{V} (x-y) \wt{\ph}_t (x) \wt{\ph}_t (y) - V(x-y) \ph_t (x) \ph_t (y) \right) a_x^* a_y^* + \text{h.c} \end{split} \]
and
\[ \begin{split}
\cL_3 (\tau ) &= \frac{1}{\sqrt{N}} \int dx dy \, \wt{V} (x-y) \, a_x^* \left( \wt{\ph}_t (y) a_y^* + \overline{\wt{\ph}}_t (y) a_y \right) a_x \\
\cL_4 &=\frac{1}{N} \int dx dy \, \wt{V} (x-y) \, a_x^* a_y^* a_y a_x
\end{split} \]
We begin by estimating the first term on the r.h.s. of (\ref{eq:dec-cL}).
{F}rom Proposition \ref{prop:2} we have
\begin{equation}\label{eq:comp2} \left\| (\cN+1)^j  \cU_\infty (t;\tau) \cL_2 (\tau) \wt{\cU} (\tau;0) \psi \right\| \leq C e^{K (t-\tau)} \left\| (\cN+1)^{j} \cL_2 (\tau) \wt{\cU} (\tau;0) \psi \right\| \end{equation}
We write
\[ \cL_2 (\tau) = A_{1,\tau} + A_{2,\tau} + A_{3,\tau} \]
where
\[ \begin{split}  A_{1,\tau} &= \int dx \left(  \wt{V}* |\wt{\ph}_t|^2 (x) - V*|\ph_t|^2 (x) \right) a_x^* a_x \\ A_{2,\tau}  &= \int dx dy \, \left( \wt{V} (x-y) \wt{\ph}_t (x) \wt{\ph} (y) -  V(x-y) \ph_t (x) \overline{\ph}_t (y) \right) a_x^* a_y \\ A_{3,\tau} &= B_\tau +B_\tau^* \end{split} \]
with
\[ B_\tau =  \int dx dy \left( \wt{V} (x-y) \wt{\ph}_t (x) \wt{\ph}_t (y) - V(x-y) \ph_t (x) \ph_t (y) \right) a_x^* a_y^*  \]
{F}rom (\ref{eq:comp2}), we find
\begin{equation}\label{eq:comp3} \begin{split}
\Big\| (\cN+1)^j \cU_\infty &(t;\tau) \cL_2 (\tau) \wt{\cU} (\tau;0) \psi \Big\| \\ \leq \; &C e^{K (t-\tau)} \left\| A_{1,\tau} \, (\cN+1)^{j} \wt{\cU} (\tau;0) \psi \right\| +  C e^{K (t-\tau)} \left\| A_{2,\tau} \, (\cN+1)^{j} \wt{\cU} (\tau;0) \psi \right\| \\ &+ C e^{K (t-\tau)} \left\| B_{\tau} \, (\cN+3)^{j} \wt{\cU} (\tau;0) \psi \right\| + C e^{K (t-\tau)} \left\| B^*_{\tau} \, (\cN-1)^{j} \wt{\cU} (\tau;0) \psi \right\| \end{split} \end{equation}
Using $|V - \wt{V}| \leq \alpha_N |V|^2$, the assumption $V^2 \leq D (1-\Delta)$, the propagation of regularity bound (\ref{eq:phHs}) and the bound $\| \ph_t - \wt{\ph}_t \|_2 \leq C \alpha_N e^{K|t|}$ from Lemma \ref{lm:comp-ph}, we find a constant $C$ such that
\[
\sup_{x} \, \left| (\wt{V}* |\wt{\ph}_\tau|^2) (x) - (V*|\ph_\tau|^2) (x) \right|  \leq \; C \alpha_N e^{K \tau}
\]
This bounds implies that
\[ A_{1,\tau}^2 \leq C \alpha_N^2 e^{K \tau} \, \cN^2 \]
Similarly, we find $A_{2,\tau}^2 \leq C \alpha_N^2 e^{K\tau} \, \cN^2$.
To bound the last two terms on the r.h.s. of (\ref{eq:comp3}), we write
\[ \begin{split} B_\tau = \; & \int dx dy \, \left( \wt{V} (x-y) - V (x-y) \right) \wt{\ph}_t (x) \wt{\ph}_t (y) \, a_x^* a_y^* \\ &+ \int dx dy \, V (x-y) \left(\wt{\ph}_t (x) - \ph_t (x) \right) \wt{\ph}_t (y) \, a_x^* a_y^* +\int dx dy \, V (x-y) \ph_t (x) \left(\wt{\ph}_t (y) - \ph_t (y) \right) \, a_x^* a_y^* \end{split} \]
Cauchy-Schwarz inequality implies
\begin{equation}\label{eq:BIII} \begin{split} \left\langle \psi , B_{\tau} B^*_{\tau}\psi \right\rangle  \leq \; & \int dx dy dz dw \, \left(\wt{V} (x-y)- V (x-y)\right) \left( \wt{V}(z-w) - V (z-w) \right) \\ &\hspace{1cm} \times \wt{\ph}_t (x) \wt{\ph}_t (y) \overline{\wt{\ph}}_t (z) \overline{\wt{\ph}}_t (w)
\left\langle \psi, a_x^* a_y^* a_z a_w \psi \right\rangle   \\ &+ \int dx dy dz dw \, V(x-y)V(z-w) (\wt{\ph}_t (x) - \ph_t (x)) (\overline{\wt{\ph}}_t (z) - \overline{\ph}_t (z)) \\& \hspace{1cm} \times  \wt{\ph}_t (y) \overline{\wt{\ph}}_t (w)
\left\langle \psi, a_x^* a_y^* a_z a_w \psi \right\rangle \\ &+ \int dx dy dz dw \, V(x-y)V(z-w) (\wt{\ph}_t (y) - \ph_t (y)) (\overline{\wt{\ph}}_t (w) - \overline{\ph}_t (w)) \\ &\hspace{1cm} \times \ph_t (x) \overline{\ph}_t (z)
\left\langle \psi, a_x^* a_y^* a_z a_w \psi \right\rangle \\  = \; &\text{I} + \text{II} + \text{III} \end{split} \end{equation}
Using $|V-\wt{V}| \leq \alpha_N V^2$ and the operator inequality $V^2 \leq D (1-\Delta)$, the first term is bounded by
\[ \begin{split}
\text{I} \leq \; & \alpha_N^2 \int dx dy dz dw V^2 (x-y) \, V^2 (z-w) |\wt{\ph}_t (x)| |\wt{\ph}_t (y)| |\wt{\ph}_t (z)| |\wt{\ph}_t (w)| \, \| a_x a_y \psi \| \, \| a_z a_w \psi \| \\ \leq \; & \alpha_N^2 \int dx dy dz dw V^2 (x-y) \, V^2 (z-w) \, |\wt{\ph}_t (z)|^2 |\wt{\ph}_t (w)|^2 \, \| a_x a_y \psi \|^2 \\ \leq \; &C \alpha_N^2 \int dxdy \, \left( \| \nabla_x a_x a_y \psi \|^2 + \| a_x a_y \psi \|^2 \right) \\ \leq \; &C \alpha_N^2 \left\langle \psi, \left( \cK \cN + \cN^2 \right) \psi \right\rangle
\end{split} \]
The second term on the r.h.s. of (\ref{eq:BIII}), on the other hand, can be estimated by
\[ \begin{split}
\text{II} \leq \; & \int dx dy dz dw \, V^2 (x-y) \, |\ph_t (y)|^2 | \ph_t (x) - \wt{\ph}_t (x)|^2 \, \| a_z a_w \psi \|^2 \\ \leq \; &  \| V^2 * |\ph_t|^2 \|_\infty \, \| \ph_t - \wt{\ph}_t \|_2^2  \, \langle \psi, \cN^2 \psi \rangle \\ \leq \; & C \alpha_N^2 e^{Kt} \langle \psi, \cN^2 \psi \rangle
\end{split} \]
where we used Lemma \ref{lm:comp-ph}. The third term on the r.h.s. of (\ref{eq:BIII}) can be handled similarly. Hence
\[ \langle \psi, B_\tau B_\tau^* \psi \rangle \leq C \alpha_N^2 e^{K \tau} \langle \psi, \left( \cK \cN + \cN^2 \right) \psi \rangle \]
Analogously, one can prove that
\[ \langle \psi, B^*_\tau B_\tau \psi \rangle \leq C \alpha_N^2 e^{K \tau} \langle \psi, \left( \cK \cN + \cN^2 + 1 \right) \psi \rangle \]
(the factor $1$ takes into account the contribution of the commutator between $B_\tau$ and $B_\tau^*$). We conclude from (\ref{eq:comp3}) that
\begin{equation}\label{eq:L2-fin} \begin{split}
\left\| (\cN+1)^j \cU_\infty (t;\tau) \cL_2 (\tau) \wt{\cU} (\tau;0) \psi \right\| \leq \; &C \alpha_N \, e^{K t} \left\| \, (\cN+1)^{j+1} \wt{\cU} (\tau;0) \psi \right\| \\ &+ C \alpha_N \, e^{K t} \left\| \cK^{1/2} (\cN+1)^{j+1/2} \wt{\cU} (\tau;0) \psi \right\| \\
\leq \; &C \alpha_N \, e^{K t} \left\| \, (\cN+1)^{2j+3} \psi \right\| \\ &+C \alpha_N \, (N+1)^{\lceil j+1/2 \rceil} e^{Kt} \, \left\| \cK^{1/2} \wt{\cU} (\tau;0) \psi \right\| \\ &+ C \alpha_N e^{Kt} \left\| (\cK + \cN^2 + 1)^{1/2} \, ( \cN + N+1)^{\lceil j+1/2 \rceil} \psi \right\| \\
\leq \; &C \alpha_N \, (N+1)^{j+1} e^{Kt} \, \left\| (\cK + \cN^2 + 1)^{1/2} \, ( \cN + 1)^{2j+2} \psi \right\|
\end{split} \end{equation}
where we applied Proposition \ref{prop:1} and Lemma \ref{lm:NKN} (and where $\lceil j+1/2 \rceil$ denotes the smallest integer larger or equal to $j+1/2$).

To bound the second term on the r.h.s. of (\ref{eq:dec-cL}), we use the inequality
\[ \cL_3 (t) \, (\cN +1)^{2j} \cL_3 (t) \leq \frac{C}{N} (\cN+1)^{2j+3} \]
which is proven in \cite{CLS}[Lemma 6.3]. This implies that
\begin{equation}\label{eq:L3-fin} \begin{split}
\left\| (\cN+1)^j \cU_\infty (t;\tau) \cL_3 (\tau) \wt{\cU} (\tau;0) \psi \right\| \leq \; & C e^{K(t-\tau)} \left\| (\cN+1)^j \cL_3 (\tau) \wt{\cU} (\tau;0) \psi \right\| \\ \leq \; & C N^{-1/2} e^{K(t-\tau)} \left\| (\cN+1)^{j+3/2} \wt{\cU} (\tau;0) \psi \right\| \\ \leq \; &C N^{-1/2} e^{Kt} \, \| (\cN+1)^{2j+4} \psi \| \end{split} \end{equation}

Finally, to bound the last term on the r.h.s. of (\ref{eq:dec-cL}), we note that, assuming $\alpha_N \geq N^{-r}$,
\begin{equation}\label{eq:NL4N} (\cN+1)^j \cL_4^2 (\cN+1)^j \leq C N^{-1} \left( \cK + \cN^{2r (2j+4)} +1 \right) \end{equation}
This is in fact the only place in the proof where a cutoff in the potential is needed. To prove (\ref{eq:NL4N}), we notice that, when restricted to the $n$-particle sector
\[\begin{split} \cL_4^2 (\cN+1)^{2j} \downharpoonright_{\cF^{(n)}} = \; & \frac{(n+1)^{2j}}{N^2} \left( \sum_{i<j}^n \wt{V} (x_i -x_j) \right)^2 \leq \frac{(n+1)^{2j+2}}{N^2} \sum_{i<j}^n \wt{V}^2 (x_i -x_j) \\
\leq \; &\frac{(n+1)^{2j+3} \, {\bf 1} \left( n \leq N^{\frac{1}{2j+3}} \right)}{N^2} \sum_{j=1}^n (1-\Delta_{x_j})+ \frac{(n+1)^{2j+4}\, {\bf 1} \left( n > N^{\frac{1}{2j+3}} \right)}{\alpha_N^2 N^2} \\ \leq \; & \frac{1}{N} \left( \cK + \cN \right) \downharpoonright_{\cF^{(n)}} + \frac{ (\cN+1)^{2j+4} \, {\bf 1} (\cN > N^{\frac{1}{2j+3}})}{N^2 \alpha_N^2} \downharpoonright_{\cF^{(n)}}
\end{split} \]
Assuming $\alpha_N \geq N^{-r}$, we find
\[ \frac{(\cN+1)^{2j+4} {\bf 1} (\cN > N^{\frac{1}{2j+3}})}{N^2 \alpha^2_N} \leq N^{2r-2- \frac{\ell}{2j+3}} \, (\cN+1)^{2j+4+\ell} \]
for any $\ell  \geq 0$. Choosing $\ell = (2r-1) (2j+3)$, we get (\ref{eq:NL4N}). This implies that the third term on the r.h.s. of (\ref{eq:dec-cL}) is bounded by
\begin{equation}\label{eq:L4-fin} \begin{split} \| (\cN+1)^j \cU_\infty (t,\tau)\, \cL_4 \wt{\cU} (t;0) \psi \| \leq \; & C N^{-1/2} e^{K (t-\tau)} \left\| \left( \cK + \cN^{2r (2j+4)} +1 \right)^{1/2} \, \wt{\cU} (\tau,0) \psi \right\| \\ \leq \; & C N^{-1/2} \, e^{Kt} \, \left\| \left( \cK +  \cN^{4r (4+2j) + 2} + 1 \right)^{1/2} \psi \right\|
\end{split} \end{equation}
where we used again Proposition \ref{prop:1}.
Inserting (\ref{eq:L2-fin}), (\ref{eq:L3-fin}), (\ref{eq:L4-fin}) in (\ref{eq:dec-cL}), we obtain (\ref{eq:claim-comp}).
\end{proof}

\section{Computation of the moments}
\label{sec:mom}

In this section we prove Lemma \ref{lm:mom-reg}. To this end, we fix $k \in \bN$ and we choose an integer $r > k/2 +3$. We let $\alpha_N = N^{-r}$, and we define the regularized potential $\wt{V}$ as in (\ref{eq:wtV}) in terms of $\alpha_N$. For the given compact operator $O$ on $L^2 (\bR^3)$ we introduce the notation $\wt{O}_c = O - \langle \wt{\ph}_t , O \wt{\ph}_t \rangle$. Later, we will also use the notation $O_c = O - \langle \ph_t , O \ph_t \rangle$ (as a rule, notations with a tilde denote quantities defined in terms of the regularized potential $\wt{V}$ and the solution $\wt{\ph}_t$ of the corresponding Hartree equation (\ref{eq:reg-hartree})). We write
\[ \begin{split} \E_{\wt{\psi}_{N,t}} \wt{\cO}_t^k = \; &\frac{1}{N^{k/2}} \left\langle \wt{\psi}_{N,t} , \left( \sum_{j=1}^N \wt{O}_c^{(j)} \right)^{k} \, \wt{\psi}_{N,t} \right\rangle \\ = \; &\frac{1}{N^{k/2}} \sum_{m=1}^{k} {N \choose m} \sum_{i_1 + \dots + i_m = k} \frac{k!}{i_1 ! \dots i_m!} \left\langle \wt{\psi}_{N,t} , \left(\wt{O}_c^{i_1} \otimes \dots \otimes \wt{O}_c^{i_m} \otimes 1^{(N-m)} \right) \wt{\psi}_{N,t} \right\rangle
\end{split} \]
The sum over $i_1, \dots , i_m$ runs over all integers $i_1, \dots , i_m \geq 1$ with $i_1 + \dots  + i_m = k$. We separate next now the indices $i_1, \dots , i_m$ which are equal to one ($r$ denotes the number of such indices). We write
\[
\begin{split}
\E_{\wt{\psi}_{N,t}} \wt{\cO}_t^{k}
= \; &\frac{1}{N^{k/2}} \sum_{m=1}^{k} {N \choose m} \sum_{r=1}^m {m \choose r} \sum_{j_1 + \dots + j_{m-r} = k-r} \frac{k!}{j_1 ! \dots j_{m-r}!} \\ &\hspace{3cm} \times \left\langle \wt{\psi}_{N,t} , \left(\wt{O}_c \otimes \dots \otimes \wt{O}_c \otimes \wt{O}_c^{j_1} \otimes \dots \otimes \wt{O}_c^{j_{m-r}} \otimes 1^{(N-m)} \right) \wt{\psi}_{N,t} \right\rangle
\end{split} \]
where the sum runs over $j_1, \dots , j_{m-r}$ runs over all integers $j_1, \dots , j_{m-r} \geq 2$ with $j_1 + \dots + j_{m-r} = k-r$. In the following, we denote by $\wt{\psi}_{N,t}$ the vector $\{ 0, \dots , 0 , \wt{\psi}_{N,t} , 0, \dots \}$ in the Fock space $\cF$. Then, if $\wt{O}_c^i (x;y)$ denotes the integral kernel of the operator $\wt{O}_c^i$, we obtain
\[ \begin{split}
\E_{\wt{\psi}_{N,t}} \wt{\cO}_t^{k} = \; &\frac{1}{N^{k/2}} \sum_{m=1}^{k} \frac{1}{N^m} {N \choose m} \sum_{r=1}^m {m \choose r} \sum_{j_1 + \dots + j_{m-r} = k-r} \frac{k!}{j_1 ! \dots j_{m-r}!} \\ &\hspace{.5cm} \times \int dx_1 dx'_1 \dots dx_m dx'_m \,
\wt{O}_c (x_1 ; x'_1) \dots \wt{O}_c (x_r; x'_r) \wt{O}_c^{j_1} (x_{r+1} ; x'_{r+1}) \dots \wt{O}_c^{j_{m-r}} (x_m ; x'_m) \\ &\hspace{3cm}  \times \left\langle \frac{a^* (\ph)^N}{\sqrt{N!}} \Omega ,
 e^{i\wt{\cH}_N t} \, a^*_{x_1} \dots a^*_{x_m} a_{x'_m} \dots a_{x_1} \, e^{-i\wt{\cH}_N t} \frac{a^* (\ph)^N}{\sqrt{N!}} \Omega \right\rangle
\end{split} \]
with the (regularized) Hamiltonian $\wt{\cH}_N$ defined in (\ref{eq:regham-FS}). Next, we notice that
\[ \frac{a^* (\ph)^N}{\sqrt{N!}} \Omega = d_N P_N W (\sqrt{N} \ph) \Omega \] where $W(\sqrt{N} \ph)$ is the Weyl operator defined in (\ref{eq:weyl}), $P_N$ is the orthogonal projection onto the $N$-particle sector of the Fock space $\cF$, and $d_N = \sqrt{N!} e^{N/2} N^{-N/2} \simeq N^{1/4}$.
Denoting by $\wt{\cU} (t;0)$ the (regularized) fluctuation dynamics introduced in (\ref{eq:U-reg}), we find
\begin{equation}\label{eq:mome} \begin{split}
\E_{\wt{\psi}_{N,t}} \wt{\cO}_t^{k} = \; &\frac{1}{N^{k/2}} \sum_{m=1}^{k} \frac{1}{N^m} {N \choose m} \sum_{r=1}^m {m \choose r} \sum_{j_1 + \dots + j_{m-r} = k-r} \frac{k!}{j_1 ! \dots j_{m-r}!}  \\ &\hspace{.5cm} \times \int dx_1 dx'_1 \dots dx_m dx'_m \,
\wt{O}_c (x_1 ; x'_1) \dots \wt{O}_c (x_r; x'_r) \wt{O}_c^{j_1} (x_{r+1} ; x'_{r+1}) \dots \wt{O}_c^{j_{m-r}} (x_m ; x'_m) \\ &\hspace{1cm} \times  \left\langle \xi_N , \wt{\cU}^* (t;0) \, (a^*_{x_1} + \sqrt{N} \overline{\wt{\ph}}_t (x_1)) \dots (a^*_{x_m} + \sqrt{N} \overline{\wt{\ph}}_t (x_m))\right. \\ & \hspace{3cm} \left. \times (a_{x'_m} + \sqrt{N}\wt{\ph}_t (x'_m)) \dots (a_{x_1} + \sqrt{N} \wt{\ph}_t (x_1)) \, \wt{\cU} (t;0) \Omega \right\rangle
\end{split}
\end{equation}
where we introduced the notation
\begin{equation}\label{eq:xi-def} \xi_N = d_N W^* (\sqrt{N} \ph) \frac{a^* (\ph)^N}{\sqrt{N!}} \Omega \end{equation}
Note that, from \cite[Lemma 7.1]{CLS}, we have the estimate
\begin{equation}\label{eq:Nxi} \| (\cN+1)^{-1/2} \xi_N \| \leq C \end{equation}
for a constant $C>0$, uniformly in $N$. When we expand the product of the $2m$ parenthesis on the r.h.s. of (\ref{eq:mome}), we find several terms, with different numbers of factors $\wt{\ph}_t$ or $\overline{\wt{\ph}}_t$.
Each term has the form (after appropriate change of variables, and with a constant $C$ depending on $k,m,r,j_1, \dots , j_{m-r}$)
\begin{equation}\label{eq:typ-term} \begin{split}
C \, &N^{\frac{a+b+2d-k}{2}-m} \, {N \choose m} \langle \ph_t, \wt{O}_c^{r_1} \ph_t \rangle \dots \langle \ph_t, \wt{O}_c^{r_d} \ph_t \rangle
\int dx_1 dx'_1 \dots dx_n dx'_n \, \wt{O}_c^{p_1} (x_1, x'_1) \dots \wt{O}_c^{p_n} (x_n , x'_n) \\ &\times \left\langle \wt{\cU} (t;0) \xi_N , a^* (\wt{O}_c^{q_1} \ph_t) \dots a^* (\wt{O}_c^{q_a} \ph_t) a_{x_1}^* \dots a_{x_n}^* a_{x'_n} \dots a_{x'_1} \,
a(\wt{O}_c^{s_1} \ph_t) \dots a(\wt{O}_c^{s_b} \ph_t) \wt{\cU} (t;0) \Omega
\right\rangle
\end{split} \end{equation}
where $n,a,b,d$ are integers with $n+a+b+d = m$, and $\{p_\ell\}_{\ell=1}^n$, $\{ q_\ell \}_{\ell=1}^a$,$\{ r_\ell \}_{\ell=1}^d$, $\{ s_\ell \}_{\ell=1}^b$ are appropriate sequences of integers ($r$ of them are equal to 1). Note that $a+b+2d$ is the total number of factors $\ph_t$ and $\overline{\ph}_t$; since each such term carries a weight $N^{1/2}$, this explains the appearance of the factor $N^{(a+b+2d)/2}$). The absolute value of (\ref{eq:typ-term}) is bounded above by
\[ \begin{split}
C &N^{\frac{a+b+2d-k}{2}} \| \wt{O}_c \|^{r_1 + \dots + r_d} \int dx'_1 \dots dx'_n  \Big| \Big\langle \wt{\cU} (t;0) \xi_N , a^* (\wt{O}_c^{q_1} \ph_t) \dots a^* (\wt{O}_c^{q_a} \ph_t) \\ & \hspace{2cm} \times a^* (\wt{O}_c^{p_1} (.,x'_1)) \dots a^* (\wt{O}_c^{p_n} (.,x'_n))  a_{x'_n} \dots a_{x'_1} \, a(\wt{O}_c^{s_1} \ph_t) \dots a(\wt{O}_c^{s_b} \ph_t) \wt{\cU} (t;0) \Omega \Big\rangle \Big| \\ \leq \; & C N^{\frac{a+b+2d-k}{2}} \| \wt{O}_c \|^{r_1 + \dots + r_d} \int dx'_1 \dots dx'_n \\ &\hspace{.5cm} \times \left\| (\cN+1)^{2+(n+a)/2} \, a_{x'_n} \dots a_{x'_1} \,
a(\wt{O}_c^{s_1} \ph_t) \dots a(\wt{O}_c^{s_b} \ph_t) \wt{\cU} (t;0) \Omega
\right\| \\ &\hspace{.5cm} \times \left\| (\cN+1)^{-2-(n+a)/2} a (\wt{O}_c^{p_n} (.,x'_n)) \dots a (\wt{O}_c^{p_1} (.,x'_1)) a  (\wt{O}_c^{q_a} \ph_t) \dots  a (\wt{O}_c^{q_1} \ph_t) \wt{\cU} (t;0) \xi_N \right\| \\ \leq \; &  C N^{\frac{a+b+2d-k}{2}} \| \wt{O}_c \|^{2(s_1 + \dots + s_b) + 2 (r_1 + \dots + r_d)} \, \| (\cN+1)^{2+ n +(a+b)/2} \wt{\cU} (t;0) \Omega \| \\ &+
C N^{\frac{a+b+2d-k}{2}} \| \wt{O}_c \|^{2(q_1 + \dots + q_a)} \int dx'_1 \dots dx'_n  \| \wt{O}_c^{p_n} (.,x'_n) \|^2 \, \dots \| \wt{O}_c^{p_1} (.,x'_1) \|^2  \| (\cN+1)^{-2} \wt{\cU} (t;0) \xi_N \|^2
\\ \leq \; &
C \, N^{\frac{a+b+2d-k}{2}} \left( \| \wt{O}_c \|^{2(s_1 + \dots + s_b+r_1 + \dots + r_d)} + \| \wt{O}_c \|^{2(q_1 + \dots + q_a)} \| \wt{O}_c^{p_n}\|_{\text{HS}}^2 \, \dots \| \wt{O}_c^{p_1}\|_{\text{HS}}^2 \| (\cN+1)^{-1/2} \xi_N \|^2  \right)
\end{split} \]
where in the last line, we used Proposition \ref{prop:1} (which implies that $\| (\cN+1)^{-2} \wt{\cU} (t;0) (\cN+1)^{1/2} \| < C$). {F}rom (\ref{eq:Nxi}), the absolute value of (\ref{eq:typ-term}) is thus controlled by
\[ C \, N^{\frac{a+b+2d-k}{2}} \left( \| O \|^{2(s_1 + \dots + s_b+r_1 + \dots + r_d)} + \| O \|^{2(q_1 + \dots + q_a+p_1 + \dots +p_n-n)} \, \| O \|_{\text{HS}}^{2n} \right) \]
Hence, we conclude that only the terms with $k\leq a+b+2d$ give an important contribution in the limit $N \to \infty$. This implies that $m \geq k/2$ (and $m \geq (k+1)/2$, if $k$ is odd). Moreover, terms where both $\overline{\wt{\ph}}_t (x_k)$ and $\wt{\ph}_t (x_k)$ appear, for some $k =1,\dots ,r$, vanish because $\langle \wt{\ph}_t , \wt{O}_c \, \wt{\ph}_t \rangle =0$ (this is the reason for separating the indices which are equal to one). These two observations imply that only terms with $r \leq 2m - k$ may contribute in the limit $N \to \infty$ (if $r \geq 2m - k + 1$, terms with at least $k$ factors of $\wt{\ph}_t$ and $\overline{\wt{\ph}}_t$ will contain at least one factor $\langle \wt{\ph}_t, \wt{O}_c \wt{\ph}_t \rangle$ and therefore will vanish). On the other hand, since $j_1, \dots , j_{m-r} \geq 2$, we have
\[ k - r = j_1 + \dots + j_{m-r} \geq 2 (m-r) \]
which implies that $r \geq 2m - k$. We conclude that only terms with $r = 2m - k$, with $j_1 = \dots = j_{m-r} = 2$, and with exactly $k$ factors of $\wt{\ph}_t$ can contribute in the limit $N \to \infty$. More precisely, after integration, we obtain
\begin{equation}\label{eq:higher2} \begin{split}
\E_{\wt{\psi}_{N,t}} \wt{\cO}_t^k = \; & \sum_{m= k/2}^{k} \frac{1}{m!} {m \choose 2m - k} \frac{k!}{2^{k-m}}  \,\langle \wt{\ph}_t , \wt{O}_c^2 \wt{\ph}_t \rangle^{k -m} \\ &\hspace{3cm} \times  \langle \xi_N , \wt{\cU}^* (t;0) : \left( a (\wt{O}_c \wt{\ph}_t) + a^* (\wt{O}_c \wt{\ph}_t) \right)^{2m-k} : \wt{\cU} (t;0) \Omega \rangle \\ &+ o(1)
\end{split}
\end{equation}
as $N \to \infty$. Here $: \cdot :$ denotes the operation of normal ordering of the creation and annihilation operators; all creation operators have to be written on the left, and all annihilation operators have to be written on the right, as if they commuted (for example, $:a (f_1) a^* (f_2) a^* (f_3)  a(f_4):  \; = a^* (f_2) a^* (f_3) a (f_1) a (f_4)$).
Using Proposition \ref{prop:3} and (\ref{eq:Nxi}), we can replace, in (\ref{eq:higher2}), the (regularized) fluctuation dynamics $\wt{\cU} (t;0)$ with its limit $\cU_{\infty} (t;0)$, defined in (\ref{eq:Uinfty}). In fact,
\begin{equation}\label{eq:compa3}\begin{split}
\Big| &\left\langle
\xi_N , \wt{\cU}^* (t;0) : \left( a (\wt{O}_c \wt{\ph}_t) + a^* (\wt{O}_c \wt{\ph}_t) \right)^{2m-k} : \wt{\cU} (t;0) \Omega \right\rangle \\ & \hspace{4cm} - \left\langle
\xi_N , \cU_{\infty}^* (t;0) : \left( a (\wt{O}_c \wt{\ph}_t) + a^* (\wt{O}_c \wt{\ph}_t) \right)^{2m-k} : \cU_{\infty} (t;0) \Omega \right\rangle \Big| \\ & \hspace{2cm} \leq \left| \left\langle
\xi_N , \left(\wt{\cU}^* (t;0) - \cU_\infty^* (t;0) \right) : \left( a (\wt{O}_c \wt{\ph}_t) + a^* (\wt{O}_c \wt{\ph}_t) \right)^{2m-k} : \wt{\cU}_\infty (t;0) \Omega \right\rangle \right| \\ & \hspace{2.5cm} + \left|\left\langle \xi_N , \cU (t;0) : \left( a (\wt{O}_c \wt{\ph}_t) + a^* (\wt{O}_c \wt{\ph}_t) \right)^{2m-k} : \left(\wt{\cU} (t;0) -\cU_\infty (t;0) \right) \Omega \right\rangle \right|
\\ & \hspace{2cm} \leq  \| (\cN+1)^{-1/2} \xi_N \| \sum_{j=1}^{2m+k} {2m-k \choose j}  \\ &\hspace{2.5cm} \times \left[ \left\| (\cN+1)^{1/2} \left( \wt{\cU}^* (t;0) - \cU_\infty^* (t;0) \right) a^* (\wt{O}_c \wt{\ph}_t)^{j} a (\wt{O}_c \wt{\ph}_t)^{2m-k-j} \cU_\infty (t;0) \Omega \right\| \right. \\ &\hspace{3cm} \left. + \left\| (\cN+1)^{1/2} \cU (t;0) \, a^* (\wt{O}_c \wt{\ph}_t)^j  a (\wt{O}_c \wt{\ph}_t)^{2m-k-j} \left(\wt{\cU} (t;0) -\cU_\infty (t;0) \right) \Omega \right\| \right]
\end{split} \end{equation}
To bound the first term in the parenthesis, we use Proposition \ref{prop:3}. We find
\[ \begin{split}
\Big\| (\cN+1)^{1/2} &\left( \wt{\cU}^* (t;0) - \cU_\infty^* (t;0) \right) a^* (\wt{O}_c \wt{\ph}_t)^{j} a (\wt{O}_c \wt{\ph}_t)^{2m-k-j} \cU_\infty (t;0) \Omega \Big\| \\ \leq \; & C e^{Kt} (N^{-1/2} + \alpha_N (N+1)^2) \\ & \times \left\| \left( \cK + \cN^{20r-4} + 1 \right)^{1/2} (\cN+1)^{3} a^* (\wt{O}_c \wt{\ph}_t)^{j} a (\wt{O}_c \wt{\ph}_t)^{2m-k-j} \cU_\infty (t;0) \Omega \right\| \end{split} \]
Using Lemma \ref{lm:phiNK}, we conclude that
\[ \begin{split}
\Big\| &(\cN+1)^{1/2} \left( \wt{\cU}^* (t;0) - \cU_\infty^* (t;0) \right) a^* (\wt{O}_c \wt{\ph}_t)^{j} a (\wt{O}_c \wt{\ph}_t)^{2m-k-j} \cU_\infty (t;0) \Omega \Big\| \\ &\leq C e^{Kt} (N^{-1/2} + \alpha_N (N+1)^2) \, \| \wt{O}_c \wt{\ph}_t \|_{H^1}^{2m-k} \, \left\| (\cK+\cN^{20r-4}+ 1)^{1/2} (\cN+1)^{3+ m -k/2} \cU_\infty (t;0) \Omega \right\| \\
&\leq C e^{Kt} (N^{-1/2} + \alpha_N (N+1)^2) \, \| \wt{O}_c \wt{\ph}_t \|_{H^1}^{2m-k} \\ &\hspace{1cm} \times \left[ \left\| \cK \cU_\infty (t;0) \Omega \right\| + \left\| (\cN+1)^{6+ 2m -k} \cU_\infty (t;0) \Omega \right\| + \left\|(\cN+1)^{10r+1 + m -k/2} \cU_\infty (t;0) \Omega \right\| \right]
\end{split} \]
where, in the last inequality we used Cauchy-Schwarz to separate $K^{1/2}$ and $(\cN+1)^{3+m-k/2}$. {F}rom Proposition \ref{prop:2}, we obtain
\[ \begin{split}
\Big\| (\cN+1)^{1/2} \left( \wt{\cU}^* (t;0) - \cU_\infty^* (t;0) \right) &a^* (\wt{O}_c \wt{\ph}_t)^{j} a (\wt{O}_c \wt{\ph}_t)^{2m-k-j} \cU_\infty (t;0) \Omega \Big\|  \\
&\leq C e^{Kt} (N^{-1/2} + \alpha_N (N+1)^2) \, \| \wt{O}_c \wt{\ph}_t \|_{H^1}^{2m-k}  \\
&\leq C e^{Kt} (N^{-1/2} + \alpha_N (N+1)^2) \, \left( \| \nabla \, O (1-\Delta)^{-1/2} \| + \| O \| \right)^{2m-k}
\end{split} \]
Since $\alpha_N = N^{-r}$ for $r > 3 + k/2$, the r.h.s. of the last equation converges to zero, as $N \to \infty$. To estimate the second term in (\ref{eq:compa3}), we observe that, by Proposition \ref{prop:1} and Proposition \ref{prop:3},
\[ \begin{split}
\Big\| (\cN+1)^{1/2} \cU (t;0) \, a^* (&\wt{O}_c \wt{\ph}_t)^j  a (\wt{O}_c \wt{\ph}_t)^{2m-k-j} \left(\wt{\cU} (t;0) -\cU_\infty (t;0) \right) \Omega \Big\| \\ \leq \; & C e^{Kt} \left\| (\cN+1)^{2} \, a^* (\wt{O}_c \wt{\ph}_t)^j  a (\wt{O}_c \wt{\ph}_t)^{2m-k-j} \left(\wt{\cU} (t;0) -\cU_\infty (t;0) \right) \Omega \right\|
\\ \leq \; & C e^{Kt} \| \wt{O}_c \wt{\ph}_t \|^{2m-k} \, \left\| (\cN+1)^{2+m-k/2} \, \left(\wt{\cU} (t;0) -\cU_\infty (t;0) \right) \Omega \right\| \\
\leq \; & C e^{Kt} \left( N^{-1/2} + \alpha_N (N+1)^{3+m-k/2} \right) \, \| O \|^{2m-k}
\end{split} \]
which converges to zero as $N \to \infty$, because $\alpha_N = N^{-r}$, with $r > 3 + k/2 \geq 3+m-k/2$. {F}rom (\ref{eq:compa3}), using also (\ref{eq:Nxi}), we conclude that
\begin{equation}\label{eq:higher2b} \begin{split}
\E_{\wt{\psi}_{N,t}} \wt{\cO}_t^k = \; & \sum_{m= k/2}^{k} \frac{1}{m!} {m \choose 2m - k} \frac{k!}{2^{k-m}}  \, \langle \wt{\ph}_t , \wt{O}_c^2 \wt{\ph}_t \rangle^{k -m} \\ &\hspace{2cm} \times  \left\langle
\xi_N , \cU_{\infty}^* (t;0) : \left( a (\wt{O}_c \wt{\ph}_t) + a^* (\wt{O}_c \wt{\ph}_t) \right)^{2m-k} : \cU_{\infty} (t;0) \Omega \right\rangle \\ &+ o(1)
\end{split}
\end{equation}
as $N \to \infty$.

Suppose now that $k= 2\ell+1$ is odd. Expanding $: (a (O_c \ph_t) + a^* (O_c \ph_t))^{2m-2\ell-1}:$ we find a sum of (normally ordered) terms of the form $(a^* (O_c \ph_t))^p (a (O_c \ph_t))^q$ with $p+q = 2m-2\ell-1$. Conjugating with the limiting dynamics $\cU_\infty (t;0)$, and using that
\[ \begin{split}
\cU^*_\infty (t;0) a (O_c \ph_t) \cU_{\infty} (t;0) &= a (F_t) + a^* (G_t) \\
\cU^*_\infty (t;0) a^* (O_c \ph_t) \cU_{\infty} (t;0) &= a^* (F_t) + a (G_t)
\end{split} \]
with $F_t = U (t;0) O_c \ph_t$ and $G_t = J V (t;0) O_c \ph_t$, we find that $\cU^*_\infty (t;0) : (a (O_c \ph_t) + a^* (O_c \ph_t))^{2m-2\ell-1}: \cU_\infty (t;0)$ is a linear combination of terms of the form $(a^* (F_t) + a (G_t))^p (a (F_t) + a^* (G_t))^q$ with $p+q =2m -2 \ell -1$. Expanding the parenthesis, we conclude that it
is a linear combination of terms given by (not necessarily normally ordered) products of creation operators $a^* (F_t)$ and $a^* (G_t)$ and annihilation operators $a(F_t)$ and $a(G_t)$. In every term the total number of annihilation and creation operators is $2m -2\ell -1$. Consider a single term, with a total of $\wt{p}$ creation operators, and of $\wt{q} = 2m-2\ell -1-\wt{p}$ annihilation operators. Since these operators act on the vacuum, this term is a vector in the sector of $\cF$ with exactly $\wt{p}-\wt{q} = 2\wt{p} -2m + 2\ell +1$ particles. On the other hand, from Proposition \ref{prop:xi}, we observe that the odd components of the vector $\xi_N = \{ \xi_N^{(0)} , \xi_N^{(1)}, \dots \} \in \cF$ vanish in the limit $N \to \infty$. Hence, we conclude that, for any fixed $\ell \in \bN$,
\[ \E_{\wt{\psi}_{N,t}} \wt{\cO}_t^{2\ell+1} \to 0 \]
as $N \to \infty$.

Next, we consider the even moments. For $k=2\ell$, (\ref{eq:higher2b}) gives (changing the summation variable to $m-\ell$)
\begin{equation}\label{eq:higher3} \begin{split}
\E_{\wt{\psi}_{N,t}} \wt{\cO}_t^{2\ell} = \; &\sum_{m=0}^\ell \frac{2\ell!}{2m! \ell - m! 2^{\ell-m}} \langle \ph_t, O_c^2 \ph_t \rangle^{\ell-m} \left\langle \cU_\infty (t;0) \xi_N , : \left( a^* (O_c \ph_t) + a (O_c \ph_t) \right)^{2m} : \cU_\infty (t;0) \Omega \right\rangle \\ &+ o(1)
\end{split}
\end{equation}
as $N \to \infty$. {F}rom Proposition \ref{prop:normal-sum}, we find
\[ \begin{split}
\E_{\wt{\psi}_{N,t}} \wt{\cO}_t^{2\ell} = \; & \left\langle \cU_\infty (t;0) \xi_N , \left( a^* (O_c \ph_t) + a (O_c \ph_t) \right)^{2\ell} \cU_\infty (t;0) \Omega \right\rangle + o(1)
\end{split} \]
With the notation $A(f,g) = a(f) + a^* (\overline{g})$, and using Theorem \ref{thm:bog}, we find
\begin{equation}\label{eq:O2l} \begin{split}
\E_{\wt{\psi}_{N,t}} \, \wt{\cO}_t^{2\ell} = \;& \left\langle \cU_\infty (t;0) \xi_N , A \left( O_c \ph_t , JO_c \ph_t \right)^{2\ell}\cU_\infty (t;0) \Omega \right\rangle + o(1) \\
= \;& \left\langle \xi_N , A \left( U (t;0) O_c \ph_t + J V (t;0) O_c \ph_t , V (t;0) O_c \ph_t + J U (t;0) O_c \ph_t \right)^{2\ell} \Omega \right\rangle \end{split} \end{equation}
{F}rom Proposition \ref{prop:xi}, we write
\[ \xi_N = \sum_{j \geq 0} \xi_N^{(j)} a^* (\ph)^j \, \Omega \, . \]
Using Lemma \ref{lm:a*Omega} to expand $ A \left( U (t;0) O_c \ph_t + J V (t;0) O_c \ph_t , V (t;0) O_c \ph_t + J U (t;0) O_c \ph_t \right)^{2\ell} \Omega$, we find
\[ \begin{split}
\E_{\wt{\psi}_{N,t}} \, \wt{\cO}_t^{2\ell}
= \;& \sum_{j=0}^{2\ell} \xi_N^{(j)} \sum_{m=0}^\ell \frac{2\ell!}{2m! \ell -m ! 2^{\ell-m}} \| U (t;0) O_c \ph_t + J V (t;0) O_c \ph_t \|^{2(\ell-m)} \\ & \hspace{4cm}\times \langle a^* (\ph)^j \Omega , a^* (U (t;0) O_c \ph_t + J V (t;0) O_c \ph_t)^{2m} \Omega \rangle
\end{split} \]
Since, for any $f,g \in L^2 (\bR^3)$, and for any $j,k \in \bN$,
\[ \langle a^* (f)^j \Omega, a^* (g)^k \Omega \rangle = \delta_{jk} (f,g)^k k! \]
we conclude that
\[ \begin{split}
\E_{\wt{\psi}_{N,t}} & \, \wt{\cO}_t^{2\ell} \\
= \;& \sum_{m=0}^\ell \xi_N^{(2m)} \, \frac{2\ell!}{\ell -m ! 2^{\ell-m}} \| U (t;0) O_c \ph_t + J V (t;0) O_c \ph_t \|^{2(\ell-m)} \langle \ph , U (t;0) O_c \ph_t + J V (t;0) O_c \ph_t \rangle^{2m} \\
= \; & 2\ell! \sum_{m=0}^\ell \frac{\| U (t;0) O_c \ph_t + J V (t;0) O_c \ph_t \|^{2(\ell-m)} \langle \ph, U (t;0) O_c \ph_t + J V (t;0) O_c \ph_t \rangle^{2m}}{2^{\ell -m} (\ell -m)!} \xi_N^{(2m)}  +o(1)
\end{split} \]
{F}rom Proposition \ref{prop:xi}, we find that, for any $m=0,\dots,\ell$ fixed,
\[ \xi^{(2m)}_N \to \frac{(-1)^m}{2^m m!} \]
as $N \to \infty$. Hence
\begin{equation}\label{eq:last} \begin{split}
\lim_{N\to \infty} &\E_{\wt{\psi}_{N,t}} \, \wt{\cO}_t^{2\ell}
\\ = \; & \frac{2\ell!}{\ell! 2^\ell} \sum_{m=0}^\ell (-1)^m  {\ell \choose m} \, \| U (t;0) O_c \ph_t + J V (t;0) O_c \ph_t \|^{2(\ell-m)} \langle \ph, U (t;0) O_c \ph_t + J
V_t O_c \ph_t \rangle^{2m} \\ = \; & \frac{2\ell!}{2^{\ell} \ell!} \left( \| U (t;0) O_c \ph_t + J V (t;0) O_c \ph_t \|^2 - \langle  \ph, U (t;0) O_c \ph_t + J V (t;0) O_c \ph_t \rangle^{2} \right)^{2\ell} \\
\end{split} \end{equation}
Now, we observe that
\begin{equation}\label{eq:var5} \begin{split} -i \left\langle (\ph , - \overline{\ph}), \Theta (t;0) \left(O_c \ph_t , J O_c \ph_t \right) \right\rangle = &- i \langle \ph , U (t;0) O_c \ph_t + J V (t;0) O_c \ph_t \rangle \\ &+ i \langle \overline{\ph} , V (t;0) O_c \ph_t + J U (t;0) O_c \ph_t \rangle \\ = & 2 \text{Im }
\langle \ph, U (t;0) O_c \ph_t + J V (t;0) O_c \ph_t \rangle \end{split} \end{equation}
{F}rom (\ref{eq:Ttph}), and since $\Theta (t;0)^* = S \Theta_t
^{-1} S$ (with $S$ as defined in (\ref{eq:bog22})), we conclude that
\begin{equation}\label{eq:im0} \begin{split} 2 \text{Im }
\langle \ph, U (t;0) O_c \ph_t + J V (t;0) O_c \ph_t \rangle = & \;
-i \left\langle \Theta (t;0)^* (\ph , - \overline{\ph}), \left(O_c \ph_t , J O_c \ph_t \right) \right\rangle \\ = &\; -i \left\langle \Theta (t;0)^{-1} (\ph , \overline{\ph}), \left(O_c \ph_t , - J O_c \ph_t \right) \right\rangle \\ = &-i \left\langle (\ph_t , \overline{\ph}_t), \left(O_c \ph_t , - JO_c \ph_t \right) \right\rangle =  2 \text{Im } \langle \ph_t , O_c \ph_t \rangle = 0 \end{split} \end{equation}
Thus
\[ \langle \ph, U (t;0) O_c \ph_t + J V (t;0) O_c \ph_t \rangle = \text{Re } \langle \ph, U (t;0) O_c \ph_t + J V (t;0) O_c \ph_t \rangle \]
and, from (\ref{eq:last}),
\begin{equation}\label{eq:var6}
\begin{split}
\lim_{N\to \infty} \E_{\wt{\psi}_{N,t}} \, \wt{\cO}_t^{2\ell}
= \; &\frac{2\ell!}{2^{\ell} \ell!} \left( \| U (t;0) O_c \ph_t + J V (t;0) O_c \ph_t \|^2 - |\langle  \ph, U (t;0) O_c \ph_t + J V (t;0) O_c \ph_t \rangle|^{2} \right)^{2\ell} \\
= \; & \frac{2\ell!}{2^{\ell} \ell!} \; \sigma_{c,t}^{4\ell}
\end{split}
\end{equation}
with \[ \sigma^2_{c,t} = \| U (t;0) O_c \ph_t + J V (t;0) O_c \ph_t \|^2 - |\langle  \ph, U (t;0) O_c \ph_t + J V (t;0) O_c \ph_t \rangle|^{2} \]
Finally, we note that
\[ \begin{split} &\left\langle (\ph, \overline{\ph}), \Theta (t;0) \left(O_c \ph_t , J O_c \ph_t \right) \right\rangle \\ & \hspace{3cm}= \left\langle (\ph, \overline{\ph}), ( U (t;0) O_c\ph_t + J V (t;0) O_c \ph_t , JU (t;0) O_c \ph_t + V (t;0) O_c \ph_t) \right\rangle \\  & \hspace{3cm}= 2 \text{Re } \langle \ph , U (t;0) O_c \ph_t + J V (t;0) O_c \ph_t \rangle \end{split} \]
and that, similarly,
\[ \| U (t;0) O_c \ph_t + J V (t;0) O_c \ph_t \|^2 = \frac{1}{2} \left \langle \Theta (t;0) \left(O_c \ph_t , JO_c \ph_t \right) , \Theta (t;0) \left(O_c \ph_t , JO_c \ph_t \right) \right\rangle \]
This gives
\begin{equation}\label{eq:var7}\begin{split}
\sigma_{c,t}^2 = &\; \frac{1}{2} \left[  \left\langle \Theta (t;0) \left(O_c \ph_t , JO_c \ph_t\right) , \Theta (t;0) \left(O_c \ph_t , O_c \ph_t \right) \right\rangle \right.\\ & \hspace{1cm} \left.- \left\langle \Theta (t;0) \left(O_c \ph_t , JO_c \ph_t\right) , \frac{1}{\sqrt{2}} (\ph, \overline{\ph}) \right\rangle \left\langle \frac{1}{\sqrt{2}} (\ph, \overline{\ph}),  \Theta (t;0) \left(O_c \ph_t , JO_c \ph_t\right) \right\rangle \right]
\end{split} \end{equation}
Writing $O_c = O - \E_{\ph_t} O = O - \langle \ph_t, O \ph_t \rangle$, we easily find
\begin{equation}\label{eq:rem-tilda}\begin{split} \sigma_{c,t}^2 = \;& \frac{1}{2} \left[  \left\langle \Theta (t;0) \left(O \ph_t , J O \ph_t \right) , \Theta (t;0) \left( O \ph_t , J O \ph_t \right) \right\rangle \right. \\ & \hspace{1cm} \left. - \left\langle \Theta (t;0) \left(O \ph_t , J O \ph_t \right) , \frac{1}{\sqrt{2}} (\ph, \overline{\ph}) \right\rangle \left\langle \frac{1}{\sqrt{2}} (\ph, \overline{\ph}),  \Theta (t;0) \left(O \ph_t , J O \ph_t \right) \right\rangle \right] \\
= \; &\| U (t;0) O \ph_t + J V (t;0) O \ph_t \|^2 - \left| \langle \ph, U (t;0) O \ph_t + J V (t;0) O \ph_t \rangle \right|^2 \\ = \; & \sigma_t^2 \end{split} \end{equation}
With (\ref{eq:var6}), this concludes the proof of Lemma \ref{lm:mom-reg}.

\section{Combinatorial estimates}
\label{sec:combi}
\setcounter{equation}{0}

In this section, we collect some technical results, used in Section \ref{sec:CLT}, to handle the combinatorial problems.

The next proposition is used to determine the properties of the Fock vector $\xi_N$ defined in (\ref{eq:xi-def}) in the limit of large $N$. We use here an important observation from \cite{CL}.
\begin{proposition}\label{prop:xi}
Let
\[ \xi_N = d_N W^* (\sqrt{N} \ph) \frac{a^* (\ph)^N}{\sqrt{N!}} \Omega \,. \]
with $d_N = e^{N/2} \sqrt{N!} \, N^{-N/2}$. Then
\begin{equation}\label{eq:xi1} \xi_N = \sum_{\ell=0}^{\infty} \xi_N^{(\ell)} a^* (\ph)^\ell \Omega \end{equation}
with
\begin{equation}\label{eq:xi2} \xi_N^{(\ell)} = \sum_{j=0}^\ell (-1)^j N^{j-\ell/2} \frac{N!}{N-\ell+j! \ell-j! j!} \end{equation}
Moreover, for any fixed $m \in \bN$, we have $\xi_N^{(2m+1)} \to 0$
and
\[\xi_N^{(2m)} \to \frac{(-1)^m}{2^m m!} \]
as $N \to \infty$.
\end{proposition}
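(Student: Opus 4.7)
My plan decomposes into establishing the exact formula \eqref{eq:xi2} for $\xi_N^{(\ell)}$ and then extracting its $N\to\infty$ limit. For the formula, I would first exploit that $\xi_N$ lies in the closed linear span of $\{a^*(\ph)^\ell \Omega\}_{\ell\geq 0}$ (since $W(\sqrt{N}\ph)$ only involves the mode along $\ph$, which preserves this subspace), which justifies the expansion \eqref{eq:xi1}. Iterating the shift identity \eqref{eq:shift} gives $W^*(\sqrt{N}\ph) a^*(\ph)^N = (a^*(\ph) + \sqrt{N})^N W^*(\sqrt{N}\ph)$; combined with the coherent-state expansion $W^*(\sqrt{N}\ph) \Omega = e^{-N/2} \sum_{j\geq 0} (-\sqrt{N})^j a^*(\ph)^j \Omega / j!$, the binomial theorem applied to $(a^*(\ph)+\sqrt{N})^N$, and the explicit value of $d_N = e^{N/2}\sqrt{N!}\, N^{-N/2}$, a direct collection of terms by the total power $\ell$ of $a^*(\ph)$ produces \eqref{eq:xi2}.

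For the asymptotics, rather than wrestling with the alternating sum in \eqref{eq:xi2} term by term, I would pass to the generating function
\[ f_N(z) := \sum_{\ell \geq 0} \xi_N^{(\ell)} z^\ell. \]
Substituting \eqref{eq:xi2} and reindexing via $k = \ell - j$, the double sum over $(j,k)$ decouples into a product, giving the remarkably clean identity
\[ f_N(z) = \bigg(\sum_{j\geq 0} \frac{(-z\sqrt{N})^j}{j!}\bigg)\bigg(\sum_{k=0}^N \binom{N}{k}\big(z/\sqrt{N}\big)^k\bigg) = e^{-z\sqrt{N}}\big(1+z/\sqrt{N}\big)^N. \]
Taking $\log$ and Taylor-expanding the logarithm, the $O(\sqrt{N})$ terms cancel and $\log f_N(z) = -z^2/2 + O(N^{-1/2})$ uniformly on compact subsets of $\bC$, so $f_N(z) \to e^{-z^2/2}$ uniformly on compacta. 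Cauchy's coefficient formula on a small circle then transfers this to $\xi_N^{(\ell)} = [z^\ell] f_N(z) \to [z^\ell] e^{-z^2/2}$, which equals $0$ for odd $\ell$ and $(-1)^m/(2^m m!)$ for $\ell = 2m$, as claimed.

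The only real obstacle I foresee is the bookkeeping in the first step: the factor $d_N$ together with the various $\sqrt{N!}$'s and half-integer powers of $N$ must combine precisely to reproduce \eqref{eq:xi2}. Once that identity is in hand, the analytic half is essentially automatic because the generating function collapses to an elementary closed form whose Gaussian limit is standard; in particular, since only the Taylor coefficient at $z=0$ is needed, the oscillatory cancellations in \eqref{eq:xi2} are handled implicitly by analyticity rather than by explicit summation.
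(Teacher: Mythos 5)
Your derivation of the exact formula (\ref{eq:xi2}) is exactly the paper's (shift identity for the Weyl operator plus binomial expansion), but for the asymptotics you take a genuinely different and arguably cleaner route. The paper observes, following \cite{CL}, that $\xi_N^{(\ell)} = N^{-\ell/2} L_\ell^{(N-\ell)}(N)$ with $L_\ell^{(\alpha)}$ the generalized Laguerre polynomials, then invokes the three-term recursion $L_\ell^{(\alpha)}(x) = \frac{\alpha+1-x}{\ell}L_{\ell-1}^{(\alpha+1)}(x) - \frac{x}{\ell}L_{\ell-2}^{(\alpha+2)}(x)$ to derive $\xi_N^{(\ell)} = \frac{1-\ell}{\ell}N^{-1/2}\xi_N^{(\ell-1)} - \frac{1}{\ell}\xi_N^{(\ell-2)}$, and iterates from $\xi_N^{(0)}=1$, $\xi_N^{(1)}=0$. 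Your generating-function identity $\sum_\ell \xi_N^{(\ell)} z^\ell = e^{-z\sqrt{N}}(1+z/\sqrt{N})^N$ is correct (the reindexing $k=\ell-j$ does decouple the double sum as claimed, and the right-hand side is entire since the second factor is a polynomial), and the uniform-on-compacta convergence to $e^{-z^2/2}$ combined with Cauchy's coefficient formula is a sound way to extract the limits $\xi_N^{(2m)}\to (-1)^m/(2^m m!)$ and $\xi_N^{(2m+1)}\to 0$. What you gain is self-containedness and transparency: no appeal to Laguerre recursion identities is needed, the alternating cancellations are absorbed into the closed form, and the limit $e^{-z^2/2}$ makes the Gaussian/CLT structure manifest (indeed, $e^{-z\sqrt{N}}(1+z/\sqrt{N})^N$ is the normalized generating function of a Poisson variable of mean $N$, so the Gaussian limit is a Poisson CLT in disguise). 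What the paper's recursion buys is a purely algebraic argument with no contour integration, which is marginally more elementary in that sense. Both are complete and correct.
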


\begin{proof}
We start from the definition of $\xi_N$, given by
\[ \xi_N = d_N W^* (\sqrt{N} \ph) \frac{a^* (\ph)^N}{\sqrt{N!}} \Omega \]
with $d_N = e^{N/2} \sqrt{N!} N^{-N/2} \simeq N^{1/4}$. Using the properties of the Weyl operator $W^* (\sqrt{N} \ph)$, we obtain
\[ \begin{split} \xi_N &= d_N \frac{(a^* (\ph) + \sqrt{N})^N}{\sqrt{N!}} W^* (\sqrt{N} \ph) \Omega = d_N \frac{(a^* (\ph) + \sqrt{N})^N}{\sqrt{N!}} e^{-N/2} \sum_{j=0}^{\infty} (-1)^j N^{j/2} \frac{a^* (\ph)^j}{j!} \Omega \end{split} \]
The restriction of $\xi_N$ on the $\ell$-particle sector is given by
\[ \begin{split} \xi_N |_{\cF_\ell} &= \frac{d_N e^{-N/2}}{\sqrt{N!}} \sum_{j=0}^\ell \frac{(-1)^j N^{j/2}}{j!} {N \choose \ell-j} N^{\frac{N-\ell+j}{2}} a^* (\ph)^\ell \Omega
\\ &=  \left[ \sum_{j=0}^\ell (-1)^j N^{j-\ell/2} \frac{N!}{N-\ell+j! \ell-j! j!} \right] a^* (\ph)^\ell \Omega \\ & = \xi_N^{(\ell)} a^* (\ph)^\ell \Omega \end{split} \]
This proves (\ref{eq:xi1}). Now we study the asymptotics of $\xi_N^{(\ell)}$, as $N \to \infty$. To this end we recognize, following \cite{CL}, that
\[ \xi_N^{(\ell)} = N^{-\ell/2} L_\ell^{(N-\ell)} (N) \]
where
\[ L_\ell^{(\alpha)} (x) = \sum_{j=0}^\ell (-1)^j {\ell + \alpha
\choose \ell - j} \frac{x^j}{j!} \]
are the generalized Laguerre polynomials of dergree $\ell$, which satisfy the recursion relations (see, for example, \cite{CL})
\[ L_\ell^{(\alpha)} (x) = \frac{\alpha+1 -x}{\ell} \, L_{\ell-1}^{(\alpha+1)} (x) - \frac{x}{\ell} \, L_{\ell-2}^{(\alpha+2)} (x) \]
This gives the recursion
\begin{equation}\label{eq:rec-xi} \xi_N^{(\ell)} = \frac{1-\ell}{\ell} \, N^{-1/2} \, \xi_N^{(\ell-1)} - \frac{1}{\ell} \, \xi_N^{(\ell-2)} \end{equation}
A simple computation shows that $\xi_N^{(0)} = 1$ and $\xi^{(1)}_N = 0$.
Eq. (\ref{eq:rec-xi}) implies therefore that, for any fixed $\ell \in \bN$, $\xi^{(\ell)}_N \to 0$ if $\ell$ is odd, and \[ \xi^{(\ell)}_N \to \frac{(-1)^\ell}{\ell (\ell-2) \dots 2} = \frac{(-1)^\ell}{2^{\ell/2} (\ell/2)!} \]
if $\ell$ is even.
\end{proof}

The next lemma is used to expand \[ \begin{split}  A &\left( U (t;0) O_c \ph_t + J V (t;0) O_c \ph_t , V (t;0) O_c \ph_t + J U (t;0) O_c \ph_t \right)^{2\ell} \Omega \\ &\hspace{2cm} = \left( a^* (U (t;0) O_c \ph_t + J V (t;0) O_c \ph_t) + a (U (t;0) O_c \ph_t + J V (t;0) O_c \ph_t) \right)^{2\ell} \Omega \end{split} \] appearing in (\ref{eq:O2l}).

\begin{lemma}\label{lm:a*Omega}
For any $F \in L^2 (\bR^3)$, we have
\[(a(F) + a^* (F))^{2\ell} \Omega = \sum_{m=0}^\ell \frac{2\ell!}{2m! \ell -m ! 2^{\ell-m}} \| F \|^{2(\ell-m)} a^* (F)^{2m} \Omega \]
\end{lemma}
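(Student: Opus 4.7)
The plan is to deduce the identity by comparing two power-series expansions of $e^{\lambda A} \Omega$, where $A = a(F) + a^*(F)$. Setting $b = a(F)$ and $b^* = a^*(F)$, the canonical commutation relation (\ref{eq:CCR}) gives $[b, b^*] = \|F\|^2$, which is a scalar and in particular central. Since the commutator is central, the standard Baker--Campbell--Hausdorff identity applies and yields
\[ e^{\lambda(b+b^*)} = e^{\lambda^2 \|F\|^2/2} \, e^{\lambda b^*} \, e^{\lambda b} \]
as operators acting on a suitable dense subspace of $\cF$.

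I would then apply both sides to the vacuum. Since $b \Omega = 0$, we have $e^{\lambda b} \Omega = \Omega$, so
\[ e^{\lambda(b+b^*)} \Omega = e^{\lambda^2 \|F\|^2 / 2} \, e^{\lambda b^*} \Omega = \sum_{j, n \geq 0} \frac{\lambda^{2j+n}}{j! \, 2^j \, n!} \|F\|^{2j} \, (b^*)^n \Omega. \]
Comparing this with the direct Taylor expansion $\sum_{k \geq 0} (\lambda^k / k!) \, (b+b^*)^k \Omega$ and matching coefficients of $\lambda^{2\ell}$ forces $2j + n = 2\ell$, so $n = 2m$ must be even and $j = \ell - m$. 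This produces
\[ (b+b^*)^{2\ell} \Omega = (2\ell)! \sum_{m=0}^{\ell} \frac{\|F\|^{2(\ell-m)}}{(\ell-m)! \, 2^{\ell-m} \, (2m)!} \, (b^*)^{2m} \Omega, \]
which is precisely the claimed formula (after rewriting the coefficient as $(2\ell)! / ((2m)! (\ell-m)! \, 2^{\ell-m})$).

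The only genuine technicality is justifying the operator BCH identity for the unbounded $b, b^*$. This is routine when applied to the vacuum, since $\|(b^*)^n \Omega\|^2 = n! \, \|F\|^{2n}$ makes all the series involved absolutely convergent in $\cF$ for every $\lambda \in \bR$, and both sides are entire $\cF$-valued functions of $\lambda$ whose Taylor coefficients coincide as polynomial identities in $b$ and $b^*$. As a self-contained alternative that avoids exponentials entirely, one can prove the lemma by induction on $\ell$: the base case $\ell=0$ is trivial, and the inductive step uses $(b+b^*)^2 = (b^*)^2 + 2 b^* b + \|F\|^2 + b^2$ together with $b (b^*)^{2m} \Omega = 2m \|F\|^2 (b^*)^{2m-1} \Omega$ (a standard consequence of the CCR), which reduces the statement to a short Pascal-style manipulation of the binomial-type coefficients appearing in the formula.
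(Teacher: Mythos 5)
Your generating-function proof via the Baker--Campbell--Hausdorff identity is correct, but it is a genuinely different route from the paper's. The paper proves the lemma by the induction on $\ell$ that you sketch only briefly at the end as an ``alternative'': one writes $(a(F)+a^*(F))^{2\ell+2}\Omega = (a(F)+a^*(F))^2\,(a(F)+a^*(F))^{2\ell}\Omega$, applies the induction hypothesis, pushes $(a(F)+a^*(F))^2$ through $a^*(F)^{2m}\Omega$ using the CCR, shifts summation indices in the three resulting sums, and verifies a Pascal-type identity among the coefficients. Your BCH argument is shorter and more conceptual: once one accepts $e^{\lambda(b^*+b)} = e^{\lambda^2\|F\|^2/2}\,e^{\lambda b^*}\,e^{\lambda b}$, the lemma drops out by comparing $\lambda^{2\ell}$-coefficients, with no index gymnastics. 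What it costs is the need to justify operating with exponentials of the unbounded operators $b = a(F)$, $b^* = a^*(F)$; your remark that $\|(b^*)^n\Omega\|^2 = n!\,\|F\|^{2n}$ makes $\Omega$ an entire analytic vector, so the series all converge absolutely and the coefficient-matching is legitimate, though the sentence ``Taylor coefficients coincide as polynomial identities in $b$ and $b^*$'' is a bit circular as phrased --- the clean statement is that BCH holds as an identity of formal power series in $\lambda$ over the Weyl algebra with relation $[b,b^*]=\|F\|^2$, and one then evaluates on $\Omega$ and matches coefficients, the convergence bounds guaranteeing this is meaningful in $\cF$. The induction avoids any appeal to exponentials of unbounded operators and is entirely algebraic, at the price of a longer bookkeeping computation. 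Both proofs are valid; one minor side benefit of your BCH method is that it simultaneously recovers the vanishing of the odd case $(a(F)+a^*(F))^{2\ell+1}\Omega$ having no $\lambda^{\text{odd}}$ contribution, which the paper handles separately.
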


\begin{proof}
We proceed by induction over $\ell$. The statement is clearly correct for $\ell =0$. Assume it is correct for a fixed $\ell \in \bN$. We prove it holds true for $\ell+1$. To this end we write, using the induction assumption,
\begin{equation}\label{eq:a*1} \begin{split} (a(F) + a^* (F))^{2\ell+2} \Omega &= (a(F) + a^* (F))^2 \, (a(F) + a^*(F))^{2\ell} \Omega \\ &= \sum_{m=0}^\ell \frac{2\ell!}{2m! \ell-m! 2^{\ell-m}} \| F \|^{2 (\ell-m)} (a (F) + a^* (F))^2 a^* (F)^{2m} \Omega
\end{split} \end{equation}
Next, we observe that, using the canonical commutation relations,
\[ \begin{split} (a (F) + a^* (F))^2 a^* (F)^{2m} = \; &a^* (F)^{2(m+1)} + a (F)^2 a^* (F)^{2m} + 2 a(F) a^* (F)^{2m+1} - \| F \|^2 a^* (F)^{2m} \\ = \;& a^* (F)^{2(m+1)} + 2m \| F \|^2 \,  a(F) a^* (F)^{2m-1} + (4m+1) \| F \|^2 a^* (F)^{2m} \\ = \;& a^* (F)^{2(m+1)} + 2m (2m-1) \| F \|^4 \, a^* (F)^{2(m-1)} + (4m+1) \| F \|^2 \, a^* (F)^{2m} \end{split} \]
Inserting in (\ref{eq:a*1}), we find
\[ \begin{split}
(a(F) + a^* (F))^{2\ell+2} \Omega =\; & \sum_{m=0}^\ell \frac{2\ell!}{2m! \ell-m! 2^{\ell-m}} \| F \|^{2 (\ell-m)} a^* (F)^{2 (m+1)} \Omega \\ &+  \sum_{m=0}^\ell \frac{2\ell! (4m+1)}{2m! \ell-m! 2^{\ell-m}} \| F \|^{2 (\ell+1-m)} a^* (F)^{2m} \Omega \\ &+  \sum_{m=1}^\ell \frac{2\ell!}{2(m-1)! \ell-m! 2^{\ell-m}} \| F \|^{2 (\ell+2-m)} a^* (F)^{2(m-1)} \Omega
\end{split} \]
Changing the summation index to $m+1$ in the first and, respectively, to $m-1$ in the third term on the r.h.s. of the last equation, we obtain
\[ \begin{split}
(a(F) &+ a^* (F))^{2\ell+2} \Omega \\ =\; & \sum_{m=1}^{\ell+1} \frac{2\ell!}{2(m-1)! \ell+1-m! 2^{\ell+1-m}} \| F \|^{2 (\ell+1-m)} a^* (F)^{2m} \Omega \\ &+  \sum_{m=0}^\ell \frac{2\ell! (4m+1)}{2m! \ell-m! 2^{\ell-m}} \| F \|^{2 (\ell+1-m)} a^* (F)^{2m} \Omega \\ &+  \sum_{m=0}^{\ell-1} \frac{2\ell!}{2m! \ell-1-m! 2^{\ell-1-m}} \| F \|^{2 (\ell+1-m)} a^* (F)^{2m} \Omega \\ = \; &a^* (F)^{2(\ell+1)} \Omega + (\ell (2\ell-1)+ 4\ell+1) \| F \|^2 a^* (F)^{2\ell} \Omega \\ &+ \sum_{m=1}^{\ell-1} \frac{2\ell!}{2m! \ell+1-m! 2^{\ell+1-m}} \| F \|^{2(\ell+1-m)} \\ &\hspace{1cm} \times \left(2m (2m-1) +2 (4m+1)(\ell+1-m) + (\ell+1-m)(\ell-m) \right) \, a^* (F)^{2m} \Omega \\
= \; & \sum_{m=0}^{\ell+1} \frac{2(\ell+1)!}{2m! \ell+1 -m! 2^{\ell+1-m}} \| F \|^{2 (\ell+1-m)} a^* (F)^{2m} \Omega \, .
\end{split} \]
\end{proof}

Finally, the next proposition is used to evaluate a certain sum of normally ordered monomials in $a(O_c \ph_t)$ and $a^* (O_c \ph_t)$, appearing in the computation of high moments of the random variable $\wt{\cO}_t$.

\begin{proposition} \label{prop:normal-sum}
For any $\ell \in \bN$, $F \in L^2 (\bR^3)$, $\psi_1, \psi_2 \in \cF$, we have
\[ \sum_{m=0}^\ell \frac{2\ell!}{2m! \ell- m! 2^{\ell-m}}  \, \| F \|^{2(\ell-m)}  \, \left\langle \psi_1, : \left( a (F) + a^* (F) \right)^{2m} : \psi_2 \right\rangle = \left\langle \psi_1 , \left( a (F) + a^* (F) \right)^{2\ell} \psi_2 \right\rangle \]
\end{proposition}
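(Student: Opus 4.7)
Write $X=a(F)+a^*(F)$. The identity we need, by linearity in $\psi_1,\psi_2$, is the operator identity
\[ X^{2\ell} \;=\; \sum_{m=0}^{\ell} \frac{(2\ell)!}{(2m)!\,(\ell-m)!\,2^{\ell-m}}\,\|F\|^{2(\ell-m)}\, :\!X^{2m}\!:, \]
interpreted on any dense subspace on which the unbounded operators $X^{2\ell}$ are well defined (e.g.\ finite particle vectors, or the domain of $\cN^{\ell}$, which covers every vector to which we apply the proposition in Section \ref{sec:mom}). This is a standard Wick/normal-ordering identity; I will prove it by induction on $\ell$ using a single recursion for $X\cdot :\!X^n\!:$.

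The main tool is the elementary identity
\begin{equation}\label{eq:X-norm}
X\,:\!X^{n}\!: \;=\; :\!X^{n+1}\!: \;+\; n\,\|F\|^2\,:\!X^{n-1}\!:,
\end{equation}
which I would derive from the expansion $:\!X^{n}\!:\,=\sum_{k=0}^{n}\binom{n}{k}a^*(F)^{k}a(F)^{n-k}$ together with the canonical commutation relation $[a(F),a^*(F)]=\|F\|^2$, which gives $a(F)\,a^*(F)^k = a^*(F)^k a(F) + k\,\|F\|^2\,a^*(F)^{k-1}$. After commuting the leading $a(F)$ to the right and re-binning with Pascal's rule, the ``move-to-the-right'' terms combine with the term produced by the $a^*(F)$ from $X$ to give precisely $:\!X^{n+1}\!:$, while the commutator terms assemble into $n\,\|F\|^2\,:\!X^{n-1}\!:$.

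With \eqref{eq:X-norm} in hand the induction is mechanical: assume the formula holds for $\ell$, write $X^{2\ell+2}=X^{2}\cdot X^{2\ell}$, and apply \eqref{eq:X-norm} twice to each term $:\!X^{2m}\!:$, producing $:\!X^{2m+2}\!:$, $(4m+1)\|F\|^2:\!X^{2m}\!:$, and $2m(2m-1)\|F\|^4:\!X^{2m-2}\!:$. Collecting like powers of $:\!X^{2p}\!:$ reduces the step to the scalar identity
\[ C_{\ell+1,p} \;=\; C_{\ell,p-1} + (4p+1)\,C_{\ell,p} + 2(p+1)(2p+1)\,C_{\ell,p+1}, \qquad C_{\ell,m}:=\frac{(2\ell)!}{(2m)!(\ell-m)!\,2^{\ell-m}}, \]
which one checks by pulling out the common factor $(2\ell)!/[(2p)!\,2^{\ell-p+1}(\ell-p+1)!]$ so that the bracket becomes $2p(2p-1)+2(4p+1)(\ell-p+1)+4(\ell-p+1)(\ell-p)$; setting $q=\ell-p+1$ this simplifies to $4(p+q)^2-2(p+q)=2(\ell+1)(2\ell+1)$, and $(2\ell+2)! = 2(\ell+1)(2\ell+1)(2\ell)!$ yields $C_{\ell+1,p}$ exactly.

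There is no real obstacle here, only bookkeeping; the one point that warrants a brief remark is the domain issue. Since $X^{2\ell}$ is unbounded, the identity should be stated and proved on, say, the algebraic Fock space (finite sums of $n$-particle vectors), where both sides are manifestly well defined polynomial expressions in $a(F),a^*(F)$; it then extends by density to any $\psi_1,\psi_2$ in the natural domain. Alternatively, one can give a one-line generating-function proof: the Weyl relation $e^{tX}=e^{ta^*(F)}e^{ta(F)}e^{t^2\|F\|^2/2}=e^{t^2\|F\|^2/2}:\!e^{tX}\!:$ yields, upon comparing coefficients of $t^{2\ell}$,
\[ \frac{X^{2\ell}}{(2\ell)!} \;=\; \sum_{k=0}^{\ell} \frac{\|F\|^{2k}}{2^{k}k!}\,\frac{:\!X^{2\ell-2k}\!:}{(2\ell-2k)!}, \]
and the substitution $m=\ell-k$ gives the claim.
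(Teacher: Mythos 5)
Your main inductive argument is essentially the paper's proof: both rest on the same recursion $X\,{:}X^{n}{:}={:}X^{n+1}{:}+n\|F\|^2\,{:}X^{n-1}{:}$, both apply it twice to $X^2\,{:}X^{2m}{:}$, and both reduce the induction step to the identical scalar identity $2m(2m-1)+2(4m+1)(\ell-m)+4(\ell-m)(\ell-m-1)=2\ell(2\ell-1)$. The only difference is bookkeeping: you run the induction forward from $\ell$ to $\ell+1$ and multiply the induction hypothesis directly by $X^2$, which is slightly cleaner than the paper's version, which starts from ${:}X^{2\ell}{:}$, uses the recursion to pull out $X^2\,{:}X^{2(\ell-1)}{:}$, and then substitutes the hypothesis there.

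Your closing remark, the generating-function proof via $e^{tX}=e^{t^2\|F\|^2/2}\,{:}e^{tX}{:}$ (a consequence of Baker--Campbell--Hausdorff since $[a(F),a^*(F)]=\|F\|^2$ is central), is genuinely different from the paper's argument and considerably shorter: comparing coefficients of $t^{2\ell}$ gives the identity in one line, with no need to discover or verify the combinatorial recursion at all. It also yields the odd-power analogue for free and makes the relation to Hermite polynomials transparent. The domain remark you make is appropriate: as in the paper, the identity is really an algebraic one on finite-particle vectors, extended by density; the paper leaves this implicit, so your explicit mention is a small improvement rather than a gap.
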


\begin{proof}
The proof proceeds by induction over $\ell \in \bN$. For $\ell = 0$, the claim is clearly correct. We assume now it holds for $\ell-1$, and we prove it for $\ell$. To this end, we use that, for every $k \geq 1$,
\begin{equation}\label{eq:ide} (a (F) + a^* (F)) : (a (F) + a^* (F))^k : = :(a(F) + a^* (F))^{k+1} : + k \| F \|^2 \, : (a(F) + a^* (F))^{k-1}: \end{equation}
Applying the last identity twice, we conclude that
\begin{equation}\label{eq:norm1} \begin{split}
\langle \psi_1 , : (a(F) +a^* (F))^{2\ell} : \psi_2 \rangle = \; & \langle \psi_1, (a(F) +a^* (F))^2 : (a (F) + a^* (F))^{2(\ell-1)}: \psi_2 \rangle \\ &-(4\ell-3) \langle \psi_1, : (a(F) + a^* (F))^{2(\ell-1)}: \psi_2 \rangle \\ &- (2\ell-2)(2\ell-3) \langle \psi_1, :(a(F) + a^* (F))^{2(\ell-2)}: \psi_2 \rangle \end{split} \end{equation}
Using the induction assumption, we find
\[ \begin{split} \langle \psi_1 , (a(F)+a^* (F))^2 &:(a(F)+a^* (F))^{2(\ell-1)} : \psi_2 \rangle \\ =\; & \langle \psi_1, (a (F) + a^* (F))^{2\ell} \psi_2 \rangle \\ &- \sum_{m=0}^{\ell-2} \frac{2(\ell-1)! \| F \|^{2(\ell-1-m)}}{2m! \ell-1-m! 2^{\ell-1-m}} \langle \psi_1, (a (F) +a^* (F))^2 :(a(F) + a^* (F))^{2m} : \psi_2 \rangle\end{split} \]
Again, using (\ref{eq:ide}), we obtain
\[ \begin{split}
\langle \psi_1 , (a(F)+a^* (F))^2 &:(a(F)+a^* (F))^{2(\ell-1)} : \psi_2 \rangle \\ = \; &\langle \psi_1, (a (F) + a^* (F))^{2\ell} \psi_2 \rangle \\ &- \sum_{m=0}^{\ell-2} \frac{2(\ell-1)! \| F \|^{2(\ell-1-m)}}{2m! \ell-1-m! 2^{\ell-1-m}} \langle \psi_1, :(a(F) + a^* (F))^{2(m+1)} : \psi_2 \rangle  \\ &-\sum_{m=0}^{\ell-2} \frac{2(\ell-1)! (4m+1) \| F \|^{2(\ell-m)}}{2m! \ell-1-m! 2^{\ell-1-m}} \langle \psi_1, :(a(F) + a^* (F))^{2m} : \psi_2 \rangle \\ &-\sum_{m=1}^{\ell-2} \frac{2(\ell-1)! \| F \|^{2(\ell+1-m)}}{2(m-1)! \ell-1-m! 2^{\ell-1-m}} \langle \psi_1, :(a(F) + a^* (F))^{2(m-1)} : \psi_2 \rangle
\end{split}\]
Combining the last equation with (\ref{eq:norm1}), we deduce that
\[ \begin{split}
\langle \psi_1 , : (a(F) +a^* (F)&)^{2\ell} : \psi_2 \rangle \\ = \; &\langle \psi_1, (a (F) + a^* (F))^{2\ell} \psi_2 \rangle \\ &- \sum_{m=0}^{\ell-2} \frac{2(\ell-1)! \| F \|^{2(\ell-1-m)}}{2m! \ell-1-m! 2^{\ell-1-m}} \langle \psi_1, :(a(F) + a^* (F))^{2(m+1)} : \psi_2 \rangle  \\ &-\sum_{m=0}^{\ell-1} \frac{2(\ell-1)! (4m+1) \| F \|^{2(\ell-m)}}{2m! \ell-1-m! 2^{\ell-1-m}} \langle \psi_1, :(a(F) + a^* (F))^{2m} : \psi_2 \rangle \\ &-\sum_{m=1}^{\ell-1} \frac{2(\ell-1)! \| F \|^{2(\ell+1-m)}}{2(m-1)! \ell-1-m! 2^{\ell-1-m}} \langle \psi_1, :(a(F) + a^* (F))^{2(m-1)} : \psi_2 \rangle
\end{split} \]
Changing summation variables, we find
\[ \begin{split}
\langle \psi_1 , &:(a(F)+a^* (F))^{2\ell} : \psi_2 \rangle \\ = \; &\langle \psi_1, (a (F) + a^* (F))^{2\ell} \psi_2 \rangle \\ &- \sum_{m=1}^{\ell-1} \frac{2(\ell-1)! \| F \|^{2(\ell-m)}}{2(m-1)! \ell-m! 2^{\ell-m}} \langle \psi_1, :(a(F) + a^* (F))^{2m} : \psi_2 \rangle  \\ &-\sum_{m=0}^{\ell-1} \frac{2(\ell-1)! (4m+1) \| F \|^{2(\ell-m)}}{2m! \ell-1-m! 2^{\ell-1-m}} \langle \psi_1, :(a(F) + a^* (F))^{2m} : \psi_2 \rangle \\ &-\sum_{m=0}^{\ell-2} \frac{2(\ell-1)! \| F \|^{2(\ell-m)}}{2m! \ell-2-m! 2^{\ell-2-m}} \langle \psi_1, :(a(F) + a^* (F))^{2m} : \psi_2 \rangle \\  = \; &\langle \psi_1, (a (F) + a^* (F))^{2\ell} \psi_2 \rangle \\ &- \left((\ell-1)(2\ell-3) +4\ell-3 \right) \| F \|^2 \langle \psi_1, : (a(F) +a^* (F))^{2(\ell-1)}: \psi_2 \rangle \\ &- \left(\frac{2(\ell-1)! }{(\ell-1)! 2^{\ell-1}} + \frac{2(\ell-1)!}{(\ell-2)!2^{\ell-2}} \right) \| F \|^{2\ell} \langle \psi_1, \psi_2 \rangle \\
&-\sum_{m=1}^{\ell-2} \frac{2(\ell-1)! \| F \|^{2(\ell-m)}}{2m! \ell-m! 2^{\ell-m}}\, \langle \psi_1, : (a(F) + a^* (F))^{2m} : \psi_2 \rangle \\ &\hspace{2cm} \times \left( 2m (2m-1) + 2(4m+1)(\ell-m)+4(\ell-m)(\ell-m-1) \right)
\end{split}
\]
Since \[ (\ell-1) (2\ell-3) +4\ell-3 = \ell (2\ell-1) = \frac{2\ell!}{(2(\ell-1)! 2}, \]
\[  \frac{2(\ell-1)! }{(\ell-1)! 2^{\ell-1}} + \frac{2(\ell-1)!}{(\ell-2)!2^{\ell-2}} = \frac{2\ell!}{\ell! 2^\ell} \] and
\[ 2m (2m-1) + 2(4m+1)(\ell-m)+4(\ell-m)(\ell-m-1) = 2\ell (2\ell-1) \]
we conclude that
\[ \begin{split}
\langle \psi_1 , :(a(F)+a^* (F))^{2\ell} : \psi_2 \rangle = \; &\langle \psi_1, (a (F) + a^* (F))^{2\ell} \psi_2 \rangle \\
&-\sum_{m=0}^{\ell-1} \frac{2\ell! \| F \|^{2(\ell-m)}}{2m! \ell-m! 2^{\ell-m}} \langle \psi_1, : (a(F) + a^* (F))^{2m} : \psi_2 \rangle
\end{split}
\]
which implies
\[ \langle \psi_1, (a (F) + a^* (F))^{2\ell} \psi_2 \rangle = \sum_{m=0}^\ell \frac{2\ell! \| F \|^{2(\ell-m)}}{2m! \ell-m! 2^{\ell-m}} \langle \psi_1, : (a(F) + a^* (F))^{2m} : \psi_2 \rangle \]
and concludes the proof of the proposition.
\end{proof}

{\it Acknowledgements.} We would like to thank Ji Oon Lee for useful discussions and, in particular, for proposing a new proof of Proposition \ref{prop:xi}, substantially simpler than the previous one. We are also grateful to Jan Derezi{\'n}ski for his help on questions related to Bogoliubov transformations.

\thebibliography{hhhh}


\bibitem{BGM} C. Bardos, F. Golse and N. Mauser: {\sl Weak coupling limit of the
$N$-particle Schr\"odinger equation.} Methods Appl. Anal. {\bf 7}
(2000) 275--293.

\bibitem{CLS}
L. Chen, J.O. Lee, B. Schlein: Rate of convergence towards {H}artree dynamics. To appear in {\it J. Stat. Phys.} Preprint arxiv:1103.0948.

\bibitem{CP1}
T. Chen, N. Pavlovi{\'c}: On the {C}auchy problem for focusing and defocusing {G}ross-{P}itaevskii hierarchies. {\it Discrete Contin. Dyn. Syst.}, {\bf 27} (2010), no. 2, 715--739.

\bibitem{CP2}
T. Chen, N. Pavlovi{\'c}: The quintic NLS as the mean field limit of a boson gas with three-body interactions. {\it J. of Funct. Anal.}, {\bf 260} (2011), no. 4,
959--997.

\bibitem{CE} M. Cramer, J. Eisert: A quantum central limit theorem for non-equilibrium systems: exact
local relaxation of correlated states, {\it New J. Phys.} {\bf 12}, 055020 (2009).

\bibitem{CH} C.D. Cushen, R.L. Hudson: A quantum-mechanical central limit theorem. 
{\it J. Appl. Prob.} {\bf 8} (1971), 454.

\bibitem{C}
X. Chen: Second order corrections to mean field evolution for weakly interacting bosons in the case of 3-body interactions. Preprint arxiv:1011.5997.

\bibitem{ES} 
A. Elgart, B. Schlein: Mean field dynamics of boson stars. {\it Comm. Pure Appl. Math.} {\bf 60} (2007), no. 4, 500-545.

\bibitem{ErS} 
L. Erd\H os, B. Schlein: Quantum dynamics with mean field interactions: a new approach. {\em J. Stat. Phys.} {\bf 134} (2009), no. 5, 859-870.

\bibitem{ESY1} 
L. Erd{\H{o}}s, B. Schlein, H.-T. Yau:
Derivation of the cubic nonlinear Schr\"odinger equation from
quantum dynamics of many-body systems. {\it Invent. Math.} {\bf 167} (2007), 515-614.

\bibitem{ESY2} 
L. Erd{\H{o}}s, B. Schlein, H.-T. Yau: Derivation of the Gross-Pitaevskii equation for the dynamics of Bose-Einstein condensate. Preprint arXiv:math-ph/0606017. To appear in {\it Ann. Math.}

\bibitem{ESY3} 
L. Erd{\H{o}}s, B. Schlein, H.-T. Yau: Rigorous derivation of the Gross-Pitaevskii equation with a large interaction potential. Preprint arXiv:0802.3877. To appear in {\it J. Amer. Math. Soc.}

\bibitem{EY} 
L. Erd{\H{o}}s, H.-T. Yau: Derivation
of the nonlinear {S}chr\"odinger equation from a many body {C}oulomb
system. \textit{Adv. Theor. Math. Phys.} \textbf{5} (2001), no. 6, 1169--1205.

\bibitem{GMM}
M. Grillakis, M. Machedon, D. Margetis: Second-order corrections to mean field evolution of weakly interacting bosons. I. {\it Comm. Math. Phys.} {\bf 294} (2010), no. 1, 273--301.

\bibitem{GMM2}
M. Grillakis, M. Machedon, D. Margetis: Second-order corrections to mean field evolution of weakly interacting bosons. II. Preprint arXiv:1003.4713.

\bibitem{GV} 
J. Ginibre, G. Velo: The classical
field limit of scattering theory for non-relativistic many-boson
systems. I and II. \textit{Commun. Math. Phys.} \textbf{66} (1979),
37--76, and \textbf{68} (1979), 45--68.

\bibitem{GVV} D. Goderis, A. Verbeure, P. Vets: About the mathematical theory of quantum fluctuations. In {\it Mathematical Methods in Statistical Mechanics}. Leuven Notes in Mathematical and Theoretical Physics. Series A: Mathematical Physics, {\bf 1}. Leuven University Press, Leuven (1989).

\bibitem{Ha} Hayashi, M.: Quantum estimation and the quantum central limit theorem. {\it Science And Technology} {\bf 227} (2006), 95.  

\bibitem{He} K. Hepp: The classical limit for quantum mechanical
correlation functions. \textit{Commun. Math. Phys.} \textbf{35}
(1974), 265--277.

\bibitem{HL} K. Hepp, E. H. Lieb: Phase transitions in reservoir-driven open systems with applications to lasers and superconductors. {\it Helv. Phys. Acta} {\bf 46} (1973), 573.

\bibitem{JPP} V. Jak\v{s}i{\'c}, Y. Pautrat, C.-A. Pillet: A quantum central limit theorem for sums of iid random variables. {\it J. Math. Phys.} {\bf 51} (2010), 015208.

\bibitem{Ku} G. Kuperberg: A tracial quantum central limit theorem. {\it Trans. Amer. Math. Soc.} {\bf 357} (2005), 549.

\bibitem{CL}
J. O. Lee: Rate of convergence towards semi-relativistic {H}artree dynamics. Preprint arXiv:1111.4735.

\bibitem{KSS}
K. Kirkpatrick, B. Schlein, G. Staffilani: Derivation of the two dimensional nonlinear
{S}chr\"odinger equation from many body quantum dynamics. {\it Amer. J. Math.} {\bf 133} (2011), no. 1, 91–-130.

\bibitem{KM}
S. Klainerman, M. Machedon: On the uniqueness of solutions to the {G}ross-{P}itaevskii hierarchy. {\it Comm. Math. Phys.} {\bf 279} (2008), no. 1, 169–-185.

\bibitem{MS} A. Michelangeli, B. Schlein: Dynamical Collapse of Boson Stars. Preprint arXiv:1005.3135.

\bibitem{KP} A. Knowles, P. Pickl: Mean-field dynamics: singular potentials and rate of convergence. Preprint arXiv:0907.4313.

\bibitem{P} P. Pickl: Derivation of the time dependent Gross Pitaevskii equation with external fields. Preprint arXiv:1001.4894.

\bibitem{RS}
I. Rodnianski, B. Schlein: Quantum fluctuations and rate of convergence towards mean field dynamics. {\it Comm. Math. Phys.} {\bf 291} (2009), no. 1, 31--61.

\bibitem{Sp} H. Spohn: Kinetic equations from Hamiltonian dynamics.
   \textit{Rev. Mod. Phys.} \textbf{52} (1980), no. 3, 569--615.

\end{document}